\synctex=1
\documentclass{article}

\usepackage{amsmath}
\usepackage{amsthm}
\usepackage{amssymb}
\usepackage{url}
\usepackage{mathtools}
\usepackage{stmaryrd}
\usepackage{cleveref}
\usepackage{enumerate}
\usepackage[all]{xy}
\usepackage{xspace}
\usepackage{mdframed}
\usepackage{wrapfig}
\usepackage{color}

\usepackage{authblk}

\newtheorem{theorem}{Theorem}[section]
\crefname{theorem}{Theorem}{Theorems}

\newtheorem{corollary}[theorem]{Corollary}
\newtheorem{lemma}[theorem]{Lemma}
\newtheorem{proposition}[theorem]{Proposition}
\theoremstyle{definition}
\newtheorem{example}[theorem]{Example}
\newtheorem{definition}[theorem]{Definition}
\crefname{definition}{Definition}{Definitions}
\newtheorem{question}[theorem]{Question}

\newif\ifdraft\draftfalse

\ifdraft
\usepackage{todonotes}
\else
\usepackage[disable]{todonotes}
\renewcommand{\textcolor}[2]{#2}
\renewcommand{\paragraph}[1]{\vskip.4\baselineskip\noindent$\rhd$ \textit{#1}$\;$}
\newenvironment{mdexample}{
\begin{example}}
{
\end{example}}
\fi

\newif\ifarxiv\arxivtrue

\newcommand{\sref}[1]{Section \ref{#1}}

\newcommand{\pfgt}[1]{U_{#1}}

\newcommand{\mergefunc}{structure\xspace}
\newcommand{\duplicator}{\mathbf{D}}
\newcommand{\spoiler}{\mathbf{S}}

\newcommand{\lam}[1]{\lambda #1~.~}
\newcommand{\fa}[1]{\forall #1~.~}
\newcommand{\ex}[1]{\exists #1~.~}
\newcommand{\defeq}{\coloneq}
\newcommand{\fami}[3]{\textcolor{red}{\left({#1}\right)_{#2\in #3}}}
\newcommand{\vfami}[3]{\textcolor{red}{\overrightarrow{{#1}}}}
\newcommand{\intv}
{\II}

\newcommand{\ttt}{\mathrm{true}}
\newcommand{\fff}{\mathrm{false}}

\newenvironment{choice}{\left\{\begin{array}{ll}}{\end{array}\right.}
\renewcommand{\angle}[1]{\langle #1\rangle}

\newcommand{\ol}[1]{\overline{#1}}
\newcommand{\bul}{\mathbin{\bullet}}
\newcommand{\Eq}[1][]{\textcolor{red}{\mathrm{Eq}_{#1}}}

\newcommand{\laxsl}[2]{\textcolor{red}{{#1}/\!/{#2}}}
\newcommand{\clat}{{\mathbf{CLat}_{\sqcap}}}

\newcommand{\Id}{\mathrm{Id}}

\newcommand{\PredNA}{\mathbf{Pred}}
\newcommand{\Set}{\mathbf{Set}}
\newcommand{\Pre}{\mathbf{Pre}}
\newcommand{\Top}{\mathbf{Top}}
\newcommand{\EqRel}{\mathbf{EqRel}}
\newcommand{\ERelNA}{\mathbf{ERel}}

\newcommand{\PMet}{\mathbf{PMet}}
\newcommand{\QPMet}{\mathbf{LMet}} \newcommand{\Coalg}[1]{\mathbf{Coalg}(#1)}
\newcommand{\UCoalg}[1]{U_{#1}}
\newcommand{\CAT}{\mathbf{CAT}}
\newcommand{\cod}{\mathrm{cod}}

\newcommand{\Fda}{\textcolor{red}{{F_{\mathrm{da}}}}}
\newcommand{\dFda}{\textcolor{red}{{\dot F_{\mathrm{da}}}}}

\newcommand{\decent}[1]{\dot{#1}}
\newcommand{\dto}{\mathbin{\decent\to}}
\newcommand{\dtimes}{\mathbin{\decent\times}}

\newcommand{\subdistmnd}{{\mathcal{D}}}
\newcommand{\pow}{{\mathcal{P}}}
\newcommand{\powsub}{{\pow\subdistmnd}}

\newcommand{\sigmap}{{\sigma_{\mathrm{av}}}}
\newcommand{\sigmat}{{\sigma_{\otimes}}}
\newcommand{\sigmant}{{\sigma_{\oplus}}}

\newcommand{\distpow}{\lambda^{\pow}}

\newcommand{\distFda}{\lambda^{\mathrm{da}}}

\newcommand{\coalg}[3]{\textcolor{red}{#2\colon #3\to #1(#3)}}

\newcommand{\funcdist}[2]{\textcolor{red}{{#1}_{#2}}}
\newcommand{\coalgmerge}[3]{\funcdist{#1}{#2}(#3)}

\newcommand{\Bisim}[2]{\mathbf{Bisim}(#1,#2)}

\newcommand{\cptpl}[2]
        {\angle{#1}_{#2}}

\newcommand{\Sp}[3][]{\mathrm{Sp}^{#1}(#2,#3)}
\newcommand{\Spf}[3][]{r^{#1}_{#2,#3}}
\newcommand{\SP}[1][]{\mathbf{Sp}^{#1}}

\newcommand{\abs}[2]{L^{#1,#2}}
\newcommand{\conc}[2]{R^{#1,#2}}

\newcommand{\das}{d_{\intv}^{\mathrm{as}}}

\newcommand{\pfib}[6]
{(#5,#6)\colon Id_{#4}\to(#1\colon #2\to #3)}
\newcommand{\pfibU}[6]{(#5,#6)\colon Id_{#4}\to \pfgt{#2}}
\newcommand{\ptfib}{\textcolor{black}{$\clat$-fibration with truth values\xspace}}
\newcommand{\ptfibs}{\textcolor{black}{$\clat$-fibrations with truth values\xspace}}

\newcommand{\codlift}[3]
{\textcolor{red}{[#1,#3]^{#2}}}

\newcommand{\Ncodlift}[4]
{\codlift{#1}{#2}{#3}}

\newcommand{\cgame}[1]{\mathcal{G}_{#1}} \newcommand{\compcgame}[5]{\mathcal{G}_{\vfami{#1}{#2}{#3}}^{#4, #5}}

\newcommand{\lifttimes}[1]{\textcolor{red}{\times^{#1}}}

\newcommand{\rloop}[2][-]{\save \POS!R(.7) \ar@(ru,rd)^#1{#2} \restore}
\newcommand{\lloop}[2][-]{\save \POS!L(.7) \ar@(lu,ld)_#1{#2} \restore}

\newcommand{\adjdashv}[2]{
  \ar@<#2>@{}[#1]|-*=0[@]\txt{\tiny{$\bot$}} }
\newcommand{\adjunction}[3]{
  \ar@<.3pc>[#1]^-{#2}
  \adjdashv{#1}{0pc}
\ar@<-.3pc>@{<-}[#1]_-{#3}
}

\newcommand{\pbcorner}[1][dr]{\save*!/#1+1.2pc/#1:(1,-1)@^{|-}\restore}

\newcommand{\dF}{{\dot F}}
\newcommand{\dG}{{\dot G}}
\newcommand{\dT}{{\dot T}}

\renewcommand{\O}{\Omega}
\renewcommand{\P}{\Pi}
\newcommand{\bO}{\mathbf{\Omega}}
\newcommand{\bP}{\mathbf{\Pi}}

\newcommand{\BB}{\mathbb B}
\newcommand{\CC}{\mathbb C}
\newcommand{\DD}{\mathbb D}
\newcommand{\EE}{\mathbb E}
\newcommand{\FF}{\mathbb F}

\newcommand{\II}{\mathbb I}

\newcommand{\rue}{\ar@{=}[u]}
\newcommand{\rueh}[1]{\ar@{=}[u]^-{#1}}
\newcommand{\ruem}[1]{\ar@{=}[u]_-{#1}}
\newcommand{\ruen}[1]{\ar@{=}[u]|-{#1}}

\newcommand{\ruue}{\ar@{=}[uu]}
\newcommand{\ruueh}[1]{\ar@{=}[uu]^-{#1}}
\newcommand{\ruuem}[1]{\ar@{=}[uu]_-{#1}}
\newcommand{\ruuen}[1]{\ar@{=}[uu]|-{#1}}

\newcommand{\ruuue}{\ar@{=}[uuu]}
\newcommand{\ruuueh}[1]{\ar@{=}[uuu]^-{#1}}
\newcommand{\ruuuem}[1]{\ar@{=}[uuu]_-{#1}}
\newcommand{\ruuuen}[1]{\ar@{=}[uuu]|-{#1}}

\newcommand{\rdh}[1]{\ar[d]^-{#1}}
\newcommand{\rdm}[1]{\ar[d]_-{#1}}

\newcommand{\rde}{\ar@{=}[d]}
\newcommand{\rdeh}[1]{\ar@{=}[d]^-{#1}}
\newcommand{\rdem}[1]{\ar@{=}[d]_-{#1}}
\newcommand{\rden}[1]{\ar@{=}[d]|-{#1}}

\newcommand{\rdde}{\ar@{=}[dd]}
\newcommand{\rddeh}[1]{\ar@{=}[dd]^-{#1}}
\newcommand{\rddem}[1]{\ar@{=}[dd]_-{#1}}
\newcommand{\rdden}[1]{\ar@{=}[dd]|-{#1}}

\newcommand{\rddde}{\ar@{=}[ddd]}
\newcommand{\rdddeh}[1]{\ar@{=}[ddd]^-{#1}}
\newcommand{\rdddem}[1]{\ar@{=}[ddd]_-{#1}}
\newcommand{\rddden}[1]{\ar@{=}[ddd]|-{#1}}

\newcommand{\rrh}[1]{\ar[r]^-{#1}}

\newcommand{\rre}{\ar@{=}[r]}
\newcommand{\rreh}[1]{\ar@{=}[r]^-{#1}}
\newcommand{\rrem}[1]{\ar@{=}[r]_-{#1}}
\newcommand{\rren}[1]{\ar@{=}[r]|-{#1}}

\newcommand{\rrue}{\ar@{=}[ru]}
\newcommand{\rrueh}[1]{\ar@{=}[ru]^-{#1}}
\newcommand{\rruem}[1]{\ar@{=}[ru]_-{#1}}
\newcommand{\rruen}[1]{\ar@{=}[ru]|-{#1}}

\newcommand{\rruue}{\ar@{=}[ruu]}
\newcommand{\rruueh}[1]{\ar@{=}[ruu]^-{#1}}
\newcommand{\rruuem}[1]{\ar@{=}[ruu]_-{#1}}
\newcommand{\rruuen}[1]{\ar@{=}[ruu]|-{#1}}

\newcommand{\rruuue}{\ar@{=}[ruuu]}
\newcommand{\rruuueh}[1]{\ar@{=}[ruuu]^-{#1}}
\newcommand{\rruuuem}[1]{\ar@{=}[ruuu]_-{#1}}
\newcommand{\rruuuen}[1]{\ar@{=}[ruuu]|-{#1}}

\newcommand{\rrde}{\ar@{=}[rd]}
\newcommand{\rrdeh}[1]{\ar@{=}[rd]^-{#1}}
\newcommand{\rrdem}[1]{\ar@{=}[rd]_-{#1}}
\newcommand{\rrden}[1]{\ar@{=}[rd]|-{#1}}

\newcommand{\rrdde}{\ar@{=}[rdd]}
\newcommand{\rrddeh}[1]{\ar@{=}[rdd]^-{#1}}
\newcommand{\rrddem}[1]{\ar@{=}[rdd]_-{#1}}
\newcommand{\rrdden}[1]{\ar@{=}[rdd]|-{#1}}

\newcommand{\rrddde}{\ar@{=}[rddd]}
\newcommand{\rrdddeh}[1]{\ar@{=}[rddd]^-{#1}}
\newcommand{\rrdddem}[1]{\ar@{=}[rddd]_-{#1}}
\newcommand{\rrddden}[1]{\ar@{=}[rddd]|-{#1}}

\newcommand{\rrrh}[1]{\ar[rr]^-{#1}}
\newcommand{\rrrm}[1]{\ar[rr]_-{#1}}

\newcommand{\rrre}{\ar@{=}[rr]}
\newcommand{\rrreh}[1]{\ar@{=}[rr]^-{#1}}
\newcommand{\rrrem}[1]{\ar@{=}[rr]_-{#1}}
\newcommand{\rrren}[1]{\ar@{=}[rr]|-{#1}}

\newcommand{\rrrue}{\ar@{=}[rru]}
\newcommand{\rrrueh}[1]{\ar@{=}[rru]^-{#1}}
\newcommand{\rrruem}[1]{\ar@{=}[rru]_-{#1}}
\newcommand{\rrruen}[1]{\ar@{=}[rru]|-{#1}}

\newcommand{\rrruue}{\ar@{=}[rruu]}
\newcommand{\rrruueh}[1]{\ar@{=}[rruu]^-{#1}}
\newcommand{\rrruuem}[1]{\ar@{=}[rruu]_-{#1}}
\newcommand{\rrruuen}[1]{\ar@{=}[rruu]|-{#1}}

\newcommand{\rrruuue}{\ar@{=}[rruuu]}
\newcommand{\rrruuueh}[1]{\ar@{=}[rruuu]^-{#1}}
\newcommand{\rrruuuem}[1]{\ar@{=}[rruuu]_-{#1}}
\newcommand{\rrruuuen}[1]{\ar@{=}[rruuu]|-{#1}}

\newcommand{\rrrde}{\ar@{=}[rrd]}
\newcommand{\rrrdeh}[1]{\ar@{=}[rrd]^-{#1}}
\newcommand{\rrrdem}[1]{\ar@{=}[rrd]_-{#1}}
\newcommand{\rrrden}[1]{\ar@{=}[rrd]|-{#1}}

\newcommand{\rrrdde}{\ar@{=}[rrdd]}
\newcommand{\rrrddeh}[1]{\ar@{=}[rrdd]^-{#1}}
\newcommand{\rrrddem}[1]{\ar@{=}[rrdd]_-{#1}}
\newcommand{\rrrdden}[1]{\ar@{=}[rrdd]|-{#1}}

\newcommand{\rrrddde}{\ar@{=}[rrddd]}
\newcommand{\rrrdddeh}[1]{\ar@{=}[rrddd]^-{#1}}
\newcommand{\rrrdddem}[1]{\ar@{=}[rrddd]_-{#1}}
\newcommand{\rrrddden}[1]{\ar@{=}[rrddd]|-{#1}}

\newcommand{\rrrre}{\ar@{=}[rrr]}
\newcommand{\rrrreh}[1]{\ar@{=}[rrr]^-{#1}}
\newcommand{\rrrrem}[1]{\ar@{=}[rrr]_-{#1}}
\newcommand{\rrrren}[1]{\ar@{=}[rrr]|-{#1}}

\newcommand{\rrrrue}{\ar@{=}[rrru]}
\newcommand{\rrrrueh}[1]{\ar@{=}[rrru]^-{#1}}
\newcommand{\rrrruem}[1]{\ar@{=}[rrru]_-{#1}}
\newcommand{\rrrruen}[1]{\ar@{=}[rrru]|-{#1}}

\newcommand{\rrrruue}{\ar@{=}[rrruu]}
\newcommand{\rrrruueh}[1]{\ar@{=}[rrruu]^-{#1}}
\newcommand{\rrrruuem}[1]{\ar@{=}[rrruu]_-{#1}}
\newcommand{\rrrruuen}[1]{\ar@{=}[rrruu]|-{#1}}

\newcommand{\rrrruuue}{\ar@{=}[rrruuu]}
\newcommand{\rrrruuueh}[1]{\ar@{=}[rrruuu]^-{#1}}
\newcommand{\rrrruuuem}[1]{\ar@{=}[rrruuu]_-{#1}}
\newcommand{\rrrruuuen}[1]{\ar@{=}[rrruuu]|-{#1}}

\newcommand{\rrrrde}{\ar@{=}[rrrd]}
\newcommand{\rrrrdeh}[1]{\ar@{=}[rrrd]^-{#1}}
\newcommand{\rrrrdem}[1]{\ar@{=}[rrrd]_-{#1}}
\newcommand{\rrrrden}[1]{\ar@{=}[rrrd]|-{#1}}

\newcommand{\rrrrdde}{\ar@{=}[rrrdd]}
\newcommand{\rrrrddeh}[1]{\ar@{=}[rrrdd]^-{#1}}
\newcommand{\rrrrddem}[1]{\ar@{=}[rrrdd]_-{#1}}
\newcommand{\rrrrdden}[1]{\ar@{=}[rrrdd]|-{#1}}

\newcommand{\rrrrddde}{\ar@{=}[rrrddd]}
\newcommand{\rrrrdddeh}[1]{\ar@{=}[rrrddd]^-{#1}}
\newcommand{\rrrrdddem}[1]{\ar@{=}[rrrddd]_-{#1}}
\newcommand{\rrrrddden}[1]{\ar@{=}[rrrddd]|-{#1}}

\newcommand{\rle}{\ar@{=}[l]}
\newcommand{\rleh}[1]{\ar@{=}[l]^-{#1}}
\newcommand{\rlem}[1]{\ar@{=}[l]_-{#1}}
\newcommand{\rlen}[1]{\ar@{=}[l]|-{#1}}

\newcommand{\rlue}{\ar@{=}[lu]}
\newcommand{\rlueh}[1]{\ar@{=}[lu]^-{#1}}
\newcommand{\rluem}[1]{\ar@{=}[lu]_-{#1}}
\newcommand{\rluen}[1]{\ar@{=}[lu]|-{#1}}

\newcommand{\rluue}{\ar@{=}[luu]}
\newcommand{\rluueh}[1]{\ar@{=}[luu]^-{#1}}
\newcommand{\rluuem}[1]{\ar@{=}[luu]_-{#1}}
\newcommand{\rluuen}[1]{\ar@{=}[luu]|-{#1}}

\newcommand{\rluuue}{\ar@{=}[luuu]}
\newcommand{\rluuueh}[1]{\ar@{=}[luuu]^-{#1}}
\newcommand{\rluuuem}[1]{\ar@{=}[luuu]_-{#1}}
\newcommand{\rluuuen}[1]{\ar@{=}[luuu]|-{#1}}

\newcommand{\rlde}{\ar@{=}[ld]}
\newcommand{\rldeh}[1]{\ar@{=}[ld]^-{#1}}
\newcommand{\rldem}[1]{\ar@{=}[ld]_-{#1}}
\newcommand{\rlden}[1]{\ar@{=}[ld]|-{#1}}

\newcommand{\rldde}{\ar@{=}[ldd]}
\newcommand{\rlddeh}[1]{\ar@{=}[ldd]^-{#1}}
\newcommand{\rlddem}[1]{\ar@{=}[ldd]_-{#1}}
\newcommand{\rldden}[1]{\ar@{=}[ldd]|-{#1}}

\newcommand{\rlddde}{\ar@{=}[lddd]}
\newcommand{\rldddeh}[1]{\ar@{=}[lddd]^-{#1}}
\newcommand{\rldddem}[1]{\ar@{=}[lddd]_-{#1}}
\newcommand{\rlddden}[1]{\ar@{=}[lddd]|-{#1}}

\newcommand{\rlle}{\ar@{=}[ll]}
\newcommand{\rlleh}[1]{\ar@{=}[ll]^-{#1}}
\newcommand{\rllem}[1]{\ar@{=}[ll]_-{#1}}
\newcommand{\rllen}[1]{\ar@{=}[ll]|-{#1}}

\newcommand{\rllue}{\ar@{=}[llu]}
\newcommand{\rllueh}[1]{\ar@{=}[llu]^-{#1}}
\newcommand{\rlluem}[1]{\ar@{=}[llu]_-{#1}}
\newcommand{\rlluen}[1]{\ar@{=}[llu]|-{#1}}

\newcommand{\rlluue}{\ar@{=}[lluu]}
\newcommand{\rlluueh}[1]{\ar@{=}[lluu]^-{#1}}
\newcommand{\rlluuem}[1]{\ar@{=}[lluu]_-{#1}}
\newcommand{\rlluuen}[1]{\ar@{=}[lluu]|-{#1}}

\newcommand{\rlluuue}{\ar@{=}[lluuu]}
\newcommand{\rlluuueh}[1]{\ar@{=}[lluuu]^-{#1}}
\newcommand{\rlluuuem}[1]{\ar@{=}[lluuu]_-{#1}}
\newcommand{\rlluuuen}[1]{\ar@{=}[lluuu]|-{#1}}

\newcommand{\rllde}{\ar@{=}[lld]}
\newcommand{\rlldeh}[1]{\ar@{=}[lld]^-{#1}}
\newcommand{\rlldem}[1]{\ar@{=}[lld]_-{#1}}
\newcommand{\rllden}[1]{\ar@{=}[lld]|-{#1}}

\newcommand{\rlldde}{\ar@{=}[lldd]}
\newcommand{\rllddeh}[1]{\ar@{=}[lldd]^-{#1}}
\newcommand{\rllddem}[1]{\ar@{=}[lldd]_-{#1}}
\newcommand{\rlldden}[1]{\ar@{=}[lldd]|-{#1}}

\newcommand{\rllddde}{\ar@{=}[llddd]}
\newcommand{\rlldddeh}[1]{\ar@{=}[llddd]^-{#1}}
\newcommand{\rlldddem}[1]{\ar@{=}[llddd]_-{#1}}
\newcommand{\rllddden}[1]{\ar@{=}[llddd]|-{#1}}

\newcommand{\rllle}{\ar@{=}[lll]}
\newcommand{\rllleh}[1]{\ar@{=}[lll]^-{#1}}
\newcommand{\rlllem}[1]{\ar@{=}[lll]_-{#1}}
\newcommand{\rlllen}[1]{\ar@{=}[lll]|-{#1}}

\newcommand{\rlllue}{\ar@{=}[lllu]}
\newcommand{\rlllueh}[1]{\ar@{=}[lllu]^-{#1}}
\newcommand{\rllluem}[1]{\ar@{=}[lllu]_-{#1}}
\newcommand{\rllluen}[1]{\ar@{=}[lllu]|-{#1}}

\newcommand{\rllluue}{\ar@{=}[llluu]}
\newcommand{\rllluueh}[1]{\ar@{=}[llluu]^-{#1}}
\newcommand{\rllluuem}[1]{\ar@{=}[llluu]_-{#1}}
\newcommand{\rllluuen}[1]{\ar@{=}[llluu]|-{#1}}

\newcommand{\rllluuue}{\ar@{=}[llluuu]}
\newcommand{\rllluuueh}[1]{\ar@{=}[llluuu]^-{#1}}
\newcommand{\rllluuuem}[1]{\ar@{=}[llluuu]_-{#1}}
\newcommand{\rllluuuen}[1]{\ar@{=}[llluuu]|-{#1}}

\newcommand{\rlllde}{\ar@{=}[llld]}
\newcommand{\rllldeh}[1]{\ar@{=}[llld]^-{#1}}
\newcommand{\rllldem}[1]{\ar@{=}[llld]_-{#1}}
\newcommand{\rlllden}[1]{\ar@{=}[llld]|-{#1}}

\newcommand{\rllldde}{\ar@{=}[llldd]}
\newcommand{\rlllddeh}[1]{\ar@{=}[llldd]^-{#1}}
\newcommand{\rlllddem}[1]{\ar@{=}[llldd]_-{#1}}
\newcommand{\rllldden}[1]{\ar@{=}[llldd]|-{#1}}

\newcommand{\rlllddde}{\ar@{=}[lllddd]}
\newcommand{\rllldddeh}[1]{\ar@{=}[lllddd]^-{#1}}
\newcommand{\rllldddem}[1]{\ar@{=}[lllddd]_-{#1}}
\newcommand{\rlllddden}[1]{\ar@{=}[lllddd]|-{#1}}

\title{Composing Codensity Bisimulations}

\author[1, 2]{Mayuko Kori}
\affil[1]{National Institute of Informatics}
\affil[2]{The Graduate University for Advanced Studies, SOKENDAI}

\author[1, 2]{Kazuki Watanabe}

\author[3]{\\Jurriaan Rot}
\affil[3]{Radboud University}

\author[4]{Shin-ya Katsumata}
\affil[4]{Kyoto Sangyo University}

\begin{document}

\maketitle

\begin{abstract}
  Proving compositionality of behavioral equivalence on state-based systems with respect to algebraic operations is a classical and widely studied problem. We study a categorical formulation of this problem, where operations on state-based systems modeled as coalgebras can be elegantly captured through distributive laws between functors. To prove compositionality, it then suffices to show that this distributive law lifts from sets to relations, giving an explanation of how behavioral equivalence on smaller systems can be combined to obtain behavioral equivalence on the composed system.

In this paper, we refine this approach by focusing on so-called codensity lifting of functors, which gives a very generic presentation of various notions of (bi)similarity as well as quantitative notions such as behavioral metrics on probabilistic systems. The key idea is to use codensity liftings both at the level of algebras and coalgebras, using a new generalization of the codensity lifting. The problem of lifting distributive laws then reduces to the abstract problem of constructing distributive laws between codensity liftings, for which we propose a simplified sufficient condition. Our sufficient condition instantiates to concrete proof methods for compositionality of algebraic operations on various types of state-based systems.
We instantiate our results to prove compositionality of qualitative and quantitative properties of deterministic automata. We also explore the limits of our approach by including an example of probabilistic systems, where it is unclear whether the sufficient condition holds, and instead we use our setting to give a direct proof of compositionality.

In addition, we propose a compositional variant of  Komorida et al.'s codensity
games for bisimilarities.  A novel feature of this variant is that
it can also compose game invariants, which are subsets of winning
positions. Under our sufficient condition of the liftability of
distributive laws, composed games give an alternative proof of the
preservation of bisimilarities under the composition.
\end{abstract}

\section{Introduction}

Bisimilarity and its many variants are fundamental notions of behavioral equivalence on state-based systems. A classical question is whether a given notion of equivalence is \emph{compositional} w.r.t.\ algebraic operations on these systems, such as parallel composition of concurrent systems, product constructions on automata, or other language constructs. The problem can become particularly challenging when moving from bisimilarity to quantitative notions such as behavioral metrics on probabilistic systems, where it is already non-trivial to even define what the right notion of compositionality is (see, e.g.~\cite{DBLP:conf/concur/Lago023,DBLP:journals/corr/GeblerLT16}).

To formulate and address the problem of compositionality of algebraic operations on state-based systems at a high level of generality, a natural formalism is that of \emph{coalgebra}~\cite{DBLP:books/cu/J2016}, which is parametric in the type of system, as modeled by a ``behavior'' endofunctor $F$. A key idea, originating in the seminal work of Turi and Plotkin~\cite{DBLP:conf/lics/TuriP97}, is that composition operations and in fact whole languages specifying state-based systems arise as \emph{distributive laws} between $F$ and a functor (or monad) $T$ which models the syntax of a bigger programming language. In fact, Turi and Plotkin's \emph{abstract GSOS laws} are forms of such distributive laws which guarantee compositionality of strong bisimilarity, generalizing the analogous result for specifications in the \emph{GSOS format} for labeled transition systems~\cite{DBLP:journals/jacm/BloomIM95}. Distributive laws here precisely capture algebra-coalgebra interaction needed for compositionality.

The results of Turi and Plotkin specifically apply to strong bisimilarity, and not directly to other notions such as similarity or behavioral metrics. To prove compositionality in these cases, the key observation is that the task is to define the composition operation at hand not only on state-based systems but also on relations (or, e.g., pseudometrics). This idea has been formalized in the theory of coalgebras by requiring a \emph{lifting} of the distributive law that models the composition operator to a category of relations. The lifting of the behavior functor $F$ then specifies the notion of bisimilarity at hand, and the lifting of the syntax functor $T$ specifies the way that relations on components can be combined into bisimulations on the composite system. It has been shown in~\cite{DBLP:journals/acta/BonchiPPR17} at a high level of generality that the existence of such liftings, referred to in this paper as \emph{liftability}, ensures compositionality. The generality there is offered by the use of liftings of the behavior functor in fibrations which goes back to~\cite{DBLP:journals/iandc/HermidaJ98}, and which allows us to study not only relations but also, for instance, coinductively defined metrics or unary predicates (e.g.,~\cite{DBLP:journals/ngc/KomoridaKHKHEH22,DBLP:journals/logcom/SprungerKDH21,DBLP:journals/mscs/HasuoKC18,DBLP:conf/concur/Bonchi0P18}).
But the main challenge here is to \emph{prove liftability}; this is what we address in the current paper
(Question~\ref{q:codlift}.\ref{item:lift}).

We focus on the \emph{codensity lifting} of behavior functors~\cite{DBLP:journals/logcom/SprungerKDH21},
which allows to model a wide variety of coinductive relations and predicates including (bi)similarity but also, for instance,
behavioral metrics; these are commonly referred to as \emph{codensity bisimilarity}. In fact, the codensity lifting directly generalizes the celebrated Kantorovich distance between probability
distributions used to define, for instance, metrics between probabilistic systems~\cite{DBLP:journals/tcs/DesharnaisGJP04}. The codensity lifting is parametric
in a collection of modal operators, which makes it on the one hand very flexible and on the other hand much more structured than arbitrary liftings.
Moreover, the codensity lifting allows to characterize \emph{codensity games}~\cite{DBLP:journals/ngc/KomoridaKHKHEH22}, a generalization of classical bisimilarity games~\cite{DBLP:journals/igpl/Stirling99,DBLP:conf/qest/DesharnaisLT08}
from transition systems to coalgebras, and from strong bisimilarity to a wide variety of coinductively defined relations, metrics and even topologies.

In this paper we study \emph{compositionality of codensity bisimulations} with respect to composition operators on the underlying state-based systems,
modeled as coalgebras. We model these composition operators as distributive laws between an $n$-ary product functor and the behavior functor $F$, referred to as \emph{one-step composition operators}.
We mainly tackle two problems (Question~\ref{q:codlift}) in this paper:
1) How do we lift the $n$-ary product functor
in order to capture non-trivial combination of relations,
pseudometrics etc?
2) When is a one-step composition operator liftable?

A key idea for the first problem is to use codensity liftings not only to lift the behavior functor $F$ and get our desired notion of equivalence, but to use another codensity lifting of the product functor, to explain syntactically how relations (or pseudometrics, etc) should be combined from components to the composite system. This combination can be a simple product between relations, but in many cases, such as for behavioral metrics, it needs to be more sophisticated; the flexibility offered by the codensity lifting helps to define the appropriate constructions on relations.

The second problem about liftability
then becomes that of proving the existence of a \emph{distributive law between codensity liftings}. To this end, we exhibit a sufficient condition that ensure the existence of this distributive laws, defined in terms of two properties: 1) commutation between the underlying modalities that define codensity liftings, and 2) an \emph{approximation} property, which resembles the one used to prove expressiveness of modal logics in~\cite{DBLP:conf/lics/KomoridaKKRH21}. Underlying our approach is a combination of a new generalization of the codensity lifting beyond endofunctors, to allow to lift product functors, and the adjunction-based decomposition of codensity liftings proposed in~\cite{DBLP:conf/stacs/BeoharG0MFSW24}. The adjunction-based approach precisely allows us to arrive at our sufficient condition.

We instantiate our approach to pseudometrics on deterministic automata. We also revisit the compositionality of parallel composition w.r.t.\ behavioral metrics studied in~\cite{DBLP:journals/corr/GeblerLT16}. In this case it is not clear whether our sufficient condition holds; but the framework nevertheless helps to prove liftability and thereby compositionality.

We further study a compositional variant of codensity games.  A key
observation is that our \emph{composite codensity games} consist
of positions that are tuples of positions of codensity games for
component systems.  This design of the composite game directly leads to the
compositionality of game invariants, which characterize winning
positions. Assuming our sufficient condition for the liftability
of distributive laws, we present an alternative proof of the
preservation of bisimilarities along compositions (the inequality
in Corollary~\ref{cor:presbisim}).

In summary, the contributions of this paper are
\begin{itemize}
\item a generalization of the codensity lifting beyond endofunctors
  (Section~\ref{sec:gcodlift}), which can be used in a special case to
  lift products in various ways (Section~\ref{sec:codlift_structure});
\item a sufficient condition for proving the existence of a
  distributive law between codensity liftings
  (Section~\ref{sec:liftability}), with several detailed examples
  (Section~\ref{sec:examples});
\item a composition of codensity games, which also composes game
  invariants, and an alternative proof of the preservation of
  codensity bisimilarities under our sufficient condition
  (Section~\ref{sec:game});
\end{itemize}
After the preliminaries (Section~\ref{sec:preliminaries}), we start
the paper with a more detailed overview of our approach
(Section~\ref{sec:coalg_bisim}).

\paragraph{Related work} There is a wide range of results in the
process algebra literature on proving compositionality and on
rule formats that guarantee it,
usually focused on transition systems (see,
e.g.,~\cite{DBLP:journals/tcs/MousaviRG07} for an overview). A full
account is beyond the scope of this paper. We focus instead on
generality in the type of models and the type of coinductive
predicate. Concerning general coalgebraic frameworks for compositionality,
we have already mentioned Turi and Plotkin's
abstract GSOS format; the main innovation in the current paper is that
we go beyond bisimilarity by employing the codensity lifting.

In~\cite{DBLP:journals/acta/BonchiPPR17}, it is shown that liftings of
distributive laws to fibrations yield so-called \emph{compatibility},
a property that ensures soundness of up-to techniques, and which
implies compositionality. On the one hand, if one uses the so-called
canonical relation lifting then all distributive laws lift, but this
only concerns strong bisimilarity; on the other hand,
in~\cite{DBLP:journals/acta/BonchiPPR17} examples beyond bisimilarity
are studied but liftability there is proven on an ad-hoc basis. In the
current paper we identify the codensity lifting as a sweet spot
between the (restricted) canonical relation lifting and abstract,
unrestricted lifting of functors, and focus instead on conditions that
allow to prove liftability.

Many bisimilarity notions are known to be characterized by winning
positions of certain safety games, including bisimilarity on Kripke
frames~\cite{DBLP:journals/igpl/Stirling99}, probabilistic
bisimulation~\cite{DBLP:conf/icalp/FijalkowKP17,DBLP:conf/qest/DesharnaisLT08},
and bisimulation metric~\cite{DBLP:conf/concur/KonigM18}.  There are
several coalgebraic
frameworks~\cite{DBLP:conf/concur/KonigM18,DBLP:conf/lics/FordMSB022}
that captures such a relationship between games and bisimilarities.
In this paper, we focus on codensity
games~\cite{DBLP:journals/ngc/KomoridaKHKHEH22} that are naturally
obtained by codensity liftings and coalgebras.

Codensity liftings are first introduced to lift monads across
fibrations~\cite{DBLP:conf/calco/KatsumataS15}. They subsume the
Kantorivich metric on the Giry
monad. In~\cite{DBLP:conf/cmcs/SprungerKDH18}, they are extended to
lift endofunctors across $\clat$-fibrations over $\Set$, and are shown
to be a generalization of Baldan et al.'s Kantorovich
lifting~\cite{DBLP:conf/fsttcs/BaldanBKK14}. Prior to these papers,
the monadic property of the Kantorovich metric is studied in an
unpublished manuscript by van Breugel \cite{franck}.  Codensity
liftings for endofunctors can be done in other base categories (such
as $\mathbf{Meas}$ and $\mathbf{Vect}$) without modification; see
\cite{DBLP:journals/ngc/KomoridaKHKHEH22}.  In
\cite{DBLP:conf/fossacs/GoncharovHNSW23}, the Kantorovich liftings are
studied in the context of Lawvere metric space, and are related with
other notions of liftings, such as lax extensions and (monotone)
predicate liftings.

\section{Preliminaries}\label{sec:preliminaries}

For a mathematical entity $x$ equipped with the notion of product
$(\times)$, by $x^N$ we mean the $(\times)$-product of $N$-many copies
of $x$. For instance, when $\CC$ is a category, $\CC^N$ denotes the
product category of $N$-many copies of $\CC$.
By abuse of notation, we use the same letter $N$ for the set $\{1, \dots, N\}$
and write $i \in N$ for $i \in \{1, \dots, N\}$.

For two natural transformations $\alpha\colon F\to G,\beta\colon G\to H$, their
vertical composition is denoted by $\beta\bul\alpha$, which is the
componentwise composition of $\beta$ and $\alpha$.
For a functor $p\colon \EE\to\BB$ and an object $X\in \BB$, the {\em fiber
  category} over $X$, denoted by $\EE_X$, is the category whose objects are
$P\in\EE$ such that $pP=X$, and whose morphisms are $f$ such that
$pf=id_X$.

\subsection{$\clat$-fibration}\label{sec:clatfib}

A \emph{$\clat$-fibration} is a posetal fibration $p\colon \EE\to\BB$ such
that each fiber $\mathbb{E}_X$ is a complete lattice and each pullback
$f^*\colon \mathbb{E}_Y\rightarrow \mathbb{E}_X$ preserves all meets.
The order relation of a fiber $\mathbb{E}_X$ is denoted by
$\sqsubseteq$, the meet and the join by $\bigsqcap,\bigsqcup$, and the
empty meet and join by $\top, \bot$. We remark that $\clat$-fibrations
are a special case of {\em topological functors} \cite{herr74:topo}, where each
fiber is a small partial order.

Examples of $\clat$-fibrations are forgetful functors from the
following categories into $\Set$: 1) $\Pre$, the category of preorders
and monotone functions, 2) $\EqRel$, the category of equivalence
relations and relation-respecting functions, 3) $\Top$, the category
of topological spaces and continuous functions, 4) $\PMet$, the
category of pseudometric spaces and non-expansive functions,
5) $\ERelNA$, the category of sets with endorelations
and relation-respecting functions.
The forgetful functors are denoted by $\pfgt{\Pre}$ etc.

Every $\clat$-fibration $p\colon \EE\to\BB$ is faithful, i.e., for any
$\EE$-object $P,Q$ and $\BB$-morphism $f\colon pP\to pQ$, there is at most
one $\EE$-morphism $\dot f$ such that $p\dot f=f$.  When such an $\EE$-morphism
exists, we write $f\colon P\dto Q$. For instance, when $p$ is the
forgetful functor from $\Top$, the notation $f\colon (X,O_X)\dto (Y,O_Y)$
for a function $f\colon X\to Y$ is equivalent to the statement: ``$f$ is
continuous with respect to topologies $O_X,O_Y$''.  From the
fiberedness, $f\colon P\dto Q$ is equivalent to the inequality $P\sqsubseteq f^*Q$.

\subsection{Liftings}\label{sec:liftings}

We extensively use the concept of {\em lifting} of various categorical
structures in this paper.  Let $p\colon \EE\to\BB$ and $q\colon \FF\to\CC$ be
functors.  A {\em lifting} of a functor $F\colon \BB\to\CC$ along $p,q$ (or
simply $p$ when $p=q$) is a functor $\dF\colon \EE\to\FF$ such that
$q\circ \dF=F\circ p$. Similarly, for liftings $\dot F,\dot G$ of
$F,G\colon \BB\to\CC$ along $p,q$ and a natural transformation
$\alpha\colon F\Rightarrow G$, a {\em lifting} of $\alpha$ with respect to
$\dot F,\dot G$ is a natural transformation
$\dot\alpha\colon \dot F\Rightarrow\dot G$ such that
$q\circ \dot\alpha=\alpha\circ p$.

To extend the concept of lifting to other categorical structures, it
is convenient to set-up the {\em arrow 2-category $\CAT^\to$} given by
the following data: a 0-cell is a functor $p\colon \EE\to\BB$, a 1-cell from
$p\colon \EE\to\BB$ to $q\colon \FF\to\CC$ is a pair
$(F\colon \BB\to\CC,\dot F\colon \EE\to\FF)$ of functors such that $\dot F$ is a
lifting of $F$ along $p,q$, and a 2-cell from $(F,\dot F)\colon p\to q$ to
$(G,\dot G)$ is a pair of natural transformations
$(\alpha,\dot\alpha)$ such that $\dot\alpha$ is a lifting of $\alpha$
with respect to $\dot F,\dot G$. This 2-category $\CAT^\to$ has
2-products: the terminal 0-cell is $id_1$, and the binary product of
$p_1,p_2$ is the product functor $p_1\times p_2$ (we only use finite
ones in this paper). The evident forgetful 2-functor is denoted by
$\cod\colon \CAT^\to\to\CAT$. Any categorical structure expressible within
this 2-category corresponds to a pair of the categorical structure and
its lifting along functors (and vice versa).  For instance, a monad on
$p\colon \EE\to\BB$ in $\CAT^\to$ is a pair of a monad on $\BB$ together
with its lifting along $p$ (as a monad).

For the theory of coalgebraic bisimulation in
\sref{sec:coalgebraicbisim}, we often focus on the full-sub 2-category
$\CAT^{\clat}$ of $\CAT^\to$ obtained by restricting 0-cells to
$\clat$-fibrations.  We say that a 1-cell $(F,\dF)\colon p\to q$ in
$\CAT^\clat$ is {\em fibered} if $\dF$ preserves Cartesian morphisms,
or equivalently $(Ff)^*(\dF P) = \dF(f^*P)$ holds for any $f,P$.

The action of this 2-functor on hom-categories is
denoted by the functor
$\cod_{p,q}\colon \CAT^\to(p,q)\to\CAT(\cod(p),\cod(q))$. This is a
$\clat$-fibration, hence faithful. Thus, for a natural transformation
$\alpha\colon F\Rightarrow G$ and liftings $\dF, \dG$ of $F,G$ respectively,
there is at most one lifting $\dot\alpha\colon  \dF \Rightarrow \dG$ of
$\alpha$. When it exists, we say that $\alpha$ is {\em liftable} with
respect to $\dF,\dG$.
\begin{align*}
  &\CAT^\clat(p,q)((F,\dF),(G,\dG))\\
  & \cong
  \{\alpha\colon F\Rightarrow G~|~\fa{P\in\mathrm{dom}(p)}\alpha_{pP}\colon \dF P\dto\dG P\}.
\end{align*}
If $\alpha_{pP}\colon \dF P\dto\dG P$ holds for $P$ in a subclass $C$ of
objects in $\EE$, we say that $\alpha$ is liftable {\em on} $C$.

\section{Overview} \label{sec:coalg_bisim}

The theory of {\em coalgebras} provides a categorical framework for
expressing various transition systems in a unified manner.  Let $\CC$
be a category and $F\colon \CC\to\CC$ be an endofunctor (called {\em
  behavior functor}).  An {\em $F$-coalgebra} is a pair of an object
$X$ (called {\em carrier}) and a morphism $\coalg FcX$. A morphism
from $\coalg FcX$ to $\coalg FdY$ is a morphism $h\colon X\to Y$ such that
$Fh\circ c=d\circ h$.
\begin{mdexample}\label{ex:Kripke_coalg}
  We write $\pow$ for the covariant powerset functor on $\Set$.  A
  $\pow$-coalgebra $\coalg \pow cX$ bijectively corresponds to a
  binary relation on $X$, that is, a {\em Kripke frame}.
\end{mdexample}
\begin{mdexample}\label{ex:detaut_coalg}
  Fix an alphabet $\Sigma$. We define the endofunctor $\Fda$ on $\Set$
  by $\Fda\defeq 2\times(-)^\Sigma$. An $\Fda$-coalgebra
  $\coalg \Fda c X$ bijectively corresponds to a deterministic
  automata over alphabet $\Sigma$.
\end{mdexample}

The category of $F$-coalgebras and morphisms
between them is denoted by $\Coalg F$, and the evident forgetful
functor to $\CC$ is denoted by $\UCoalg F$. We note that
$(\Coalg F)^N\cong\Coalg{F^N}$. When $(F,\dF)\colon p\to p$ is a 1-cell in
$\CAT^\to$, the application of $p$ to $\dF$-coalgebras extends to a
functor $\Coalg p\colon \Coalg \dF\to\Coalg{F}$.

\subsection{Coalgebraic Bisimulation}

One of the important concepts in state transition systems is the
identification of two states that behave in the same way.
A classical notion of behavioral equivalence is {\em bisimilarity},
which has been
extensively studied in process algebra, coalgebra theory and modal logic.
Recently, the concept of coalgebraic bisimulation,
 initially formulated as a binary
relation, is expressed using other spatial structures, such as {\em
  pseudometrics} \cite{DBLP:journals/lmcs/BaldanBKK18} and {\em topologies} \cite{DBLP:conf/cmcs/SprungerKDH18}. A
uniform treatment of these spatial representations of bisimulation
have been developed using the framework of $\clat$-fibrations
\cite{DBLP:conf/cmcs/SprungerKDH18}. We employ the formulation of bisimulation as {\em
  coalgebras} in suitable fibrations. This formulation goes back to
the seminal work by Hermida and Jacobs \cite{DBLP:journals/iandc/HermidaJ98}, and the key
ingredient is the {\em lifting} of behavior functors along
fibrations.  We recall their theory here.

Let $p\colon \EE\to\BB$ be a $\clat$-fibration and $\coalg FcX$ be an
$F$-coalgebra. Hermida and Jacobs' theory first chooses a {\em lifting
  $\dF$} of $F$ along $p$; the main body of the definition of
bisimulation is packed into it. Then we formulate an {\em
  $\dF$-bisimulation} on a coalgebra $\coalg FcX$ to be a
$\dF$-coalgebra $\coalg{\dF}{\dot c}{P}$ such that $p(\dot c)=c$
(hence $p(P)=X$). The morphism $\dot c$ witnesses that $P$ is
respected by the underlying transition system $p(c)$.
An $\dF$-bisimulation
bijectively corresponds to $P\in\EE_X$ satisfying
$P\sqsubseteq c^*\circ\dF(P)$. We thus define the set of bisimulations on $c$ by
\begin{align*}
  \Bisim{\dF}{c}
  &\defeq
\{P\in\EE_X~|~P\sqsubseteq c^*\circ\dF(P)\},
\end{align*}
and impose a partial order on it by restricting the one on $\EE_X$.
Since $\EE_X$ is a complete lattice, it contains the {\em greatest}
postfixpoint $\nu(c^*\circ\dF)$ corresponding to {\em
  bisimilarity}.
\begin{mdexample}\label{ex:Kripke_bisim}
  We express the definition of bisimulation for $\pow$-coalgebras. We
  take the $\clat$-fibration $\pfgt{\EqRel}$ and take the following
  lifting $\dot \pow$ of $\pow$ along $\pfgt{\EqRel}$:
  \begin{align*}
    \dot \pow(X,P)\defeq
    (\pow(X),\{(U,V)~|~&\fa{x\in U}\ex{y\in V}(x,y)\in P\wedge\\
                       &\fa{x\in V}\ex{y\in U}(x,y)\in P\}).
  \end{align*}
  Then $P$ is a $\dot \pow$-bisimulation on $\coalg \pow cX$ if and
  only if it satisfies the standard bisimulation condition:
  \begin{align*}
    \fa{(x,y)\in P}&\fa{x'\in c(x)}\ex{y'\in c(y)}(x',y')\in P\wedge\\
                   &\fa{y'\in c(y)}\ex{x'\in c(x)}(x',y')\in P.
  \end{align*}
\end{mdexample}
\begin{mdexample}
  We represent language-equivalent states in deterministic automata by
  the $\clat$-fibration $\pfgt{\EqRel}$ and the following lifting
  $\dFda$ of $\Fda$ along $p$:
  \begin{displaymath}
    \dFda(P)\defeq
    \{((t_1, \rho_1), (t_2, \rho_2)) \mid
    t_1 = t_2\wedge
    \fa{a \in \Sigma}
    (\rho_1(a), \rho_2(a)) \in P \}.
  \end{displaymath}
  Let $P$ be an $\dFda$-bisimulation on $\coalg\Fda cX$. Then any pair
  of states $(x,y)\in P$ are language equivalent, that is, the set of
  words accepted from $x$ coincides with that of $y$.  Moreover,
  $\dFda$-bisimilarity coincides with the set of all
  language-equivalent state pairs.
\end{mdexample}

\subsection{Composing $F$-Coalgebras} \label{sec:comp_f_coalg}

Let $F\colon \CC\to\CC$ be an endofunctor over a category $\CC$. We model an
$N$-ary composition operation for $F$-coalgebras by a pair of
\begin{enumerate}
\item a functor $T\colon \CC^N\to\CC$ (called {\em \mergefunc functor})
  describing how the operation transforms coalgebra carriers, and
\item a natural transformation
  $\lambda\colon T\circ F^N \Rightarrow F\circ T$, describing the
  composition of a transition system by merging one-step transition of
  its arguments.
\end{enumerate}
The second component induces the lifting
$\funcdist T\lambda\colon (\Coalg F)^N\to \Coalg F$ of $T$ along
$(\UCoalg F)^N,\UCoalg F$ given by
\begin{displaymath}
  \funcdist T\lambda\defeq\lambda\circ T(c_1,\cdots,c_N).
\end{displaymath}
We call the pair $(T,\lambda)$ an {\em $N$-ary one-step composition
  operation} for $F$-coalgebras.
\begin{mdexample}\label{ex:Kripke_composition}
  We introduce a binary one-step composition operation
  $(\times,\lambda)$ on $\pow$-coalgebras. It takes the binary product
  of coalgebra carriers. The distributive law
  $\distpow_{X,Y}\colon \pow X\times\pow Y\to\pow(X\times Y)$ is given by
  $ \distpow_{X,Y}(A,B)\defeq A\times B  $.
\end{mdexample}
\begin{mdexample}\label{ex:detaut_composition}
  We introduce a binary one-step composition operation
  $(\times,\distFda)$ on $\Fda$-coalgebras. It takes the binary product
  of coalgebra carriers.  The distributive law
  $\distFda_{X,Y} \colon \Fda X\times \Fda Y\to \Fda (X\times Y)$ is
  given by
  \begin{displaymath}
    \distFda_{X, Y}((t_1, \rho_1), (t_2, \rho_2))
    \defeq (t_1 \land t_2,~ \lam{a}(\rho_1(a), \rho_2(a))).
  \end{displaymath}
\end{mdexample}

\subsection{Composing Coalgebraic Bismulations}
\label{sec:coalgebraicbisim}

The central theme of this paper is to deepen the understanding of the
interaction between composition operations on coalgebras and the
concept of bisimulation. The question we address is:
\begin{question}\label{q:theme}
  \begin{mdframed}
    Let $(T,\lambda)$ be an $N$-ary one-step composition operation for
    $F$-coalgebras.  How can we compose bisimulations
    $P_1,\cdots,P_N$ on $F$-coalgebras $c_1,\cdots,c_N$ into a
    bisimulation on the composed $F$-coalgebra
    $\coalgmerge T\lambda{c_1,\cdots,c_N}$?
  \end{mdframed}
\end{question}
In other words, the problem is about extending the composition
operation on $F$-coalgebras to bisimulations. Using Hermida and
Jacobs' coalgebraic formulation of bisimulation, we rephrase Question
\ref{q:theme} as follows:
\begin{question}
  \begin{mdframed}
    Let $(F,\dot F)\colon p\to p$ be a 1-cell in $\CAT^\clat$.  How can we
    lift $\funcdist T\lambda$ along
    $(\Coalg p)^N,\Coalg p$?
    \begin{equation}
      \vcenter{\xymatrix{
          (\Coalg\dF)^N \rdm{(\Coalg p)^N} \ar@{.>}[r]^-{?} & \Coalg\dF \rdh{\Coalg p} \\
          (\Coalg F)^N \rrh{\funcdist T\lambda} & \Coalg F
        }}\label{eq:coalglift}
    \end{equation}
  \end{mdframed}
\end{question}

Thanks to the coalgebraic formulation of bisimulation, we notice that
a lifting of the distributive law with respect to liftings of relevant
functors yields the operation \eqref{eq:coalglift} on
$\dF$-bisimulations that is compatible with the operation on
$F$-coalgebras. We formally state this principle that guides this
research as follows. This itself is not new, and a limited version is
seen as \cite[Proposition 6.3]{DBLP:journals/acta/BonchiPPR17}.
\begin{theorem}[composition of bisimulations]
  \label{th:liftop}
  Let $(F,\dF)\colon p\to p$ and $(T,\dT)\colon p^N\to p$ be 1-cells in $\CAT^\clat$
  and
  $(\lambda,\dot\lambda)\colon (T,\dT)\circ(F^N,\dF^N)\to
  (F,\dF)\circ(T,\dT)$ be a 2-cell. Then
  $(T_\lambda,\dot T_{\dot\lambda})\colon (\Coalg{p})^N\to \Coalg{p}$ is a
  1-cell in $\CAT^\to$.\qed
\end{theorem}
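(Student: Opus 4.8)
The plan is to show that the data assembled in the hypotheses — the liftings $\dF$, $\dT$, and the lifted distributive law $\dot\lambda$ — combine to define the dotted functor in diagram~\eqref{eq:coalglift}, and that this functor is a genuine lifting of $\funcdist{T}{\lambda}$ along $(\Coalg p)^N, \Coalg p$. Concretely, given $\dF$-bisimulations $\dot c_i \colon P_i \dto \dF P_i$ over coalgebras $c_i \colon X_i \to F X_i$ for $i \in N$, I must produce a $\dF$-bisimulation on the composed coalgebra $\coalgmerge{T}{\lambda}{c_1,\dots,c_N}$ whose carrier is $\dT(P_1,\dots,P_N)$.

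First I would spell out what needs to be checked. Recall that a $\dF$-bisimulation on a coalgebra $c$ is equivalently an element $P \in \EE_{X}$ with $P \sqsubseteq c^* \circ \dF(P)$. So, writing $c = \funcdist{T}{\lambda}(c_1,\dots,c_N) = \lambda_{X_1,\dots,X_N} \circ T(c_1,\dots,c_N)$ on the carrier $T(X_1,\dots,X_N)$, the goal is to verify
\begin{displaymath}
  \dT(P_1,\dots,P_N) \sqsubseteq c^* \circ \dF\bigl(\dT(P_1,\dots,P_N)\bigr).
\end{displaymath}
The building blocks are exactly the fibered inequalities furnished by the hypotheses. Each $\dot c_i \colon P_i \dto \dF P_i$ gives, by faithfulness of the $\clat$-fibration, the inequality $P_i \sqsubseteq c_i^* \circ \dF(P_i)$. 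Applying the functor $\dT$, which lifts $T$, and using that $\dT$ is monotone on each fiber, I get $\dT(P_1,\dots,P_N) \sqsubseteq \dT\bigl(c_1^* \dF P_1,\dots,c_N^* \dF P_N\bigr)$. The 2-cell $(\lambda,\dot\lambda)$ then supplies the fibered morphism $\dot\lambda \colon \dT \circ \dF^N \dto \dF \circ \dT$, which is precisely the ingredient that lets me slot $T(c_1,\dots,c_N)$ followed by $\lambda$ into place and land in $c^* \circ \dF(\dT(\dots))$. Chaining these and re-expressing everything through the composite pullback $c^*$ yields the desired inequality; equivalently, I can package it more slickly by observing that Theorem~\ref{th:liftop} is just the statement that $(\funcdist{T}{\lambda}, \funcdist{\dT}{\dot\lambda})$ is obtained by applying the 2-functorial coalgebra construction to the given 1-cells and 2-cell.

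In fact the cleanest route is the latter, purely 2-categorical one: the assignment sending a 1-cell $(F,\dF)$ to $(\UCoalg F, \UCoalg{\dF})$, i.e.\ the forgetful functors out of coalgebra categories, together with its action on 2-cells, is functorial, and the composition operation $\funcdist{(-)}{(-)}$ of Section~\ref{sec:comp_f_coalg} is built uniformly from a \mergefunc functor and a (lifted) distributive law. Since a 2-cell between liftings in $\CAT^\clat$ lifts to a 2-cell between the induced coalgebra liftings, the pair $(T_\lambda, \dot T_{\dot\lambda})$ inherits the property of being a 1-cell, namely $\Coalg p \circ \dot T_{\dot\lambda} = T_\lambda \circ (\Coalg p)^N$; this commutation is exactly that $\dot T_{\dot\lambda}$ is a lifting, so the 1-cell condition holds by construction. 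The main obstacle is bookkeeping rather than conceptual: one must be careful that the various pullback functors $c_i^*$ and $c^*$ compose correctly across the naturality square of $\lambda$, and that the monotonicity and meet-preservation properties of a $\clat$-fibration are used in the right spots to turn the family of component inequalities into the single inequality witnessing the composite bisimulation. Since faithfulness guarantees uniqueness of lifts, once the underlying inequality is established there is nothing further to check, and the result follows.
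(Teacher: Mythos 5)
Your proposal is correct and matches the paper's (implicit) argument: the paper gives no separate proof because $\dot T_{\dot\lambda}\defeq\dot\lambda\circ\dT(\dot c_1,\ldots,\dot c_N)$ is literally the construction of $\funcdist{T}{\lambda}$ applied to the lifted data, so $\Coalg p\circ\dot T_{\dot\lambda}=T_\lambda\circ(\Coalg p)^N$ holds by construction since $\dT$ lifts $T$ and $\dot\lambda$ lifts $\lambda$ — exactly your ``cleanest route.'' Your fiberwise inequality chain is a correct (if more roundabout) reformulation of the same fact; note only that the key connecting step there is the lax commutation $\dT(c_1^*\dF P_1,\ldots,c_N^*\dF P_N)\sqsubseteq T(c_1,\ldots,c_N)^*\dT(\dF P_1,\ldots,\dF P_N)$, which holds for any lifting, and that meet-preservation is never actually needed.
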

A restricted version of this theorem is also available.  Let $C$ be a
subclass of objects in $\EE$. If $\lambda$ is liftable on $C^N$, then
we obtain a 1-cell
$(T_\lambda,\dot T_{\dot\lambda})\colon (\Coalg p|_C)^N\to\Coalg p$; here
$\Coalg p|_C$ is the restriction of $\Coalg p$ to the category of
$\dF$-coalgebras over objects in $C$. This restriction will be used
after Proposition \ref{prop:local_comp_metric}.
\begin{corollary}[preservation of bisimilarities]\label{cor:presbisim}
  In the setting of Theorem \ref{th:liftop}, for any $F$-coalgebra
  $\coalg F{c_i}{X_i}$ ($i=1\cdots N$),
  \begin{displaymath}
    \pushQED{\qed}
    \dT(\nu(c_1^* \circ \dF),\cdots,\nu(c_N^* \circ \dF))\sqsubseteq \nu(T_\lambda(c_1,\cdots,c_N)^* \circ \dF).
  \qedhere
  \popQED
  \end{displaymath}
\end{corollary}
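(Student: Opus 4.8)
The plan is to prove the inequality by showing that $\dT$ applied to the component bisimilarities is itself a bisimulation on the composed coalgebra; maximality of $\nu$ then does the rest. Write $P_i \defeq \nu(c_i^* \circ \dF) \in \EE_{X_i}$ for the bisimilarity on $c_i$, and $d \defeq T_\lambda(c_1, \dots, c_N)$ for the composed $F$-coalgebra, whose carrier is $T(X_1, \dots, X_N)$. Since $\nu(d^* \circ \dF)$ is by definition the greatest element of $\Bisim{\dF}{d} = \{Q \mid Q \sqsubseteq d^* \circ \dF(Q)\}$, it suffices to establish the membership $\dT(P_1, \dots, P_N) \in \Bisim{\dF}{d}$, i.e.\ that $\dT(P_1, \dots, P_N)$ is a postfixpoint of $d^* \circ \dF$.

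First I would note that each $P_i$ genuinely is a bisimulation: as the greatest postfixpoint of the monotone endomap $c_i^* \circ \dF$ on the complete lattice $\EE_{X_i}$, it is a fixpoint by Knaster--Tarski, so in particular $P_i \sqsubseteq c_i^* \circ \dF(P_i)$, hence $P_i \in \Bisim{\dF}{c_i}$. Under the coalgebraic formulation of bisimulation this means $P_i$ carries a $\dF$-coalgebra structure $\dot c_i \colon P_i \dto \dF(P_i)$ lying over $c_i$, i.e.\ an object of $\Coalg{\dF}$ sent by $\Coalg{p}$ to $c_i$. The key step is then to transport these structures along the lifted \mergefunc functor. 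By Theorem~\ref{th:liftop}, $(T_\lambda, \dot T_{\dot\lambda})$ is a 1-cell in $\CAT^\to$, so $\dot T_{\dot\lambda} \colon (\Coalg{\dF})^N \to \Coalg{\dF}$ is a lifting of $T_\lambda$ along $(\Coalg p)^N, \Coalg p$. Applying $\dot T_{\dot\lambda}$ to the tuple $(\dot c_1, \dots, \dot c_N)$ yields a $\dF$-coalgebra $\dot\lambda \circ \dT(\dot c_1, \dots, \dot c_N)$ whose carrier is $\dT(P_1, \dots, P_N)$; and because $\dot T_{\dot\lambda}$ lifts $T_\lambda$, applying $\Coalg p$ to it returns $T_\lambda(c_1, \dots, c_N) = d$. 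Thus this coalgebra lies over $d$, which is exactly to say $\dT(P_1, \dots, P_N) \in \Bisim{\dF}{d}$.

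Finally, since $\nu(d^* \circ \dF)$ is the greatest element of $\Bisim{\dF}{d}$, the membership just established gives $\dT(P_1, \dots, P_N) \sqsubseteq \nu(d^* \circ \dF)$, which is the claimed inequality. I expect no genuinely hard step: the entire argument is the functoriality of the coalgebra construction already packaged in Theorem~\ref{th:liftop}, and the only real work is the bookkeeping that translates between the two equivalent views of a bisimulation on $d$ — as an object in the fiber of $\Coalg p$ over $d$, and as a postfixpoint $Q \sqsubseteq d^* \circ \dF(Q)$ in $\EE_{T(X_1,\dots,X_N)}$. A point worth flagging is that this route needs no fiberedness assumption on $(T,\dT)$; a more hands-on computation that instead unfolds $d^* = T(c_1,\dots,c_N)^* \circ \lambda^*$ and pushes the component inequalities $P_i \sqsubseteq c_i^* \circ \dF(P_i)$ through $\dT$ would additionally require $\dT$ to commute with Cartesian liftings, so I would prefer the coalgebraic argument above.
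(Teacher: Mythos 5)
Your proposal is correct and takes essentially the same route as the paper: the paper's (very terse) proof likewise observes that $\nu(c_i^*\circ\dF)$ is the carrier of the terminal object of the fiber $\Coalg{\dF}_{c_i}$, that $\dot T_{\dot\lambda}$ (from Theorem~\ref{th:liftop}) sends these coalgebras to a coalgebra in $\Coalg{\dF}_{T_\lambda(c_1,\cdots,c_N)}$ with carrier $\dT(\nu(c_1^*\circ\dF),\cdots,\nu(c_N^*\circ\dF))$, and concludes by terminality/maximality of the bisimilarity on the composite. Your write-up merely expands the same argument, spelling out the correspondence between postfixpoints and $\dF$-coalgebras over $c$ and correctly noting that no fiberedness of $(T,\dT)$ is needed.
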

The preservation holds because $\nu(c^*\circ\dF)$ is the domain
  of the terminal object of $\Coalg{\dF}_c$ for each $F$-coalgebra
  $c$, and $\dot T_{\dot\lambda}$ maps coalgebras in
  $\Coalg{\dF}_{c_i}$ ($i \in N$) to a coalgebra in
  $\Coalg{\dF}_{T_\lambda(c_1,\cdots,c_N)}$.

This theorem merely says that when all ingredients
$F,\cdots,\dot\lambda$ are available, the lifting $\dT_{\dot\lambda}$
is available. In practice, the data $F,\dF,T,\lambda$ are given by
hand when designing transition systems and bisimulations on them,
while a problematic part is defining a lifting of the \mergefunc{}
$\dT$, and make the distributive law $\lambda$ {\em liftable} with
respect to $\dT\circ(\dF)^N$ and $\dF\circ\dT$.

Regarding lifting the behavior functor $F$, recently a systematic
method called the {\em codensity lifting} has been introduced
\cite{DBLP:journals/logcom/SprungerKDH21}. This is a generalization of
Kantorovich lifting \cite{DBLP:journals/lmcs/BaldanBKK18}, a classical notion of distance between
distributions, to general $\clat$-fibrations. The advantage of the
codensity lifting is its flexibility; it has some parameters and by
varying it we obtain various liftings. Yet, it provides more structure
than assuming arbitrary liftings of $F$ and $T$.
Another important point is that {\em codensity bisimulations}, where
$\dF$ is a codensity lifting, have an interesting game-theoretic
characterization. This flexibility and relevance to bisimulation games
is attractive, so we address the problem of having the codensity
liftings and the liftable $\lambda$.  This involves two specific
questions.  One is about extending the codensity lifting technique to
lift not only behavior functors, but also \mergefunc functors $T$.
\begin{question} \label{q:codlift}
  \begin{mdframed}
    Let $(T,\lambda)$ be an $N$-ary one-step composition operation for
    $F$-coalgebras.
    \begin{enumerate}
    \item \label{item:structure} How do we non-trivially lift
      \mergefunc{} functors (such as $N$-ary product functors) in order
      to capture composition operators at the level of relations?
    \item \label{item:lift} When is $\lambda$ liftable as a
      distributive law between codensity liftings of $F$ and $T$?
    \end{enumerate}
  \end{mdframed}
\end{question}
To answer to these questions, we employ Beohar et al.'s recent
decomposition result of codensity liftings
\cite{DBLP:conf/stacs/BeoharG0MFSW24}. Their decomposition
separates fibration-specific parts in the codensity lifting from a
central part that does the actual lifting task. Upon their
decomposition, in Section \ref{sec:codlift_structure}, we first address
(1) by generalizing the codensity lifting to arbitrary functors.
Then in Section \ref{sec:liftability}, we address (2) by providing a
sufficient condition to lift $\lambda$ to a distributive law
between codensity liftings. The proof of the sufficiency takes
  advantage of the decomposition and a 2-categorical nature of the
central part in the decomposition.

\section{Generalizing the Codensity Lifting}\label{sec:gcodlift}

We recall the multiple parameter codensity lifting in
\cite{DBLP:journals/logcom/SprungerKDH21}. It lifts an endofunctor
$F\colon \BB\to\BB$ along a $\clat$-fibration $p\colon \EE\to\BB$ using three
parameters $\O,\bO,\tau$. We pack $(p,\O,\bO)$ into the following
data.
\begin{definition}
  Let $A$ be a discrete category. A {\em \ptfib} is a tuple
  $(p\colon \EE\to\BB,\O\colon A\to\BB,\bO\colon A\to\EE)$ of functors such that
  $\O=p\circ\bO$ and $p$ is a $\clat$-fibration. The notation for
  a \ptfib{} is $\pfib p \EE \BB A \O \bO$.
\end{definition}
Let $\pfib p\EE\BB A\O\bO$ be a \ptfib. The codensity lifting of a functor $F\colon \BB\to\BB$ takes as
argument a natural transformation $\tau \colon  F \circ \O \to \O$. The components $\tau_a \colon F \circ \O(a) \to \O(a)$
are referred to in coalgebraic modal logic as \emph{modalities}; $\tau$ is just an indexed collection of these,
and below we will often refer to $\tau$ itself as a modality.
The {\em codensity lifting}
of $F\colon \BB\to\BB$ with $\tau$ along $p$ is given by
\begin{displaymath}
  \textstyle
  \codlift F\bO\tau X\defeq
  \bigsqcap_{a\in A,k\in\EE(X,\bO(a))}(\tau_a\circ F(pk))^*(\bO(a)).
\end{displaymath}
It is an endofunctor over $\EE$, and is a lifting of $F$ along $p$.
When $A=1$, it is called the {\em single-parameter} codensity lifting.
$\dF$-bisimulations (resp.~$\dF$-bisimilarities) where $\dF$ is a codensity lifting
are called {\em codensity bisimulations (resp.~codensity bisimilarities)}.
\begin{mdexample}\label{ex:codenlift_pow}
  We write $2$ for the two-point set $\{\fff,\ttt\}$ and $\Eq[2]$ for
  the object $(2,\{(x,x)~|~x\in 2\})$ in $\EqRel$.
We identify it as a functor of type $1\to\EqRel$, and form a
  \ptfib{} $\pfibU {\\EqRel}{\EqRel}\Set 1 2{\Eq[2]}$. The codensity lifting of
  $\pow$ with the modality $\diamond\colon \pow(2)\to 2$ given by
  $\diamond(U)=true\iff true\in U$ coincides with the lifting
  $\dot\pow$ in Example \ref{ex:Kripke_bisim}
  \cite{DBLP:journals/ngc/KomoridaKHKHEH22}.  Therefore the standard
  bisimulations on $\pow$-coalgebras are expressible as codensity
  bisimulations.
\end{mdexample}
\begin{mdexample}\label{ex:codlift_Fda}
  We adopt the following \ptfib \,
  $\pfibU p \EqRel \Set{\Sigma\uplus\{\epsilon\}} \O {\bO}$ where
	$\O$ and $\bO$ are functors constantly returning $2$ and $\Eq[2]$, respectively.
We define the
  modality for $\Fda$ by
  \begin{displaymath}
    \tau_{a}(t,\rho)=
    \begin{choice}
      t & (a=\epsilon) \\
      \rho(a) & (a\in\Sigma).
    \end{choice}
  \end{displaymath}
  Then the codensity lifting $\codlift{\Fda }{\bO}{\tau}$ maps
  $(X,P) \in \EqRel$ to the set $\Fda(X)=2 \times X^\Sigma$ paired with the following
  equivalence relation:
  \begin{displaymath}
    \{((t_1, \rho_1), (t_2, \rho_2)) \mid t_1 = t_2\wedge
    \fa{a \in \Sigma}
    (\rho_1(a), \rho_2(a)) \in P \}.
  \end{displaymath}
  The $\codlift{\Fda }{\bO}{\tau}$-bisimilarity on $c$ identifies the
  states that accepts the same language.
\end{mdexample}
To organize the discussion, we package all the ingredients into the
following data. It specifies both abstract transition systems and a
notion of bisimulation on them.
\begin{definition}
  {\em Codensity bisimulation data} consists of a \ptfib{}
  $\pfib p\EE\BB A\O\bO$, a functor $F \colon \BB \rightarrow \BB$, and a
  natural transformation $\tau\colon  F\circ\O\to\O$.
\end{definition}

\subsection{Decomposition of Codensity Lifting}

Let $(\pfib p\EE\BB A\O\bO,F,\tau)$ be codensity bisimulation data.
Recently, Beohar et al.~\cite{DBLP:conf/stacs/BeoharG0MFSW24}
introduced a decomposition of the codensity lifting as a ``sandwich''
of a monotone function within Galois connections. We adopt their
decomposition as it is useful for analyzing the interaction between
codensity liftings.  In this paper, we describe their decomposition in
fibered category theory.  The decomposition is given as
\begin{equation}
  \codlift F\bO\tau =
  \conc p\bO\circ
  \Sp[A]F\tau\circ
  \abs p\bO,\quad
  (\abs p\bO\dashv \conc p\bO)
  \label{eq:decomp}
\end{equation}
where $\Sp[A]F\tau$ is an endofunctor over a suitable category that we
introduce below. The corresponding equation in Beohar et al. is in
\cite[Section 4.4]{DBLP:conf/stacs/BeoharG0MFSW24}. Their presentation is based on indexed
lattices, and left and right adjoints are swapped.\footnote{ Our
  choice of left and right adjoint is consistent with the codensity
  lifting of {\em monads} (using Eilenberg-Moore algebra) \cite{DBLP:journals/lmcs/KatsumataSU18}.
}

The description of the components of \eqref{eq:decomp} is in order.
First, we observe that the product
$[A,\pfgt\PredNA^{op}] \colon [A,\PredNA^{op}] \to [A,\Set^{op}]$ of $A$-fold
copies of the opposite of the subobject fibration
$\pfgt\PredNA\colon \PredNA\to\Set$ is a $\clat$-fibration.  Now the
adjunction $\abs p\bO\dashv \conc p\bO$ in \eqref{eq:decomp} is given
between $\EE$ and the category obtained by the change-of-base of the
fibration $[A,\pfgt\PredNA^{op}]$ along the functor
$H(b)\defeq(\BB(b,\O(-)))^{op}$:
\begin{displaymath}
  \xymatrix@C=1cm@R=1.5em{
    \EE \adjunction{rr}{\abs p\bO}{\conc p\bO} \ar@/_.8pc/[rrd]_-p & & \Sp[A]\BB\O \pbcorner[ul] \rrrh{\ol{H}} \rdm{\Spf[A]\BB\O} & & [A,\PredNA^{op}] \rdh{[A,\pfgt\PredNA^{op}]} \\
    & & \BB \rrrm{H} & & [A,\Set^{op}] \\
  }
\end{displaymath}
Concretely, an object of $\Sp[A]\BB\O$ is a pair of an object
$X\in\BB$ and a mapping $S\colon A\to\Set$ such that for any $a\in A$,
$S(a)\subseteq \BB(X,\O(a))$. A morphism from $(X,S)$ to $(Y,T)$ is a
morphism $f\colon X\to Y$ such that for any $a\in A$ and $k\in T(a)$,
$k\circ f\in S(a)$.\footnote{This category is similar to the one of
  topological spaces. Put $A=1$. The mapping $S$ specifies a subset of
  $\BB(X,\O)$, which may be seen as a topology on $X$.  An element in
  $S$ corresponds to an open subset, and the condition on morphism
  corresponds to the backword preservation of open subsets. For
  this reason we name this category Sp.} The evident projection
functor $\Spf[A]\BB\O\colon \Sp[A]\BB\O\to\BB$ is a $\clat$-fibration,
because $s$ is so.

Beohar et al. showed that the following assignments form a left
adjoint functor~\cite[Theorem 7]{DBLP:conf/stacs/BeoharG0MFSW24}:
\begin{align*}
  \abs p\bO(P)&\defeq
                (p(P),\lam{a}p(\EE(P,\bO(a)))),&
                                                     \abs p\bO(f)\defeq f,
\end{align*}
and the object part of its right adjoint satisfies
\begin{displaymath}
  \textstyle
  \conc p\bO(X,S)
  =\bigsqcap_{a\in A,k\in S(a)}k^*(\bO(a)).
\end{displaymath}

The middle part $\Sp[A]\BB\O$ in \eqref{eq:decomp} is defined as a lifting
of an endofunctor $F\colon \BB\to\BB$ along $\Spf[A]\BB\O$ using a modality
$\tau\colon F\circ\O\to\O$.  It is an extension of the predicate lifting
$(P\colon X\to\Omega)\mapsto (\tau\circ F(P)\colon F(X)\to\Omega)$, which is
commonly used in the theory of coalgebras, to sets of morphisms into
$\Omega$.
\begin{align}
  \Sp[A]F\tau(X,S)&\defeq(FX,\lam{a}\{\tau_a\circ Fk~|~k\in S(a)\}) \label{eq:spfun1}\\
  \Sp[A]F\tau(f)&\defeq Ff. \label{eq:spfun2}
\end{align}
By composing the above concrete characterizations of $\abs p\bO(P)$,
$\conc p\bO(X,S)$ and $\Sp[A]F\tau$, it is immediate that the
codensity lifting satisfies the equation \eqref{eq:decomp}.

\subsection{Generalizing Codensity Liftings}\label{sec:gencod}

From Beohar et al.'s decomposition, we notice that the actual lifting
job of $F$ is done by $\Sp[A]F\tau$. We speculate that an extension of
$\mathrm{Sp}$ to natural transformations would also help lifting
distributive laws with respect to codensity liftings. We therefore
exhibit a 2-functorial nature of the $\mathrm{Sp}$ construction. Its
domain is the {\em lax coslice 2-category} $\laxsl A\CAT$ under a
discrete category $A$, which is defined as follows:
\begin{itemize}
\item A 0-cell is a pair of a category $\CC$ with a functor
  $\O\colon A\to\CC$.
\item A 1-cell from $(\CC,\O)$ to $(\DD,\P)$ is a pair of a functor
  $F\colon \CC\to\DD$ and a natural transformation $\tau\colon F\circ\O\to\P$.
\item A 2-cell from $(F,\tau)\colon (\CC,\O)\to(\DD,\P)$ to $(G,\upsilon)$
  is a natural transformation $\alpha\colon F\to G$ such that
\end{itemize}
Observe that an endo-1-cell on $(\CC,\O)$ bijectively corresponds to a
pair of an endofunctor $F$ on $\CC$ and a modality
$\tau\colon F\circ\O\to\O$.
\begin{theorem}\label{th:2-fun}
  The following assignments $\SP[A]$ form a 2-functor of type
  $\laxsl A\CAT\to\CAT^{\clat}$.
  \begin{align*}
    \SP[A](\CC,\O) & = \Spf[A]\CC\O
    & \SP[A](F,\tau) & = (F,\Sp[A]F\tau)
    & \SP[A](\alpha) & = \alpha
  \end{align*}
  Here, the functor $\Sp[A]F\tau$ is the extension of
  \eqref{eq:spfun1} to arbitrary 1-cell $(F,\tau)\colon (\CC,\O)\to(\DD,\P)$
  by the same formulas \eqref{eq:spfun1},\eqref{eq:spfun2}.  Moreover,
  $\Sp[A]F\tau$ preserves Cartesian morphisms and all meets.
  \qed
\end{theorem}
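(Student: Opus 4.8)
The plan is to lean on the fact that $\clat$-fibrations are faithful, so that almost all of the two-dimensional bookkeeping collapses to identities about the underlying functors $F$ and natural transformations $\alpha$, leaving only a handful of well-definedness checks against the explicit formula for the ``structure'' component $S$. The $0$-cell clause needs nothing new: $\Spf[A]\CC\O$ was already shown above to be a $\clat$-fibration, so it is a legitimate $0$-cell of $\CAT^\clat$.

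For $1$-cells I would first verify that $\Sp[A]F\tau$, defined by \eqref{eq:spfun1}--\eqref{eq:spfun2}, is a well-defined functor $\Sp[A]\CC\O\to\Sp[A]\DD\P$. On objects the set $\{\tau_a\circ Fk\mid k\in S(a)\}$ lies in $\DD(FX,\P(a))$ since $\tau_a\colon F\O(a)\to\P(a)$ and $Fk\colon FX\to F\O(a)$; on morphisms, if $f\colon(X,S)\to(Y,T)$ is an $\Sp$-morphism then for $k\in T(a)$ we have $(\tau_a\circ Fk)\circ Ff=\tau_a\circ F(k\circ f)$ with $k\circ f\in S(a)$, so $Ff$ is again an $\Sp$-morphism --- functoriality of $F$ is exactly what makes this go through. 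That $\Sp[A]F\tau$ is a lifting of $F$ is immediate, as both projections return $FX$ and $Ff$. Preservation of identities and composition by $\SP[A]$ then follows by unwinding the formulas: the identity modality gives $\Sp[A]{\Id}{\id}=\Id$, and for composable $(F,\tau),(G,\upsilon)$ both $\Sp[A]{GF}{\upsilon\bul G\tau}$ and $\Sp[A]G\upsilon\circ\Sp[A]F\tau$ send $(X,S)$ to $(GFX,\lam a\{\upsilon_a\circ G(\tau_a\circ Fk)\mid k\in S(a)\})$.

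The heart of the argument, and the step I expect to carry the only real content, is the $2$-cell clause. Given a $2$-cell $\alpha\colon(F,\tau)\Rightarrow(G,\upsilon)$ in $\laxsl A\CAT$, that is, a natural transformation $\alpha\colon F\to G$ satisfying the coslice condition $\upsilon_a\circ\alpha_{\O(a)}=\tau_a$ for all $a$, I must show that each component $\alpha_X\colon FX\to GX$ lifts to an $\Sp$-morphism $\Sp[A]F\tau(X,S)\dto\Sp[A]G\upsilon(X,S)$. By the description of $\Sp$-morphisms this amounts to checking that for every $a$ and every element $\upsilon_a\circ Gk$ of the structure of $\Sp[A]G\upsilon(X,S)$ (with $k\in S(a)$), the composite $(\upsilon_a\circ Gk)\circ\alpha_X$ lies in the structure of $\Sp[A]F\tau(X,S)$. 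This is precisely where naturality and the coslice condition combine: $(\upsilon_a\circ Gk)\circ\alpha_X=\upsilon_a\circ\alpha_{\O(a)}\circ Fk=\tau_a\circ Fk$, which belongs to $\{\tau_a\circ Fk'\mid k'\in S(a)\}$. Naturality of the resulting family, together with preservation of vertical and horizontal composition of $2$-cells, then reduces to the corresponding equations for $\alpha$ in $\CAT$, because $\SP[A]$ is the identity on underlying transformations and, by faithfulness of $\clat$-fibrations, liftings are unique whenever they exist.

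Finally, for the ``moreover'' I would record the explicit fibred structure of $\Spf[A]\CC\O$: reindexing along $f\colon X\to Y$ is $f^*(Y,T)=(X,\lam a\{k\circ f\mid k\in T(a)\})$, and, the fibre order being reverse inclusion of the structure sets, fibrewise meets are computed componentwise by unions, $\bigsqcap_i(X,S_i)=(X,\lam a\bigcup_i S_i(a))$. Both preservation claims are then one-line calculations: $\Sp[A]F\tau(f^*(Y,T))=(Ff)^*\Sp[A]F\tau(Y,T)$ holds because $F(k\circ f)=Fk\circ Ff$, and $\Sp[A]F\tau(X,\lam a\bigcup_i S_i(a))=\bigsqcap_i\Sp[A]F\tau(X,S_i)$ holds because the image of a union is the union of images. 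In particular the empty meet $\top=(X,\lam a\emptyset)$ is preserved, so all meets are.
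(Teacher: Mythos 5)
Your proof is correct. The paper in fact gives no proof of this theorem — it is stated with a closing tombstone and nothing for it appears in the appendix — and your write-up supplies exactly the routine verification that omission presupposes: well-definedness and functoriality of $(F,\tau)\mapsto(F,\Sp[A]F\tau)$, the 2-cell check via naturality of $\alpha$ combined with the coslice condition $\upsilon_a\circ\alpha_{\O(a)}=\tau_a$, and the explicit fibre structure of $\Spf[A]\CC\O$ (reverse-inclusion order, reindexing by precomposition, fibred meets as unions of structure sets), all of which you compute correctly.
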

This construction suggests that a categorical structure in
$\laxsl A\CAT$ is transferred to $\clat$-fibrations equipped with the
same categorical structure and its lifting. We employ this property to
transfer distributive laws in $\laxsl A\CAT$ to $\CAT^\clat$ in
Theorem \ref{thm:suf_lift_dist}.

As the construction of $\Sp[A] F\tau$ is extended to arbitrary 1-cells
$(\CC,\O)\to(\DD,\P)$, it is natural to put {\em different} adjoints
in the decomposition \eqref{eq:decomp}. This leads us to the
following generalization:
\begin{definition} \label{def:gen_codlift} Let
  $\pfib p \EE \BB A \O \bO$ and $\pfib q \FF \CC A \P \bP$ be
  \ptfibs.  The {\em codensity lifting} of $F\colon \BB\to\CC$ along $p,q$ with a
  natural transformation $\tau\colon F\circ\O\to \P$, denoted by
  $\codlift F{\bO,\bP}\tau$, is defined by
  \begin{align}
    \codlift F{\bO,\bP}\tau
    &\defeq\conc q\bP \circ \Sp[A]F\tau \circ \abs p\bO \label{eq:decompgen} \\
    &\textstyle=\bigsqcap_{a\in A,k\in\EE(-,\bO(a))}(\tau_a\circ F(pk))^*(\bP(a)). \nonumber
  \end{align}
\end{definition}
We show some properties of this generalized codensity lifting.  The
following shows that it is the largest lifting that makes $\tau$
liftable with respect to $\dot F\circ\bO$ and $\bP$.  Here, two
liftings $\dF,\ddot F$ of $F$ are compared by:
$\dF\sqsubseteq \ddot F$ if $\dF(P)\sqsubseteq\ddot F(P)$ for any $P$.
\begin{theorem} \label{prop:codlift_decent}
  For any $(F,\dot F)\in\CAT^\clat(p,q)$,
  $\dot F\sqsubseteq \codlift F{\bO,\bP}\tau$ if and only if
  $\tau\colon \dot F\circ \bO\dto\bP$ in the fibration
  $\cod_{\Id_A,q}$ (\S{}\ref{sec:liftings}).
  \qed
\end{theorem}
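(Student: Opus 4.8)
The plan is to prove both implications by reducing each side to a family of fibrewise inequalities indexed by $a\in A$ and then matching them. First I unfold the right-hand side. Since $\cod_{\Id_A,q}$ is a faithful fibration, the fibrewise characterisation of liftable natural transformations from \sref{sec:liftings} applies: instantiating the displayed isomorphism for the $1$-cells $\dF\circ\bO$ and $\bP$ over $\Id_A$, the assertion $\tau\colon \dF\circ\bO\dto\bP$ holds precisely when each component lifts, i.e.\ $\tau_a\colon \dF(\bO(a))\dto\bP(a)$ in $q$ for every $a\in A$. By fibredness of $q$, this is equivalent to the family of inequalities
\begin{equation}
  \dF(\bO(a))\sqsubseteq \tau_a^*(\bP(a))\qquad(a\in A).\label{eq:rhs-unfolded}
\end{equation}
On the left, $\dF\sqsubseteq\codlift F{\bO,\bP}\tau$ means, by the concrete formula for the generalised codensity lifting, that $\dF(P)\sqsubseteq\bigsqcap_{a\in A,\,k\in\EE(P,\bO(a))}(\tau_a\circ F(pk))^*(\bP(a))$ for all $P\in\EE$. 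It therefore suffices to prove the equivalence of these two families.

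For the forward direction I would instantiate the left-hand inequality at the truth object $P=\bO(a)$ and keep only the summand of the meet indexed by this same $a$ and by $k=\id_{\bO(a)}\in\EE(\bO(a),\bO(a))$. Since $p(\id)=\id$ and hence $F(p\,\id)=\id$, that summand equals $\tau_a^*(\bP(a))$; as a meet lies below each of its factors, this delivers exactly \eqref{eq:rhs-unfolded}.

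For the converse I would fix $P\in\EE$, an index $a\in A$ and a morphism $k\colon P\dto\bO(a)$, and establish $\dF(P)\sqsubseteq(\tau_a\circ F(pk))^*(\bP(a))$; taking the meet over all such $a,k$ then gives the left-hand inequality. Because $\dF$ is a lifting of $F$, the arrow $\dF(k)\colon \dF(P)\dto\dF(\bO(a))$ lies over $F(pk)$, so by fibredness $\dF(P)\sqsubseteq (F(pk))^*(\dF(\bO(a)))$. Applying the meet-preserving (hence monotone) reindexing $(F(pk))^*$ to the hypothesis \eqref{eq:rhs-unfolded}, and using the contravariant functoriality $(F(pk))^*\circ\tau_a^*=(\tau_a\circ F(pk))^*$, yields
\begin{equation*}
  \dF(P)\sqsubseteq (F(pk))^*(\dF(\bO(a)))\sqsubseteq (F(pk))^*\big(\tau_a^*(\bP(a))\big)=(\tau_a\circ F(pk))^*(\bP(a)),
\end{equation*}
as required.

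The only genuinely delicate step is the opening translation: matching the abstract liftability statement in $\cod_{\Id_A,q}$ against the pointwise inequalities \eqref{eq:rhs-unfolded}, which is pure bookkeeping between the three fibrations $\Id_A$, $q$ and $\cod_{\Id_A,q}$. Once that is in place, the forward implication is driven entirely by the identity-morphism factor of the meet, and the backward implication by functoriality of $\dF$ together with reindexing. A more structural alternative would avoid the concrete meet formula and instead transpose $\dF(P)\sqsubseteq\conc q\bP(\Sp[A]F\tau(\abs p\bO(P)))$ across the fibred Galois connection $\abs q\bP\dashv\conc q\bP$ underlying the decomposition \eqref{eq:decompgen}, reducing to an inclusion in $\Sp[A]\CC\P$; this reproduces the same two instantiations but packages them inside the adjunction.
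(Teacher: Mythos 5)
Your proposal is correct. Both directions check out: the opening translation is sound (the displayed isomorphism of \sref{sec:liftings} applied to the hom-category $\CAT^\clat(\Id_A,q)$, with $\Id_A$ trivially a $\clat$-fibration and $A$ discrete, reduces $\tau\colon \dot F\circ\bO\dto\bP$ to the componentwise inequalities $\dot F(\bO(a))\sqsubseteq\tau_a^*(\bP(a))$); the forward implication follows by instantiating at $P=\bO(a)$ with $k=\id_{\bO(a)}$; and the converse follows from $q(\dot F(k))=F(pk)$, monotonicity of reindexing, and strictness of $(F(pk))^*\circ\tau_a^*=(\tau_a\circ F(pk))^*$ in a posetal fibration. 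The paper itself omits the proof entirely (the theorem is stated with a closing tombstone and a pointer to the single-fibration universal property, Theorem 5.14 of Sprunger et al.), and your argument is exactly the expected generalization of that proof, so there is nothing in the paper your route diverges from.
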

This is the generalization of the universal property possessed by
codensity liftings \cite[Theorem
5.14]{DBLP:journals/logcom/SprungerKDH21}.

When the indexing set $A$ is one-point $1$, the codensity lifting
$\codlift{F}{\bO, \bP}{\tau}$ is determined by the object
$\codlift{F}{\bO, \bP}{\tau}\bO$.
We can utilize it for general cases $(A \neq 1)$
because a codensity lifting is a meet $\sqcap_{a \in A} \codlift{F}{\bO(a), \bP(a)}{\tau_a}$ of $A$,
and we have that
$\codlift{F}{\bO, \bP}{\tau}$ is determined by objects
$\codlift{F}{\bO(a), \bP(a)}{\tau_a}\bO(a)$.
\begin{proposition} \label{prop:codlift_omega} Assume $A=1$ in the
  setting of Definition~\ref{def:gen_codlift}.  Then
  $\codlift{F}{\bO, \bP}{\tau} = \conc{q}{\codlift{F}{\bO, \bP}{\tau}
    \bO} \circ \Sp[1]{F}{\mathrm{id}} \circ \abs{p}{\bO}$.
  \qed
\end{proposition}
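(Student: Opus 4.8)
The plan is to prove the identity object-wise and then upgrade it to an equality of functors by faithfulness. Both sides are liftings of $F$ along $p,q$: the right-hand side $\conc{q}{\codlift{F}{\bO, \bP}{\tau}\bO} \circ \Sp[1]{F}{\mathrm{id}} \circ \abs{p}{\bO}$ is a composite of a lifting of $\mathrm{id}_\BB$, a lifting of $F$, and a lifting of $\mathrm{id}_\CC$, hence a genuine lifting of $F$. Since $q$ is a $\clat$-fibration and therefore faithful, a lifting of $F$ is determined by its action on objects: for a morphism $f\colon P\to Q$ the image must lie over $F(pf)$ between the already-determined object-images, and at most one such morphism exists. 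So it suffices to check that the two functors agree on every object $P\in\EE$.

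First I would unfold both sides into meets indexed by $k\in\EE(P,\bO)$. Writing $\bP,\O,\tau$ for the (unique, as $A=1$) components and using the reindexing identity $(\tau \circ F(pk))^* = (F(pk))^* \circ \tau^*$, the concrete formula in Definition~\ref{def:gen_codlift} gives $\codlift{F}{\bO, \bP}{\tau}(P) = \bigsqcap_{k \in \EE(P,\bO)} (F(pk))^*(\tau^* \bP)$. Tracing the right-hand side through its three constituents, $\abs{p}{\bO}$ produces the test-map set $\{pk \mid k \in \EE(P,\bO)\}$, then $\Sp[1]{F}{\mathrm{id}}$ sends each test map $pk$ to its $F$-image $F(pk)$ (the modality being the identity), and finally $\conc{q}{W}$ forms the meet of the pullbacks of $W \defeq \codlift{F}{\bO, \bP}{\tau}\bO$ along these maps; this yields $\bigsqcap_{k \in \EE(P,\bO)} (F(pk))^*(W)$. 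The claim thus reduces to the lattice identity
\[
  \textstyle\bigsqcap_{k \in \EE(P,\bO)} (F(pk))^*(\tau^* \bP) = \bigsqcap_{k \in \EE(P,\bO)} (F(pk))^*(W).
\]

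For the inequality $\sqsupseteq$ (the easy direction), note that $W$ is itself the meet $\bigsqcap_{j\in\EE(\bO,\bO)} (F(pj))^*(\tau^*\bP)$, which includes the term indexed by $j=\mathrm{id}_\bO$, namely $\tau^*\bP$; hence $W\sqsubseteq \tau^*\bP$, and monotonicity of each pullback $(F(pk))^*$ and of meets places the right-hand meet below the left-hand one. For the reverse inequality I would use that in a $\clat$-fibration every reindexing $f^*$ preserves all meets. Expanding $W$ and pulling back along $F(pk)$ gives $(F(pk))^*(W) = \bigsqcap_{j \in \EE(\bO,\bO)} (F(p(j \circ k)))^*(\tau^*\bP)$, where I use the functoriality $(F(pk))^* \circ (F(pj))^* = (F(p(j\circ k)))^*$. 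As $j$ ranges over $\EE(\bO,\bO)$ the composite $j\circ k$ ranges over a subset of $\EE(P,\bO)$, so this meet over a subset dominates the full meet $\bigsqcap_{k'\in\EE(P,\bO)}(F(pk'))^*(\tau^*\bP)$, i.e. the left-hand side; since each $(F(pk))^*(W)$ is then $\sqsupseteq$ the left-hand side, so is their meet over $k$.

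The manipulations are routine once set up; the only care needed is getting the two inequalities in the correct direction and, for the reverse one, combining the factorization $F(p(j\circ k)) = F(pj)\circ F(pk)$ with meet-preservation of reindexing. The one conceptual point worth isolating is why evaluating the lifting at $\bO$ alone loses no information: every test map $k\colon P\to\bO$ can be precomposed into any self-test map $j\colon\bO\to\bO$, so the constraints recorded in the single object $W$ at $\bO$ already entail all the constraints the lifting imposes at an arbitrary $P$. This is exactly what makes the closure $W$ (rather than the full $\tau^*\bP$) suffice, and I expect this to be the only step requiring genuine thought.
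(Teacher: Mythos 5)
Your proof is correct. The paper states Proposition~\ref{prop:codlift_omega} without any proof (the \qed marks it as a routine verification, and no argument appears in the appendix), so there is nothing to compare against line by line; but your argument is exactly the verification the authors elide: unfold both sides at an object $P$ into the meets $\bigsqcap_{k\in\EE(P,\bO)}(F(pk))^*(\tau^*\bP)$ and $\bigsqcap_{k\in\EE(P,\bO)}(F(pk))^*(W)$, get one inequality from $W\sqsubseteq\tau^*\bP$ (the term $j=\mathrm{id}_\bO$) and the other from meet-preservation of reindexing together with the closure of $\EE(P,\bO)$ under postcomposition with $\EE(\bO,\bO)$, and then transfer the object-wise equality to morphisms by faithfulness of the $\clat$-fibration $q$. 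All the facts you invoke (strict functoriality of reindexing on posetal fibers, meet-preservation of $f^*$, faithfulness) are available in the paper's setting, so the argument goes through as written.
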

\begin{proposition} \label{prop:f_codlift_omega} Assume $A=1$ in the
  setting of Definition~\ref{def:gen_codlift}.  Let $(F,\dF)\colon p\to q$ be a
  fibered one-cell. If $\conc{p}{\bO}\circ \abs{p}{\bO} = \mathrm{id}$ and
  $\dot{F}$ preserves fibered meets, the following statements hold.
\begin{enumerate}
  \item $\dot{F} = \conc{q}{\dot{F}\bO} \circ \Sp[1]{F}{\mathrm{id}} \circ \abs{p}{\bO}$.
  \item
  $\codlift{F}{\bO, \bP}{\tau} = \dot{F}$ if and only if $\codlift{F}{\bO, \bP}{\tau}\bO = \dot{F}\bO$.
  \qed
  \end{enumerate}
\end{proposition}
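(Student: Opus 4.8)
The plan is to prove Part 1 by a direct computation with the explicit meet formula for the codensity lifting, and then obtain Part 2 as a one-line consequence of Part 1 together with Proposition~\ref{prop:codlift_omega}.

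For Part 1, I would first observe that the claimed right-hand side $\conc{q}{\dot F\bO}\circ\Sp[1]{F}{\mathrm{id}}\circ\abs{p}{\bO}$ is, by the decomposition~\eqref{eq:decompgen}, exactly the codensity lifting $\codlift{F}{\bO,\dot F\bO}{\mathrm{id}}$ obtained by taking $\bP := \dot F\bO$ and the trivial modality $\tau := \mathrm{id}$; this is well-typed since $q(\dot F\bO) = F(p\bO) = F\O$. Hence it suffices to show $\dot F(P) = \codlift{F}{\bO,\dot F\bO}{\mathrm{id}}(P)$ for every $P\in\EE$, and unfolding the explicit formula with $A=1$ gives $\codlift{F}{\bO,\dot F\bO}{\mathrm{id}}(P) = \bigsqcap_{k\in\EE(P,\bO)}(F(pk))^*(\dot F\bO)$.

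The heart of the argument is then a chain of three rewrites. First, since $(F,\dot F)$ is a fibered one-cell, $(F(pk))^*(\dot F\bO) = \dot F((pk)^*\bO)$ for each $k\in\EE(P,\bO)$. Second, because all the objects $(pk)^*\bO$ lie in the single fiber $\EE_{pP}$ and $\dot F$ preserves fibered meets, I may pull the meet out: $\bigsqcap_{k}\dot F((pk)^*\bO) = \dot F\bigl(\bigsqcap_{k}(pk)^*\bO\bigr)$. Third, I recognize $\bigsqcap_{k\in\EE(P,\bO)}(pk)^*\bO$ as precisely $\conc{p}{\bO}(\abs{p}{\bO}(P))$, using the concrete descriptions $\abs{p}{\bO}(P) = (pP, p(\EE(P,\bO)))$ and $\conc{p}{\bO}(X,S) = \bigsqcap_{k\in S}k^*\bO$; by the hypothesis $\conc{p}{\bO}\circ\abs{p}{\bO} = \mathrm{id}$ this meet is just $P$. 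Combining the three steps gives $\codlift{F}{\bO,\dot F\bO}{\mathrm{id}}(P) = \dot F(P)$ on objects, and since $q$ is faithful, two liftings of $F$ agreeing on objects also agree on morphisms, yielding the functor equality. For Part 2 the ``only if'' direction is immediate. For ``if'', suppose $\codlift{F}{\bO,\bP}{\tau}\bO = \dot F\bO$; Proposition~\ref{prop:codlift_omega} expresses $\codlift{F}{\bO,\bP}{\tau} = \conc{q}{\codlift{F}{\bO,\bP}{\tau}\bO}\circ\Sp[1]{F}{\mathrm{id}}\circ\abs{p}{\bO}$, so substituting $\codlift{F}{\bO,\bP}{\tau}\bO = \dot F\bO$ and applying Part 1 gives $\codlift{F}{\bO,\bP}{\tau} = \conc{q}{\dot F\bO}\circ\Sp[1]{F}{\mathrm{id}}\circ\abs{p}{\bO} = \dot F$.

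I expect the only delicate point to be the bookkeeping in the first chain of rewrites: verifying that $\bigsqcap_{k}(pk)^*\bO$ is genuinely a meet within the one fiber $\EE_{pP}$ (so that ``preserves fibered meets'' applies) and that the index set $\EE(P,\bO)$ is exactly the one appearing in $\abs{p}{\bO}(P)$. Once these identifications are made precise the computation is routine, and no new construction is needed beyond the decomposition~\eqref{eq:decompgen} and the two standing hypotheses.
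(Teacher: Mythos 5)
Your proof is correct and follows essentially the route the paper intends: Part 1 is the direct computation combining fiberedness, preservation of fibered meets, and $\conc{p}{\bO}\circ\abs{p}{\bO}=\mathrm{id}$ (this is exactly the content of Lemma~\ref{lem:full_fibred}(2) specialized via the hypothesis), and Part 2 follows by substituting Part 1 into Proposition~\ref{prop:codlift_omega}, just as the paper's remark after the proposition indicates. The only bookkeeping points you flag --- that the meet $\bigsqcap_{k\in\EE(P,\bO)}(pk)^*\bO$ lives in the single fiber $\EE_{pP}$ and coincides with $\conc{p}{\bO}(\abs{p}{\bO}(P))$, and that faithfulness of $q$ upgrades object-level agreement of two liftings to functor equality --- are handled correctly.
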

The assumption $\conc{p}{\bO}\circ \abs{p}{\bO} = \mathrm{id}$ is mild
enough for all examples of $\clat$-fibrations in this paper.  The
first statement means $\dot{F}$ has the same property as
$\codlift{F}{\bO, \bP}{\tau}$ written in
Proposition~\ref{prop:codlift_omega}, and it immediately yields the second
statement.  This proposition will be used to get a criterion for
analysing
$\codlift{F}{\bO, \bP}{\tau} = \times \colon \mathbb{E}^2 \to
\mathbb{E}$ (Corollary~\ref{cor:codlift_times}).

 \section{Codensity Liftings of \mergefunc Functors}
\label{sec:codlift_structure}

We use the generalized codensity lifting to lift the \mergefunc
functor.  Let $\pfib p \EE \BB A \O \bO$ be a \ptfib. Its $N$-fold
tupling $\pfib {p^N}{\EE^N}{\BB^N}A{\cptpl\O N}{\cptpl\bO N}$ is again
a \ptfib; here $\cptpl \O N$ denotes the tupling
$\angle{\O,\cdots,\O}\colon A\to\BB^N$ of $N$-fold copies of $\O$ (and the same for
$\cptpl\bO N$). We call the codensity lifting of a \mergefunc functor
$T\colon \BB^N\to\BB$ with $\sigma\colon T \circ \cptpl\O N\to\Omega$ along
$p^N,p$ the {\em $N$-codensity lifting}. Concretely,
\begin{align*}
  & \codlift T{\cptpl\bO N,\bO}\sigma(P_1,\cdots,P_N)\\
  & =\bigsqcap_{\substack{a\in A,\\ k_i\in\EE(P_i,\bO(a))}}
  (\sigma_a\circ T(pk_1,\cdots,pk_n))^*\bO(a).
\end{align*}
We overload the notation for the $N$-codensity lifting on that for the
codensity lifting: when $N$ can be read-off from $T$, we simply write
$\Ncodlift T{\bO}\sigma N$ to mean
$\codlift T{\cptpl\bO N,\bO}\sigma$.

\subsection{Product Functors by Codensity Liftings}

One of the most fundamental operators for composing processes and
transition systems is {\em parallel composition}. Typically it
generates a new transition system whose states are pairs $x||y$ of
component system's states.  This suggests that the carrier of the
composed system is the {\em binary product} of the carrier of its
components. We thus study the case where the \mergefunc functor $T$ is
$\times\colon \BB^2\to\BB$.

  As a side note,
  we can also handle other structure functors including coproducts, thanks to the generality of \S{}\ref{sec:gcodlift}.
Moreover, our framework facilitates the investigation of various modalities, even when we fix $T$ to the product functor, as illustrated in \S{}\ref{sec:pmet_codlift}.
This enables us to accommodate various non-trivial liftings for achieving compositionality, particulary in quantitative analysis (\S{}\ref{sec:examples}).

In this section we illustrate some liftings of the binary product
functor, and show that they can be expressed by the single-parameter
$2$-codensity lifting.  It is easy to apply results here to the
multiple-parameter codensity lifting because it is the intersection of
the single-parameter one.

\subsubsection{Product functor on the total category}

Suppose that the base category $\BB$ has a binary product functor
$\times\colon \BB^2\to\BB$. Since $p\colon \EE\to\BB$ is a $\clat$-fibration, the
functor $\dtimes\colon \EE^2\to\EE$ defined by
\begin{displaymath}
  P\dtimes Q \defeq \pi^*P \sqcap \pi'^*Q
  \quad
  (\text{$\pi,\pi'$ are 1st and 2nd projections})
\end{displaymath}
is a fibered lifting of $\times$ along $p^2,p$. It also preserves all
meets. In fact, it is the binary product functor on $\EE$
\cite[Proposition~9.2.1]{DBLP:books/daglib/0023251}.

When can $\dtimes$ be expressed by the codensity lifting
$\Ncodlift{\times}{\bO}{\sigma}{2}$?
Since Proposition~\ref{prop:f_codlift_omega} is applicable to $\dtimes$,
we obtain the following:
\begin{corollary} \label{cor:codlift_times} Let
  $\pfib p \EE \BB 1 \O \bO$ be a \ptfib{} such that $\BB$ has binary
  products, and
  $\conc{p}{\bO}\circ\abs{p}{\bO} = \mathrm{id}$.
Then for any modality $\sigma\colon \O\times\O\to\O$, we have
  $\dtimes=\Ncodlift{\times}{\bO}{\sigma}{2}$ if and only if
  $\Ncodlift{\times}{\bO}{\sigma}{2}(\bO, \bO) = \bO \dtimes \bO$.
  \qed
\end{corollary}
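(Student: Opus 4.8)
The plan is to recognize Corollary~\ref{cor:codlift_times} as a direct instance of Proposition~\ref{prop:f_codlift_omega}. I instantiate that proposition with the fibered one-cell $(\times,\dtimes)\colon p^2\to p$, taking as domain \ptfib{} the $2$-fold tupling $\pfib{p^2}{\EE^2}{\BB^2}1{\cptpl\O 2}{\cptpl\bO 2}$ and as codomain \ptfib{} the given $\pfib p\EE\BB 1\O\bO$, with $\bP\defeq\bO$ and modality $\tau\defeq\sigma\colon\O\times\O\to\O$. Under this instantiation the generalized codensity lifting $\codlift{\times}{\cptpl\bO 2,\bO}{\sigma}$ is exactly $\Ncodlift{\times}{\bO}{\sigma}{2}$ by the notational convention. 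Moreover, since $A=1$, the single object of the domain \ptfib{} is $\cptpl\bO 2=(\bO,\bO)$, so the term $\dot F\bO$ occurring in Proposition~\ref{prop:f_codlift_omega} reads $\dtimes(\bO,\bO)=\bO\dtimes\bO$. Consequently statement~(2) of that proposition becomes precisely the claimed equivalence $\Ncodlift{\times}{\bO}{\sigma}{2}=\dtimes \Leftrightarrow \Ncodlift{\times}{\bO}{\sigma}{2}(\bO,\bO)=\bO\dtimes\bO$, provided the hypotheses of the proposition are met.

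The real content is therefore checking those hypotheses. The index condition $A=1$ holds by assumption. That $(\times,\dtimes)$ is a fibered one-cell and that $\dtimes$ preserves all fibered meets were both already established when $\dtimes$ was introduced at the start of this subsection. The only hypothesis not handed to us directly is the retraction condition \emph{for the domain fibration}, namely $\conc{p^2}{\cptpl\bO 2}\circ\abs{p^2}{\cptpl\bO 2}=\mathrm{id}$, which must be derived from the corollary's assumption $\conc{p}{\bO}\circ\abs{p}{\bO}=\mathrm{id}$ about the single fibration $p$.

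I expect this derivation to be the main obstacle, and I would handle it by a componentwise computation. Using the explicit formula for the left adjoint, $\abs{p^2}{\cptpl\bO 2}(P_1,P_2)$ carries the family $S=p(\EE(P_1,\bO))\times p(\EE(P_2,\bO))$, since hom-sets in $\EE^2$ factor as $\EE^2((P_1,P_2),(\bO,\bO))\cong\EE(P_1,\bO)\times\EE(P_2,\bO)$. Because $p^2$ is a $\clat$-fibration in which Cartesian liftings and meets are formed in each coordinate, $\conc{p^2}{\cptpl\bO 2}$ applied to such a rectangular family splits as the pair of the two single-coordinate values, so the composite reduces to two independent copies of $\conc{p}{\bO}\circ\abs{p}{\bO}$, each equal to the identity by hypothesis; this yields $\conc{p^2}{\cptpl\bO 2}\circ\abs{p^2}{\cptpl\bO 2}(P_1,P_2)=(P_1,P_2)$. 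The delicate point is the separation of the two coordinates, which needs each factor $\EE(P_i,\bO)$ to be nonempty: the hypothesis $\conc{p}{\bO}\circ\abs{p}{\bO}=\mathrm{id}$ already forces the only potentially empty case to occur at $P_i=\top$, and in every $\clat$-fibration considered here constant maps into $\O$ witness $\EE(\top,\bO)\neq\emptyset$, so the rectangular meet separates as required. With all four hypotheses verified, invoking Proposition~\ref{prop:f_codlift_omega}(2) gives exactly the statement of the corollary.
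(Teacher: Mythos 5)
Your route is the paper's own: the paper also obtains Corollary~\ref{cor:codlift_times} by instantiating Proposition~\ref{prop:f_codlift_omega} at the fibered, meet-preserving one-cell $(\times,\dtimes)\colon p^2\to p$, and your unfolding of that instantiation (domain \ptfib{} $\pfib{p^2}{\EE^2}{\BB^2}{1}{\cptpl{\O}{2}}{\cptpl{\bO}{2}}$, the object $\dot F\bO$ reading as $\dtimes(\bO,\bO)=\bO\dtimes\bO$, statement (2) becoming the claimed equivalence) is exactly right. You also correctly flagged a point the paper passes over in silence: the proposition's retraction hypothesis concerns the \emph{domain} fibration, so what is actually needed is $\conc{p^2}{\cptpl{\bO}{2}}\circ\abs{p^2}{\cptpl{\bO}{2}}=\mathrm{id}$, while the corollary only assumes $\conc{p}{\bO}\circ\abs{p}{\bO}=\mathrm{id}$.

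The gap is in how you discharge that mismatch. Your componentwise computation is fine whenever both $\EE(P_1,\bO)$ and $\EE(P_2,\bO)$ are nonempty, and you correctly observe that the hypothesis confines emptiness to $P_i=\top$; but the closing step --- ``in every $\clat$-fibration considered here constant maps into $\O$ witness $\EE(\top,\bO)\neq\emptyset$'' --- appeals to the paper's examples rather than to the corollary's hypotheses, and it cannot be turned into a general argument. Concretely, let $\BB$ be the poset $X\le\O$ (which has binary products: $X\times X=X$, $\O\times\O=\O$), with fibers $\EE_X=\{0\sqsubset\top_X\}$ and $\EE_\O=\{a\sqsubset\top_\O\}$, reindexing along the unique map $k\colon X\to\O$ given by $k^*a=0$ and $k^*\top_\O=\top_X$, and $\bO\defeq a$. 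Then $\conc{p}{\bO}\circ\abs{p}{\bO}=\mathrm{id}$ holds (on $0$ the morphism over $k$ gives the meet $k^*a=0$; on $a$ the identity gives $a$; on the two tops the hom-sets into $\bO$ are empty, so the empty meets return the tops), yet $\EE(\top_X,\bO)=\emptyset$. Consequently, for the unique modality $\sigma=\mathrm{id}_\O$ one gets $\Ncodlift{\times}{\bO}{\sigma}{2}(0,\top_X)=\top_X\neq 0=0\dtimes\top_X$, even though $\Ncodlift{\times}{\bO}{\sigma}{2}(\bO,\bO)=a=\bO\dtimes\bO$: the right-hand side of the equivalence holds while the left-hand side fails. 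So the nonemptiness you need is a genuinely extra assumption (e.g.\ $\EE(\top_X,\bO)\neq\emptyset$ for all $X$, which indeed implies $\EE(P,\bO)\neq\emptyset$ for all $P$ and makes your computation go through); it holds in all of the paper's concrete fibrations but does not follow from the stated hypotheses. To be fair, this defect is inherited from the paper itself, whose one-line justification silently performs the same unwarranted transfer from $p$ to $p^2$; your proposal is correct once that condition is added (or once the retraction condition is assumed directly for $p^2$ with $\cptpl{\bO}{2}$), but as a proof of the corollary as literally stated, this one step fails.
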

Therefore one can check $\dtimes=\Ncodlift{\times}{\bO}{\sigma}{2}$
only by checking the equality
$\Ncodlift{\times}{ \bO}{\sigma}{2}(\bO, \bO)=\bO\dtimes\bO$.

\begin{mdexample} \label{eg:ncod_times} Upon the same \ptfib{}
  $\pfibU p{\EqRel}\Set 1 2{\Eq[2]}$ in Example
  \ref{ex:codenlift_pow}, let $\land\colon 2 \times 2 \to 2$ be the
  logical conjunction. Then the binary product $\dtimes$ on $\EqRel$
  is the 2-codensity lifting $\Ncodlift{\times}{\Eq[2]}{\land}{2}$.
\end{mdexample}

\subsubsection{Metric lifting of the binary product by a
  modality} \label{sec:pmet_codlift}
Here we focus on the \ptfib \,
$\pfibU{p}{\mathbf{PMet}}{\Set}{1}{\intv}{(\intv,d_{\intv})}$ where
$\intv$ is the interval $[0, 1]$ and
$(\intv, d_{\intv}) \in \mathbf{PMet}$ is
the Euclidean distance over the interval on it.  We will use it for
quantitative bisimulations in \S{}\ref{sec:bisim_pseudo} and
\S{}\ref{sec:bisim_metric}.

We introduce a metric lifting of the binary product functor using a
binary function on the interval (which we also call a modality). We
show that it is equal to the codensity lifting given by the modality.
\begin{proposition} \label{prop:meet-free_pmet} Let
  $\sigma\colon \intv \times \intv \to \intv$ be a function. Given
  pseudometric spaces $(X,d_X),(Y,d_Y)$, we define the function
  $\lifttimes{\sigma}(d_X, d_Y)\colon (X \times Y)^2 \to \intv$ by
  \begin{displaymath}
    \lifttimes{\sigma}(d_X, d_Y)((x, y), (x', y')) \coloneqq \sigma(d_X(x, x'), d_Y(y, y')).
  \end{displaymath}
  Then
  this is a lifting of the product functor $\times\colon \BB^2 \to \BB$
along $\pfgt{\PMet}$
  if and only if $\sigma$ satisfies the following condition:
	\begin{equation} \label{eq:star}
		\left\{
		\begin{array}{l}
    \sigma\text{ is monotone}, \\
		\sigma(0, 0) = 0, \\
    \sigma(a, b) - \sigma(c, d) \leq \sigma(|a-c|, |b-d|).
		\end{array}
		\right.
	\end{equation}
Moreover,  if $\sigma$ satisfies \eqref{eq:star} then $\lifttimes{\sigma} = \Ncodlift{\times}{d_{\intv}}{\sigma}{2}$.
  \qed
\end{proposition}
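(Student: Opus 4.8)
The statement decomposes into two tasks: the characterization (the ``if and only if'') and the identification with the codensity lifting (the ``moreover''). For the first, recall that a lifting of $\times$ along $\pfgt{\PMet}^2,\pfgt{\PMet}$ is precisely a functor that sends $((X,d_X),(Y,d_Y))$ to an object of $\PMet$ with carrier $X\times Y$ and sends a pair of non-expansive maps $(f,g)$ to $f\times g$. Hence $\lifttimes{\sigma}$ is a lifting if and only if (a) $\lifttimes{\sigma}(d_X,d_Y)$ is a pseudometric on $X\times Y$ for all inputs, and (b) whenever $f,g$ are non-expansive, so is $f\times g$ for the lifted metrics. The plan is to match each of the three clauses of \eqref{eq:star} to these two requirements, in both directions, and then to unfold the codensity formula for the ``moreover''.

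For \textbf{sufficiency}, suppose $\sigma$ satisfies \eqref{eq:star}. Symmetry of $\lifttimes{\sigma}(d_X,d_Y)$ is immediate from symmetry of $d_X,d_Y$, and reflexivity follows from $\sigma(0,0)=0$ evaluated on the diagonal. For the triangle inequality I would write $a=d_X(x,x')$, $c=d_X(x',x'')$, $b=d_Y(y,y')$, $d=d_Y(y',y'')$; the reverse triangle inequalities in $X,Y$ give $|d_X(x,x'')-c|\le a$ and $|d_Y(y,y'')-d|\le b$, so applying the third clause of \eqref{eq:star} to the pair $(d_X(x,x''),d_Y(y,y''))$ and $(c,d)$ and then monotonicity yields $\sigma(d_X(x,x''),d_Y(y,y''))\le\sigma(a,b)+\sigma(c,d)$, which is exactly the triangle inequality for $\lifttimes{\sigma}(d_X,d_Y)$ through the midpoint $(x',y')$. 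Requirement (b) then follows from monotonicity alone: non-expansiveness of $f,g$ shrinks each argument of $\sigma$, and monotone $\sigma$ preserves the resulting inequality.

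For \textbf{necessity}, I would read off each clause from a small witness. The value $\lifttimes{\sigma}(d_X,d_Y)((x,y),(x,y))=\sigma(0,0)$ must vanish, forcing $\sigma(0,0)=0$. Monotonicity comes from two-point spaces: for $a\le a'$ and $b\le b'$, the identity on the underlying sets is non-expansive from the two-point space whose points lie at distance $a'$ to the one at distance $a$ (and likewise $b',b$), so requirement (b) for the induced product map reads $\sigma(a,b)\le\sigma(a',b')$. The third clause requires a subtler witness on \emph{three} points: the triples $(a,c,|a-c|)$ and $(b,d,|b-d|)$ satisfy the three triangle inequalities, hence define pseudometrics on $\{x,x',x''\}$ and $\{y,y',y''\}$ with $d_X(x,x'')=a$, $d_X(x',x'')=c$, $d_X(x,x')=|a-c|$ and analogously on the $Y$ side, and the triangle inequality of $\lifttimes{\sigma}(d_X,d_Y)$ along $x,x',x''$ and $y,y',y''$ is then precisely $\sigma(a,b)\le\sigma(|a-c|,|b-d|)+\sigma(c,d)$. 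This is the crux of the argument: it is the only clause not obtainable from a one- or two-point witness, and it hinges on observing that $|a-c|$ and $|b-d|$ are simultaneously realizable as genuine distances.

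For the \textbf{moreover}, I unfold the single-parameter $2$-codensity lifting with $A=1$ and $\bO=(\intv,d_{\intv})$: its value on $((X,d_X),(Y,d_Y))$ is the fibrewise meet $\bigsqcap$ (which in $\PMet$ computes the pointwise supremum) over non-expansive test maps $k_1\colon (X,d_X)\dto(\intv,d_{\intv})$ and $k_2\colon (Y,d_Y)\dto(\intv,d_{\intv})$ of the pullback pseudometrics $(\sigma\circ(k_1\times k_2))^*d_{\intv}$, namely
\[
  \sup_{k_1,k_2}\bigl|\sigma(k_1 x,k_2 y)-\sigma(k_1 x',k_2 y')\bigr|.
\]
I prove equality with $\sigma(d_X(x,x'),d_Y(y,y'))$ by two inequalities. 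For $\le$, each test term (both signs) is bounded using the third clause of \eqref{eq:star} together with monotonicity and $|k_1x-k_1x'|\le d_X(x,x')$, $|k_2y-k_2y'|\le d_Y(y,y')$; this repeats the triangle-inequality computation above. For $\ge$, I exhibit optimal tests $k_1\defeq d_X(x,-)$ and $k_2\defeq d_Y(y,-)$, which are non-expansive into $(\intv,d_{\intv})$ by the reverse triangle inequality (and $\intv$-valued, as the metrics here are $\intv$-valued) and send the first point to $(0,0)$ and the second to $(d_X(x,x'),d_Y(y,y'))$, so by $\sigma(0,0)=0$ the corresponding term equals $\sigma(d_X(x,x'),d_Y(y,y'))$. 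The expected main obstacle is exactly this $\ge$ direction: its content is that the supremum over \emph{all} non-expansive tests is attained by the distance-to-basepoint functions, the dual of the three-point realizability used in necessity.
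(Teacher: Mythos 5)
Your proof is correct and follows essentially the same route as the paper's: necessity of \eqref{eq:star} via two-point and three-point witness pseudometrics, sufficiency by direct verification of the pseudometric axioms plus monotonicity for morphisms, and the ``moreover'' via the optimal tests $k_1=d_X(x,-)$, $k_2=d_Y(y,-)$. The only cosmetic difference is that for the direction $\lifttimes{\sigma}\sqsubseteq\Ncodlift{\times}{d_{\intv}}{\sigma}{2}$ you unfold the supremum over test maps directly from the third clause of \eqref{eq:star}, where the paper instead invokes the universal property of Proposition~\ref{prop:codlift_decent}; the underlying computation is the same.
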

For example, the modalities $\sigmant(a, b) \coloneqq 1-(1-a)(1-b)$, $\sigmap(a, b) \coloneqq (a+b)/2$, and $\sigma_\lor(a, b) \coloneqq \max(a, b)$ satisfy \eqref{eq:star},
\ifarxiv
see Appendix~\ref{ap:prop:sigma_negneg}.
\else
see \cite[Appendix A.2]{kori2024composing}.
\fi
The modalities $\sigmat(a, b) \coloneqq a \cdot b$ and $\sigma_\land(a, b) \coloneqq \min(a, b)$ do not satisfy the last condition of \eqref{eq:star}.
However, their codensity liftings are equal to those of $\sigmant$ and $\sigma_{\lor}$, respectively.
This follows from the following proposition.
\begin{proposition} \label{prop:isom_f}
  Let $f\colon \O \to \O$ be an isomorphism in $\BB$ satisfying $f^*\bO = \bO$.
  Then $\Ncodlift{T}{\bO}{\sigma}{N} = \Ncodlift{T}{\bO}{\sigma \circ T(f^N)}{N} = \Ncodlift{T}{\bO}{f \circ \sigma}{N}$.
  \qed
\end{proposition}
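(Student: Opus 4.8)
The plan is to argue directly from the explicit formula for the $N$-codensity lifting,
\[
  \Ncodlift{T}{\bO}{\sigma}{N}(P_1,\cdots,P_N) = \bigsqcap_{a\in A,\, k_i\in\EE(P_i,\bO(a))}\bigl(\sigma_a\circ T(pk_1,\cdots,pk_N)\bigr)^*\bO(a),
\]
treating the two asserted equalities separately. Since $A$ is discrete, the isomorphism $f\colon\O\to\O$ is just a family of isomorphisms $f_a\colon\O(a)\to\O(a)$, and the hypothesis $f^*\bO=\bO$ reads $f_a^*\bO(a)=\bO(a)$ for every $a$; all manipulations below are carried out fibrewise over a fixed $a\in A$, and the meet over $a$ is reassembled at the end.

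For the equality $\Ncodlift{T}{\bO}{\sigma}{N}=\Ncodlift{T}{\bO}{f\circ\sigma}{N}$ I would show that the two meets agree term by term, keeping the index set fixed. Using $(f\circ\sigma)_a=f_a\circ\sigma_a$ and the contravariant functoriality of reindexing, $(g\circ h)^*=h^*\circ g^*$, a generic summand of the right-hand lifting rewrites as
\[
  \bigl(f_a\circ\sigma_a\circ T(pk_1,\cdots,pk_N)\bigr)^*\bO(a) = \bigl(\sigma_a\circ T(pk_1,\cdots,pk_N)\bigr)^*\bigl(f_a^*\bO(a)\bigr),
\]
and the hypothesis $f_a^*\bO(a)=\bO(a)$ collapses this to the matching summand of $\Ncodlift{T}{\bO}{\sigma}{N}$. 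Taking meets yields the equality. Note that this direction uses only $f^*\bO=\bO$, and not the invertibility of $f$.

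For the equality $\Ncodlift{T}{\bO}{\sigma}{N}=\Ncodlift{T}{\bO}{\sigma\circ T(f^N)}{N}$ the modality sits on the other side of $T$, so I would instead reindex the index set. Since $(\sigma\circ T(f^N))_a=\sigma_a\circ T(f_a,\cdots,f_a)$, functoriality of $T$ turns a generic summand of the right-hand side into $\bigl(\sigma_a\circ T(f_a\circ pk_1,\cdots,f_a\circ pk_N)\bigr)^*\bO(a)$. The crucial point is that $f_a^*\bO(a)=\bO(a)$ lets $f_a$ lift to an invertible morphism $\overline{f_a}\colon\bO(a)\dto\bO(a)$ over $f_a$: in the posetal fibration both $f_a\colon\bO(a)\dto\bO(a)$ and $f_a^{-1}\colon\bO(a)\dto\bO(a)$ hold, because $f_a^*\bO(a)=\bO(a)$ forces $(f_a^{-1})^*\bO(a)=\bO(a)$, and faithfulness makes these two lifts mutually inverse. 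Hence post-composition $k_i\mapsto\overline{f_a}\circ k_i$ is a bijection of $\EE(P_i,\bO(a))$ onto itself (with inverse given by the lift of $f_a^{-1}$) satisfying $p(\overline{f_a}\circ k_i)=f_a\circ pk_i$. Consequently the tuples $(f_a\circ pk_1,\cdots,f_a\circ pk_N)$ range over exactly the tuples $(pk_1,\cdots,pk_N)$ as the $k_i$ vary, the reindexed meet is unchanged, and we recover $\Ncodlift{T}{\bO}{\sigma}{N}(P_1,\cdots,P_N)$.

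The one step I expect to demand care is this last bijection: showing that reindexing the index set by $\overline{f_a}$ is surjective is precisely where both hypotheses on $f$---being an isomorphism and satisfying $f^*\bO=\bO$---are used, since a merely monic $f$ would in general yield only an inequality between the two liftings. Everything else is routine bookkeeping with the functoriality of $T$ and of reindexing, together with the fibrewise decomposition over $a\in A$; as all summands match up exactly, the two meets coincide and both claimed equalities follow.
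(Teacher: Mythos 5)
Your proof is correct and matches the paper's own argument: the paper likewise handles $f\circ\sigma$ by rewriting $(f\circ\sigma\circ T(\vec{pk}))^*\bO = (\sigma\circ T(\vec{pk}))^*f^*\bO = (\sigma\circ T(\vec{pk}))^*\bO$, and handles $\sigma\circ T(f^N)$ by observing that composition with $f$ (and with $f^{-1}$, using $f^*\bO=\bO=(f^{-1})^*\bO$) permutes the indexing morphisms $k_i\colon P_i\dto\bO(a)$, so the two meets have the same set of summands. The only difference is presentational—the paper phrases both steps through the $\mathrm{Sp}/L/R$ decomposition of the codensity lifting, while you manipulate the explicit meet formula directly—but the underlying computations are identical.
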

Putting $f(a) \coloneqq 1-a$, the proposition above implies
$\Ncodlift{\times}{d_{\intv}}{\sigmat}{2} =
\Ncodlift{\times}{d_{\intv}}{\sigmant}{2}$ and
$\Ncodlift{\times}{d_{\intv}}{\sigma_\land}{2} =
\Ncodlift{\times}{d_{\intv}}{\sigma_\lor}{2}$ because
$\sigmat = f \circ \sigmant \circ (f \times f)$ and
$\sigma_{\land} = f \circ \sigma_{\lor} \circ (f \times f)$ hold.

 \section{Codensity Lifting of One-Step
  Composition Operations} \label{sec:liftability}
  As discussed at the end of
\S{}\ref{sec:coalg_bisim}, our focus is on
\emph{modalities that lift one-step composition operations for coalgebras}.
Technically, this is equivalent to the liftability of distributive law
$\lambda\colon T \circ F^N \to F \circ T$ with respect to codensity
liftings of $F, T$.
\begin{definition}
  Let $(\pfib p\EE\BB A\O\bO,F,\tau)$ be codensity bisimulation
  data, and $(T,\lambda)$ be an $N$-ary one-step composition on
  $F$-coalgebras, see \S{}\ref{sec:comp_f_coalg} for the definition.
  We say that
  \emph{a modality $\sigma\colon T \circ \cptpl\O N \to\O$ lifts $\lambda$ along $p$}
if $\lambda$ is liftable
  with respect to $ \codlift T\bO\sigma\circ(\codlift F\bO\tau)^N$ and
  $\codlift F\bO\tau\circ\codlift T\bO\sigma$.
\end{definition}
Recall that the liftability of $\alpha$ may be restricted to a class
$C$ of objects in $\EE$. This restriction only appears in
\S\ref{sec:bisim_metric}.

\subsection{Sufficient Condition for Liftability}
We provide a sufficient condition for the modality
  $\sigma$ to lift a distributive law. Our approach employs the
lifting of the distributive law $\lambda$ by the 2-functor
$\mathrm{Sp}$ introduced in Theorem \ref{th:2-fun}, and the decomposed
definition of the generalized codensity lifting \eqref{eq:decompgen}.

First, we restrict $\sigma$ so that $\lambda$ becomes a 2-cell
$\laxsl A\CAT$ of type
\begin{displaymath}
  (T,\sigma)\circ(F^N,\tau^N)\Rightarrow (F,\tau)\circ(T, \sigma).
\end{displaymath}
Its image by the 2-functor $\mathrm{Sp}$ yields a lifting
$\dot\lambda$ of $\lambda$, whose type is
\begin{displaymath}
  \Sp[A]T\sigma \circ \Sp[A]{F^N}{\tau^N}\Rightarrow
  \Sp[A]F\tau \circ \Sp[A]T\sigma.
\end{displaymath}
This lifting enables
us to {\em interchange} the center of two codensity
liftings that meets in the middle of their composition (below we put
$R\defeq \conc p\bO, L\defeq\abs p\bO, R_N\defeq \conc {p^N}{\cptpl\bO
  N}, L_N\defeq\abs{p^N}{\cptpl\bO N}$):
\begin{align}
  &\Ncodlift{T}{\bO}{\sigma}{N}\circ(\codlift{F}{\bO}{\tau})^N \nonumber \\
  &= \Ncodlift{T}{\bO}{\sigma}{N}\circ\codlift{F^N}{\cptpl\bO N}{\tau^N}
    \hfill \quad ((\codlift{F}{\bO}{\tau})^N = \codlift{F^N}{\cptpl\bO N}{\tau^N}) \nonumber\\
  &=R \circ \Sp[A]T\sigma \circ L_N \circ R_N \circ \Sp[A]{F^N}{\tau^N} \circ L_N \nonumber \\
  &\Rightarrow R \circ \Sp[A]T\sigma \circ \Sp[A]{F^N}{\tau^N} \circ L_N
    \hfill \quad (\text{by }L_N \dashv R_N) \nonumber \\
  &\Rightarrow R \circ \Sp[A]F\tau \circ \Sp[A]T\sigma \circ L_N. \label{eq:lastline}
\end{align}
Overall, the above
natural transformation is again a lifting of $\lambda$, because the
last natural transformation is a lifting of $id$.

Next, we impose the last line \eqref{eq:lastline} to be equal to the
composition $\codlift{F}{\bO}{\tau}\circ\Ncodlift{T}{\bO}{\sigma}{N}$
of codensity liftings. Overall, we obtain a desired lifting of
$\lambda$ with respect to
$\Ncodlift{T}{\bO}{\sigma}{N}\circ(\codlift{F}{\bO}{\tau})^N$ and
$\codlift{F}{\bO}{\tau}\circ\Ncodlift{T}{\bO}{\sigma}{N}$.

\begin{theorem}[sufficient condition] \label{thm:suf_lift_dist}
  Let $(\pfib p\EE\BB A\O\bO,F,\tau)$ be codensity bisimulation data
  and $(T,\lambda)$ be an $N$-ary composition operation on
  $F$-coalgebras.  Then a modality $\sigma$ lifts $\lambda$ along $p$
  if the following conditions hold.
  \begin{enumerate}
  \item \label{item:2-cell} $\sigma$ makes $\lambda$ a 2-cell of type
    $(T,\sigma)\circ(F^N,\tau^N) \Rightarrow (F,\tau)\circ(T,\sigma)$
    in $\laxsl A\CAT$, and
  \item \label{item:approximating}
    $\conc p\bO \circ \Sp[A]F\tau \circ \Sp[A]T\sigma \circ
    \abs{p^N}{\cptpl\bO N} =
    \codlift{F}{\bO}{\tau}\circ\Ncodlift{T}{\bO}{\sigma}{N}$.
\qed
  \end{enumerate}
\end{theorem}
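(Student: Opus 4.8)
\textit{Proof sketch.} The plan is to construct the unique lifting of $\lambda$ by hand, following the chain of $2$-cells displayed just before the statement and then invoking condition~\ref{item:approximating} to identify its codomain. As above I abbreviate $R\defeq\conc p\bO$, $L_N\defeq\abs{p^N}{\cptpl\bO N}$ and $R_N\defeq\conc{p^N}{\cptpl\bO N}$.

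First I would turn $\lambda$ into a lifted natural transformation at the level of $\mathrm{Sp}$. Condition~\ref{item:2-cell} says precisely that $\lambda$ is a $2$-cell $(T,\sigma)\circ(F^N,\tau^N)\Rightarrow(F,\tau)\circ(T,\sigma)$ in $\laxsl A\CAT$. Applying the $2$-functor $\SP[A]$ of Theorem~\ref{th:2-fun} and reading off its action on $1$- and $2$-cells then yields a $2$-cell $\dot\lambda$ in $\CAT^\clat$, i.e.\ a natural transformation $\Sp[A]T\sigma\circ\Sp[A]{F^N}{\tau^N}\Rightarrow\Sp[A]F\tau\circ\Sp[A]T\sigma$ that lifts $\lambda$ along the relevant $\mathrm{Sp}$-fibrations. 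This is the sole use of condition~\ref{item:2-cell}; $2$-functoriality guarantees both the existence of $\dot\lambda$ and that it sits over $\lambda$.

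Next I assemble the chain. Using the identity $(\codlift F\bO\tau)^N=\codlift{F^N}{\cptpl\bO N}{\tau^N}$ together with the decomposition \eqref{eq:decompgen} of both $\Ncodlift{T}{\bO}{\sigma}{N}$ and $\codlift{F^N}{\cptpl\bO N}{\tau^N}$, I rewrite $\Ncodlift{T}{\bO}{\sigma}{N}\circ(\codlift F\bO\tau)^N$ as $R\circ\Sp[A]T\sigma\circ L_N\circ R_N\circ\Sp[A]{F^N}{\tau^N}\circ L_N$, in which the adjoint pair $L_N\dashv R_N$ meets in the middle. I then form the two $2$-cells displayed before the statement: contracting the central $L_NR_N$ by the counit of $L_N\dashv R_N$ (a vertical $2$-cell, hence over $\id$), and interchanging the two central functors by $\dot\lambda$ (over $\lambda$), each whiskered on the outside by $R$ and $L_N$. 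Since all the constituents are $2$-cells of $\CAT^\clat$, their vertical composite is again one; and since the two equalities and the counit all lift $\id$ while $\dot\lambda$ lifts $\lambda$, the base component of the composite is $\lambda$. Hence the chain is a genuine lifting of $\lambda$, from $\Ncodlift{T}{\bO}{\sigma}{N}\circ(\codlift F\bO\tau)^N$ to the functor $R\circ\Sp[A]F\tau\circ\Sp[A]T\sigma\circ L_N$ appearing in \eqref{eq:lastline}.

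Finally I invoke condition~\ref{item:approximating}, which is exactly the equality $R\circ\Sp[A]F\tau\circ\Sp[A]T\sigma\circ L_N=\codlift F\bO\tau\circ\Ncodlift{T}{\bO}{\sigma}{N}$. Rewriting the codomain of the chain accordingly exhibits a natural transformation lifting $\lambda$ with respect to $\Ncodlift{T}{\bO}{\sigma}{N}\circ(\codlift F\bO\tau)^N$ and $\codlift F\bO\tau\circ\Ncodlift{T}{\bO}{\sigma}{N}$; by faithfulness of $\cod_{p^N,p}$ it is the unique such lifting, so $\sigma$ lifts $\lambda$ along $p$. I expect the main obstacle to be the careful bookkeeping of the middle step: verifying that each constituent sits in $\CAT^\clat(p^N,p)$ over the correct base $1$-cell and that the vertical composite accumulates to $\lambda$ rather than to some other transformation. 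The genuinely nontrivial input is condition~\ref{item:approximating}: without it the chain only produces an inequality into the auxiliary functor $R\circ\Sp[A]F\tau\circ\Sp[A]T\sigma\circ L_N$, and it is this equality that forces the interchanged composite back onto an honest composition of codensity liftings.
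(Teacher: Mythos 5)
Your proposal is correct and follows essentially the same route as the paper: the paper's proof is precisely the chain of $2$-cells displayed before the theorem, namely applying $\SP[A]$ to condition~\ref{item:2-cell} to obtain $\dot\lambda$, decomposing the codensity liftings so that $L_N\dashv R_N$ meets in the middle, contracting by the counit and interchanging by $\dot\lambda$ (both over the correct base transformations), and finally using condition~\ref{item:approximating} to identify the codomain with $\codlift F\bO\tau\circ\Ncodlift{T}{\bO}{\sigma}{N}$. Your added remark on uniqueness via faithfulness of $\cod_{p^N,p}$ is consistent with the paper's definition of liftability and introduces no gap.
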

The first condition is equivalent to
$\sigma_a \circ T(\fami{\tau_a}{i}{N}) = \tau_a \circ F\sigma_a \circ \lambda$ for each $a \in A$,
and it induces $\Sp[A]T\sigma \circ \Sp[A]{F^N}{\tau^N} = \lambda^* \Sp[A]F\tau \circ \Sp[A]T\sigma$.

We investigate the second condition.  First, it can be expressed as
``$\Sp[A]T\sigma \circ \abs{p^N}{\cptpl\bO N}(P)$ is approximating to
$\codlift{F}{\bO}{\tau}$ for each $P \in \EE^N$'' by the concept of
{\em approximating families} introduced by Komorida et
al.~\cite{DBLP:conf/lics/KomoridaKKRH21}
for expressivity of coalgebraic modal logics
(in a restricted setting where
$\bO$ is constant on $A$).

\begin{definition} \label{def:approximating}
An object $S \in \Sp[A]{\BB}{\O}$ is \emph{approximating} to
  the codensity lifting $\codlift{F}{\bO}{\tau}$ if
  \begin{displaymath}
    \conc p\bO \circ \Sp[A]F\tau(S) \sqsubseteq \conc p\bO \circ \Sp[A]F\tau\circ \abs{p}{\bO} \circ \conc{p}{\bO}(S).
  \end{displaymath}
That is,
  for each $a \in A$ and $k\in \EE(\bigsqcap_{a' \in A, k' \in S_{a'}} k'^*\bO(a'), \bO(a))$,
$\bigsqcap_{a' \in A, k' \in S_{a'}} (\tau_{a'} \circ Fk')^*\bO(a') \sqsubseteq (\tau_a \circ Fpk)^*\bO(a)$.
\end{definition}
It is equivalent to
$\conc p\bO \circ \Sp[A]F\tau(S) = \conc p\bO \circ \Sp[A]F\tau\circ \abs{p}{\bO} \circ \conc{p}{\bO}(S)$ by the counit of $\abs{p}{\bO} \dashv \conc {p}{\bO}$.
Thus \eqref{item:approximating} in Theorem~\ref{thm:suf_lift_dist} can be written as
``$\Sp[A]T\sigma \circ \abs{p^N}{\cptpl\bO N}(P)$ is approximating to $\codlift{F}{\bO}{\tau}$ for each $P \in \EE^N$''.

Next we show
a result
for the approximating condition
that is applicable
when we can express
$k \in \EE(\bigsqcap_{a' \in A, k' \in S_{a'}}k'^*\bO(a'), \bO(a))$
by a subset of $S_a$.

\begin{proposition} \label{prop:lift_lambda_join}
  Assume that
each hom-poset $(\BB(X,\Omega),\leq_X)$ is a complete lattice,
  and write $\bigvee$ for the join.
  Then $(X, S) \in \Sp[A]{\BB}{\O}$ is approximating to $\codlift{F}{\bO}{\tau}$
  if all of the following hold, for each $a \in A$:
  \begin{enumerate}
  \item For each $S' \subseteq \mathbb{C}(X', \Omega(a))$,
    $\tau_a \circ F \bigvee_{f \in S'}f = \bigvee_{f \in S'} \tau_a
    \circ Ff$.
  \item For each $S' \subseteq \mathbb{C}(X', \Omega(a))$,
    $\bigsqcap_{f \in S'} f^*\bO \sqsubseteq (\bigvee_{f \in S'} f)^*
    \bO(a)$.
  \item For each
    $k\in \EE(\bigsqcap_{a' \in A, k' \in S_{a'}}k'^*\bO(a'),
    \bO(a))$, there exists $S'_k \subseteq S_a$
    s.t.~$pk=\bigvee_{k' \in S'_k} k'$.
  \end{enumerate}
  Additionally, if these conditions hold,
  $(X, S_a) \in \Sp[1]{\BB}{\O(a)}$ is approximating to $\codlift{F}{\bO(a)}{\tau_a}$
  for each $a \in A$.
  \qed
\end{proposition}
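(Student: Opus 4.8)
The plan is to unfold the defining inequality of \emph{approximating} from Definition~\ref{def:approximating} and verify it fibrewise, feeding the three hypotheses in one at a time. Write $P \defeq \conc p\bO(X,S) = \bigsqcap_{a'\in A,\,k'\in S_{a'}} k'^*\bO(a')$, an object over $X$, and abbreviate the left-hand side of the target inequality by $Q \defeq \bigsqcap_{a'\in A,\,k'\in S_{a'}}(\tau_{a'}\circ Fk')^*\bO(a')$, an object over $FX$. By Definition~\ref{def:approximating} the goal is exactly to show, for every $a\in A$ and every $k\in\EE(P,\bO(a))$, the inequality in $\EE_{FX}$
\[ Q \;=\; \bigsqcap_{a'\in A,\,k'\in S_{a'}}(\tau_{a'}\circ Fk')^*\bO(a') \;\sqsubseteq\; (\tau_a\circ Fpk)^*\bO(a). \]

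First I would fix such an $a$ and $k$ and invoke hypothesis (3) to produce a subfamily $S'_k\subseteq S_a$ with $pk=\bigvee_{k'\in S'_k}k'$, a join in the complete lattice $\BB(X,\O(a))$. Hypothesis (1), applied to $S'_k$, then commutes the modality past this join: $\tau_a\circ Fpk = \tau_a\circ F\bigvee_{k'\in S'_k}k' = \bigvee_{k'\in S'_k}\tau_a\circ Fk'$, now a join in $\BB(FX,\O(a))$. Next, hypothesis (2) applied to the family $\{\,\tau_a\circ Fk' : k'\in S'_k\,\}\subseteq\BB(FX,\O(a))$ gives
\[ \bigsqcap_{k'\in S'_k}(\tau_a\circ Fk')^*\bO(a) \;\sqsubseteq\; \bigl(\textstyle\bigvee_{k'\in S'_k}\tau_a\circ Fk'\bigr)^*\bO(a) \;=\; (\tau_a\circ Fpk)^*\bO(a). \]
Finally, since $S'_k\subseteq S_a$, the full meet $Q$ lies below the partial meet $\bigsqcap_{k'\in S'_k}(\tau_a\circ Fk')^*\bO(a)$ (a meet over a smaller index is larger); chaining this with the previous line closes the main claim.

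For the additional single-parameter assertion I would reduce to the same chain after a domain adjustment. Set $P_a\defeq\bigsqcap_{k'\in S_a}k'^*\bO(a)$; since $P$ is a meet over strictly more generators we have $P\sqsubseteq P_a$ in $\EE_X$, so the $\clat$-fibration supplies a vertical comparison $\iota\colon P\dto P_a$. Given any $k\in\EE(P_a,\bO(a))$ as required by the approximating condition for $(X,S_a)$, precomposition yields $\tilde k\defeq k\circ\iota\in\EE(P,\bO(a))$ with $p\tilde k=pk$. Hypothesis (3) now applies to $\tilde k$, giving $pk=\bigvee_{k'\in S'_k}k'$ with $S'_k\subseteq S_a$, and the identical use of (1) and (2) together with $\bigsqcap_{k'\in S_a}(\cdots)\sqsubseteq\bigsqcap_{k'\in S'_k}(\cdots)$ produces $\bigsqcap_{k'\in S_a}(\tau_a\circ Fk')^*\bO(a)\sqsubseteq(\tau_a\circ Fpk)^*\bO(a)$, which is precisely the approximating condition for $(X,S_a)$ with respect to $\codlift{F}{\bO(a)}{\tau_a}$.

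I expect the main obstacle to be bookkeeping rather than anything conceptual: one must keep straight over which base object each meet and join lives — the joins of the $k'$ occur in $\BB(X,\O(a))$, while the join of the $\tau_a\circ Fk'$ and the resulting meet of pullbacks live over $FX$ — and one must apply each hypothesis with the correct choice of $X'$ and $S'$. The one genuinely delicate point is the last paragraph's factorization through $P\sqsubseteq P_a$: hypothesis (3) is phrased for morphisms out of the full $A$-meet $P$, not out of the single-index meet $P_a$, and it is the vertical comparison map $\iota$ furnished by the fibration that bridges this discrepancy and lets (3) be invoked legitimately.
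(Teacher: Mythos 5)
Your proof is correct and takes essentially the same route as the paper's: the appendix proves exactly your chain of inequalities (restrict the meet to $S'_k$, apply condition (2), then condition (1), then condition (3)) as a standalone lemma, and derives both claims from it. If anything, you are slightly more careful on the final single-parameter claim: your explicit precomposition with the comparison $\iota\colon P\dto P_a$ makes rigorous a step the paper leaves implicit, since its lemma is stated for morphisms out of the full meet $\conc{p}{\bO}(S)$ and the passage to morphisms out of $P_a$ is silently absorbed into monotonicity of $\conc{p}{\bO(a)}\circ\Sp[1]F{\tau_a}\circ\abs{p}{\bO(a)}$.
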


As a side note, when we prove the sufficient condition by
Proposition~\ref{prop:lift_lambda_join}, we get the liftability of
$\lambda$ with respect to the composition of not only $\Ncodlift{T}{\bO}{\sigma}{N}$ and
$\codlift{F}{\bO}{\tau}$ but also $\Ncodlift{T}{\bO}{\sigma}{N}$ and
$\codlift{F}{\bO(a)}{\tau_a}$ for each $a \in A$.  The latter is
stronger than the former.  It enables us to compose bisimulations with
respect to each $a \in A$.
\begin{lemma}
  Let $\alpha$ be a natural transformation $F \Rightarrow G$.
  Then $\alpha$ is a 2-cell $(F,\tau) \Rightarrow (G,\upsilon)$ in $\laxsl A\CAT$
  if and only if
  $\alpha$ is a 2-cell $(F, \tau_a) \Rightarrow (G, \upsilon_a)$ in $\laxsl 1 \CAT$ for each $a \in A$.
  \qed
\end{lemma}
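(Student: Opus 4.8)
The plan is to unfold the definition of a 2-cell in $\laxsl A\CAT$ and observe that, because $A$ is discrete, the defining equation splits as an independent family of equations indexed by $a\in A$. Concretely, a 2-cell $\alpha\colon (F,\tau)\Rightarrow(G,\upsilon)$ in $\laxsl A\CAT$ is a natural transformation $\alpha\colon F\to G$ subject to the triangle condition $\upsilon\bul(\alpha\circ\O)=\tau$, an equation between natural transformations $F\circ\O\Rightarrow\P$ (where $\alpha\circ\O$ denotes the whiskering of $\alpha$ by $\O$, in the same sense as the whiskerings $\alpha\circ p$ used in \S\ref{sec:liftings}). First I would rewrite this equation pointwise: its component at $a\in A$ is $\upsilon_a\circ\alpha_{\O(a)}=\tau_a$.

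The key step is to note that since $A$ carries only identity morphisms, a natural transformation with domain $\O\colon A\to\CC$ is nothing but an $A$-indexed family of morphisms carrying no naturality constraints linking distinct indices; hence the single equation $\upsilon\bul(\alpha\circ\O)=\tau$ is equivalent to the conjunction over $a\in A$ of the component equations $\upsilon_a\circ\alpha_{\O(a)}=\tau_a$. It then remains to recognise each such component equation as exactly the 2-cell condition in $\laxsl 1\CAT$ for $\alpha\colon(F,\tau_a)\Rightarrow(G,\upsilon_a)$, viewed as a 2-cell between the 1-cells $(\CC,\O(a))\to(\DD,\P(a))$ determined by $\tau_a,\upsilon_a$; here $\O(a),\P(a)$ are regarded as functors $1\to\CC$ and $1\to\DD$, and the whiskering of $\alpha$ by $\O(a)$ has the single component $\alpha_{\O(a)}$. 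Putting these identifications together yields the claimed equivalence in both directions simultaneously.

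The only real bookkeeping, and the place I would be most careful, is computing the component of the whiskered composite $\upsilon\bul(\alpha\circ\O)$ at $a$ and confirming it equals $\upsilon_a\circ\alpha_{\O(a)}$, together with checking that the source and target 0-cells in the $\laxsl 1\CAT$ statement are precisely $(\CC,\O(a))$ and $(\DD,\P(a))$, so that $\tau_a\colon F\O(a)\to\P(a)$ and $\upsilon_a\colon G\O(a)\to\P(a)$ are well-typed 1-cells there. There is no genuine obstacle: the content of the lemma is entirely the observation that discreteness of $A$ prevents any coherence condition from coupling the components, so that the $\laxsl A\CAT$ condition is literally the family of $\laxsl 1\CAT$ conditions.
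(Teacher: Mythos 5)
Your proof is correct and is exactly the argument the paper intends: the lemma is stated there with no proof because it is immediate, and your route --- unfolding the triangle condition $\upsilon\bul(\alpha\circ\O)=\tau$ and using discreteness of $A$ to split it into the independent component equations $\upsilon_a\circ\alpha_{\O(a)}=\tau_a$, each of which is the 2-cell condition in $\laxsl 1\CAT$ over the 0-cells $(\CC,\O(a))$ and $(\DD,\P(a))$ --- is precisely that definition-unfolding. One remark: the paper's definition of 2-cells in $\laxsl A\CAT$ is textually truncated (the ``such that'' clause is left empty), but the condition you reconstructed is the one the paper actually uses, as seen in the componentwise reformulation of condition (1) stated right after Theorem~\ref{thm:suf_lift_dist}.
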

\begin{proposition} \label{prop:lift_dist_a} If a distributive law
  $\lambda\colon T \circ F^N \Rightarrow F \circ T$ is liftable
  w.r.t.~$\Ncodlift{T}{\bO}{\sigma}{N}\circ (\codlift{F}{\bO(a)}{\tau_a})^N$
  and
  $\codlift{F}{\bO(a)}{\tau_a} \circ \Ncodlift{T}{\bO}{\sigma}{N}$ for each
  $a \in A$, then $\sigma$ lifts $\lambda$.
\qed
\end{proposition}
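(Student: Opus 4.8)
The plan is to reduce the $A$-indexed liftability we want to the per-index liftability that the hypothesis provides, exploiting two structural facts: codensity liftings decompose as meets indexed by $A$, and in a $\clat$-fibration both reindexing and the liftings involved interact well with meets.

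First I would recall the meet-decomposition of codensity liftings (the discussion preceding Proposition~\ref{prop:codlift_omega}): reorganising the defining meet over pairs $(a,k)$ as a meet over $a\in A$ of single-index liftings gives $\codlift F\bO\tau=\bigsqcap_{a\in A}\codlift F{\bO(a)}{\tau_a}$. Abbreviate $D\defeq\Ncodlift T\bO\sigma N\circ(\codlift F\bO\tau)^N$ and $E\defeq\codlift F\bO\tau\circ\Ncodlift T\bO\sigma N$, and for each $a\in A$ put $D_a\defeq\Ncodlift T\bO\sigma N\circ(\codlift F{\bO(a)}{\tau_a})^N$ and $E_a\defeq\codlift F{\bO(a)}{\tau_a}\circ\Ncodlift T\bO\sigma N$. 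Fixing $P=(P_1,\dots,P_N)\in\EE^N$ and writing $\ell$ for the base component $\lambda_{p^NP}$ of the distributive law (which is independent of $a$), the hypothesis states precisely that $D_a(P)\sqsubseteq\ell^*(E_a(P))$ for every $a\in A$, and the goal is $D(P)\sqsubseteq\ell^*(E(P))$, i.e.\ $\ell\colon D(P)\dto E(P)$.

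On the domain side, since a meet lies below each of its factors we have $\codlift F\bO\tau P_i\sqsubseteq\codlift F{\bO(a)}{\tau_a}P_i$ in the fiber over $F(pP_i)$ for every $i$; because $\Ncodlift T\bO\sigma N$ is a lifting in a $\clat$-fibration it is monotone (a functor of posetal fibrations carries vertical inequalities to vertical inequalities, and monotonicity also follows from the meet-preservation in Theorem~\ref{th:2-fun}), so applying it to these componentwise inequalities yields $D(P)\sqsubseteq D_a(P)$, whence $D(P)\sqsubseteq\ell^*(E_a(P))$ for every $a$. On the codomain side, applying the meet-decomposition at the object $\Ncodlift T\bO\sigma N(P)$ gives $E(P)=\bigsqcap_{a\in A}E_a(P)$, and since every pullback functor $\ell^*$ in a $\clat$-fibration preserves all meets we obtain $\ell^*(E(P))=\bigsqcap_{a\in A}\ell^*(E_a(P))$. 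Taking the meet over $a$ of the inequalities $D(P)\sqsubseteq\ell^*(E_a(P))$ then gives $D(P)\sqsubseteq\bigsqcap_{a\in A}\ell^*(E_a(P))=\ell^*(E(P))$. As $P$ was arbitrary, $\lambda$ is liftable with respect to $D$ and $E$, which is exactly the assertion that $\sigma$ lifts $\lambda$.

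The argument is essentially pure meet-manipulation, so I do not expect a deep obstacle; the points that need care are bookkeeping ones. I would make sure that the base morphism $\ell=\lambda_{p^NP}$ really is common to all the per-index liftability statements, so that a single $\ell^*$ can be factored out of the meet, and that on the domain side the two tuples fed into $\Ncodlift T\bO\sigma N$ lie in the same fiber of $\EE^N$ (both over $(F(pP_i))_i$), so that the monotonicity step is a genuine fiberwise comparison in the product order. Both become evident once the liftings and the product order on $\EE^N$ are unwound.
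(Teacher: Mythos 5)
Your proposal is correct and follows essentially the same route as the paper: the paper's proof is the chain $\Ncodlift{T}{\bO}{\sigma}{N}\circ(\codlift{F}{\bO}{\tau})^N \Rightarrow \bigsqcap_{a\in A}\Ncodlift{T}{\bO}{\sigma}{N}\circ(\codlift{F}{\bO(a)}{\tau_a})^N \Rightarrow \bigsqcap_{a\in A}\codlift{F}{\bO(a)}{\tau_a}\circ\Ncodlift{T}{\bO}{\sigma}{N} = \codlift{F}{\bO}{\tau}\circ\Ncodlift{T}{\bO}{\sigma}{N}$, with the first arrow above $\mathrm{id}$ and the second above $\lambda$, which is exactly your meet-decomposition-plus-monotonicity argument on the domain side and your hypothesis-plus-meet-preservation argument on the codomain side, just stated at the level of natural transformations instead of fiberwise inequalities. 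Your unwinding (including the observation that $\ell=\lambda_{p^NP}$ is independent of $a$ and that $\ell^*$ preserves meets) is a faithful, more explicit rendering of the same proof.
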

\subsubsection{Example: Bisimulation for $\pow$-coalgebras}
\label{eg:comp_kripke}
This example aims to show the composition of standard bisimulations.

  \paragraph{Codensity bisimulation data} We
  take the \ptfib{} $\pfibU p\EqRel\Set 1{2}{\Eq[2]}$ and the covariant
  powerset functor $\pow$ (Example \ref{ex:Kripke_coalg}) with the
  modality $\diamond\colon \pow 2 \to 2$ (Example
  \ref{ex:codenlift_pow}) to form codensity bisimulation data.  We
  have $\codlift\pow {\Eq[2]}\diamond=\dot\pow$ (Example
  \ref{ex:Kripke_bisim}).

  \paragraph{Binary one-step composition operation for
    $\pow$-coalgebras} We adopt $(\times,\distpow)$ in Example
  \ref{ex:Kripke_composition}; recall that the distributive law
  $\distpow$ is given by $ \distpow_{X,Y}(A, B)=A\times B$.

  \paragraph{Modality for $\times$} We adopt
  the modality $\land\colon 2\times2\to 2$ in Example \ref{eg:ncod_times}. The codensity lifting
  $\codlift\times {\Eq[2]} \land$ coincides with the binary product of
  predicates as in Example~\ref{eg:ncod_times}.

  \begin{proposition}\label{pp:kripke_prod_erel}
    The modality $\wedge$ lifts $\distpow$ along $\pfgt{\EqRel}$.
    \qed
  \end{proposition}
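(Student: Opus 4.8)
The plan is to verify the two hypotheses of Theorem~\ref{thm:suf_lift_dist} for the codensity bisimulation data consisting of $\pfgt{\EqRel}$ with $\O = 2$ and $\bO = \Eq[2]$, the behavior functor $\pow$ and modality $\diamond$ (Examples~\ref{ex:codenlift_pow} and~\ref{ex:Kripke_bisim}), the one-step composition $(\times,\distpow)$ of Example~\ref{ex:Kripke_composition}, and the candidate modality $\sigma=\wedge$ of Example~\ref{eg:ncod_times}; liftability of $\distpow$ then follows immediately. Since $A = 1$ here, the data reduces to the single modality $\diamond\colon\pow 2\to 2$ for $\pow$ and the single modality $\wedge\colon 2\times 2\to 2$ for $\times$, so all the $A$-indexed meets appearing in the theorem collapse to a single index.

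First I would discharge Condition~\ref{item:2-cell}. By the reformulation stated just after Theorem~\ref{thm:suf_lift_dist}, for $A = 1$ this is the single equation $\wedge\circ(\diamond\times\diamond) = \diamond\circ\pow(\wedge)\circ\distpow$ of maps $\pow 2\times\pow 2\to 2$, which I would check by evaluation: on $(U,V)$ the left-hand side returns $\ttt$ iff $\ttt\in U$ and $\ttt\in V$, while the right-hand side returns $\diamond(\{u\wedge v\mid u\in U,\ v\in V\})$, which is $\ttt$ iff some $u\in U$ and $v\in V$ have $u\wedge v=\ttt$, i.e.\ again iff $\ttt\in U$ and $\ttt\in V$. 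Hence $\distpow$ is a $2$-cell $(\times,\wedge)\circ(\pow^2,\diamond^2)\Rightarrow(\pow,\diamond)\circ(\times,\wedge)$ in $\laxsl 1\CAT$.

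The substance of the argument is Condition~\ref{item:approximating}, the approximating condition, which I would establish through Proposition~\ref{prop:lift_lambda_join}. Fix $(P_1,P_2)\in\EqRel^2$ over carriers $X_1,X_2$ and set $(X_1\times X_2, S)\defeq \Sp[1]\times\wedge(\abs{p^2}{\cptpl{\Eq[2]}{2}}(P_1,P_2))$; unfolding the definitions, $S$ is the set of maps $(x_1,x_2)\mapsto k_1(x_1)\wedge k_2(x_2)$ with each $k_i\colon X_i\to 2$ respecting $P_i$, i.e.\ the indicator maps of ``rectangles'' $A_1\times A_2$ with $A_i$ a $P_i$-saturated subset. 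The hom-posets $\Set(X,2)\cong\pow X$ are complete lattices with joins computed pointwise, so the complete-lattice hypothesis holds. Conditions (1) and (2) of Proposition~\ref{prop:lift_lambda_join} are then routine and hold for every $X'$: condition (1) is preservation of joins by $\diamond$ (both $\diamond\circ\pow(\bigvee_{f}f)$ and $\bigvee_f\diamond\circ\pow f$ send $U$ to $\ttt$ exactly when some $f$ and some $x\in U$ satisfy $f(x)=\ttt$), and condition (2) says that a pair identified by every $f\in S'$ is identified by the pointwise join $\bigvee_{f}f$, which is immediate.

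The main obstacle is the third condition of Proposition~\ref{prop:lift_lambda_join}, which is where the specific shape of $S$ matters. I would first identify the meet $\bigsqcap_{f\in S}f^*\Eq[2]$ with the product equivalence relation $P_1\dtimes P_2$: taking $k_2$ constantly $\ttt$ shows the meet refines $P_1$ in the first coordinate, symmetrically for the second, while every rectangle map respects $P_1\dtimes P_2$. Consequently a morphism $k$ from this meet object into $\Eq[2]$ is exactly a map $pk\colon X_1\times X_2\to 2$ whose true-set $W\defeq(pk)^{-1}(\ttt)$ is $(P_1\dtimes P_2)$-saturated, hence a union of product-classes $[x_1]_{P_1}\times[x_2]_{P_2}$. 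Each such class is the true-set of the rectangle map $\wedge\circ(\chi_{[x_1]}\times\chi_{[x_2]})\in S$, so $pk=\bigvee_{k'\in S'_k}k'$ for $S'_k\subseteq S$ the set of these maps, which is the required decomposition. With all three conditions verified, Proposition~\ref{prop:lift_lambda_join} yields that $S$ is approximating to $\codlift\pow{\Eq[2]}\diamond$ for every $(P_1,P_2)$, which is precisely Condition~\ref{item:approximating}; Theorem~\ref{thm:suf_lift_dist} then concludes that $\wedge$ lifts $\distpow$ along $\pfgt{\EqRel}$.
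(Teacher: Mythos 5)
Your proposal is correct and follows essentially the same route as the paper's own proof: both verify condition (1) of Theorem~\ref{thm:suf_lift_dist} by the direct computation $\diamond U\wedge\diamond V=\diamond\{u\wedge v\mid u\in U,\,v\in V\}$, and both discharge the approximating condition via Proposition~\ref{prop:lift_lambda_join}, using for its third condition exactly the same decomposition $S'_k=\{\wedge\circ(\chi_{[x]_{P_1}}\times\chi_{[y]_{P_2}})\mid k(x,y)=\ttt\}$ by characteristic functions of equivalence classes. Your additional step of explicitly identifying $\bigsqcap_{f\in S}f^*\Eq[2]$ with $P_1\dtimes P_2$ is a detail the paper leaves implicit (it follows from Example~\ref{eg:ncod_times}), but the argument is the same.
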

  This proposition, together with Corollary~\ref{cor:presbisim}, implies
  that for two $\pow$-coalgebras $\coalg\pow {c_1}{X_1}$ and
  $\coalg\pow{c_2}{X_2}$, if $x_i$ and $x_i'$ are
  $\codlift\pow{\Eq[2]}\diamond$-bisimilar in $c_i$ ($i \in \{1, 2\}$),
  the pairs $(x_1, x_2)$ and $(x_1', x_2')$ are
  $\codlift\pow{\Eq[2]}\diamond$-bisimilar in the composed $\pow$-coalgebra
  $\distpow \circ (c_1 \times c_2)$.

  The proof of Proposition \ref{pp:kripke_prod_erel} is the following:
  $\wedge$ satisfies two conditions in Theorem~\ref{thm:suf_lift_dist}:
  1) For each $A, B \subseteq 2$,
  $\diamond A \land \diamond B = \diamond \{a \land b \mid a \in A, b
  \in B\}$.  2) As discussed before Definition~\ref{def:approximating}, it is
  equivalent to approximating objects
  $\{\land \circ (k_P \times k_Q) \mid k_P \colon P \to \Eq[2], k_Q
  \colon Q \to \Eq[2]\}$ for each $P, Q \in \mathbf{EqRel}$.  We can
  prove it by Proposition~\ref{prop:lift_lambda_join} with the ordered
  object $(2, \leq)$ where $\leq$ is the order satisfying
  $\mathrm{false} \leq \mathrm{true}$.  One easily verifies the first
  and second condition in Proposition~\ref{prop:lift_lambda_join}.  For the
  third condition, for each $a \in A$ and
  $k\colon \Ncodlift{\times}{\Eq[2]}{\land}{N}(P, Q) \to \Eq[2]$, we
  define
  $S_k' \coloneqq \{\land \circ (\chi_{[x]_P} \times \chi_{[y]_Q})
  \mid k(x, y) = 1 \}$ where $\chi_{[x]_P}, \chi_{[y]_Q}$ are
  characteristic functions.  Then $pk = \bigvee_{k' \in S'_k} k'$
  holds.

\subsubsection{Example: Language Equivalence of Deterministic Automata}
\label{eg:lang_equiv}
We next capture language equivalence of deterministic automata using
  codensity bisimilarity.
  \paragraph{Codensity bisimulation data}
  They are set-up in Example \ref{ex:codlift_Fda}. We package the
  \ptfib{} $\pfibU p \EqRel \Set{\Sigma\uplus\{\epsilon\}} \O {\bO}$,
  the behavior functor $\Fda$ and the modality $\tau$ there into
  codensity bisimulation data.

  \paragraph{Binary one-step composition operation for $\Fda$-coalgebras}
  We adopt the one $(\times,\distFda)$ in Example \ref{ex:detaut_composition}.

  \paragraph{Modality for $\times$}
  We adopt the modality $\sigma$ given by $\sigma_a\defeq \land$ (the
  logical conjunction; here $a\in \Sigma\uplus\{\epsilon\}$).  The
  binary codensity lifting $\Ncodlift{\times}{\Eq[2]}{\land}{2}$ is equal
  to that in Example~\ref{eg:ncod_times}, so that
  $\Ncodlift{\times}{\Eq[2]}{\land}{2} = \dtimes$.
  \begin{proposition}\label{pp:langeq}
    The modality $\sigma$ lifts $\distFda$ along $\pfgt{\EqRel}$.
    \qed
  \end{proposition}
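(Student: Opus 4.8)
The plan is to verify the sufficient condition of Theorem~\ref{thm:suf_lift_dist}, following the same template as the proof of Proposition~\ref{pp:kripke_prod_erel}, but now over the non-trivial index set $A=\Sigma\uplus\{\epsilon\}$ and with the two shapes of modality $\tau_a$. Thus I must check two things: that $\sigma$ makes $\distFda$ a $2$-cell of type $(\times,\sigma)\circ(\Fda^2,\tau^2)\Rightarrow(\Fda,\tau)\circ(\times,\sigma)$ in $\laxsl A\CAT$ (condition~\ref{item:2-cell}), and that $\Sp[A]{\times}{\sigma}\circ\abs{p^2}{\cptpl{\bO}{2}}(P_1,P_2)$ is approximating to $\codlift{\Fda}{\bO}{\tau}$ for every $(P_1,P_2)\in\EqRel^2$ (condition~\ref{item:approximating}).

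For the $2$-cell condition I would use its pointwise reformulation $\land\circ(\tau_a\times\tau_a)=\tau_a\circ\Fda(\land)\circ\distFda$ and simply evaluate both sides at a pair $((t_1,\rho_1),(t_2,\rho_2))\in\Fda(2)\times\Fda(2)$, splitting on the value of $a$. For $a=\epsilon$ both sides collapse to $t_1\land t_2$ (on the left $\tau_\epsilon$ reads off the acceptance bit of each component before conjunction; on the right $\distFda$ already produces the bit $t_1\land t_2$ which $\tau_\epsilon$ then selects). For $a\in\Sigma$ both sides collapse to $\rho_1(a)\land\rho_2(a)$. These are immediate from the explicit formulas for $\distFda$ and $\tau$.

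For the approximating condition I would instantiate Proposition~\ref{prop:lift_lambda_join} with the complete lattice $(\Set(X,2),\leq)$ induced by $\fff\leq\ttt$, exactly as in the Kripke case. Here $\Sp[A]{\times}{\sigma}\circ\abs{p^2}{\cptpl{\bO}{2}}(P_1,P_2)$ has, in each coordinate $a\in A$, the same set $S_a=\{\land\circ(k_1\times k_2)\mid k_i\in\EqRel(P_i,\Eq[2])\}$, and $\conc{p}{\bO}$ of this object is $P_1\dtimes P_2=\Ncodlift{\times}{\Eq[2]}{\land}{2}(P_1,P_2)$ by Example~\ref{eg:ncod_times}. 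Condition~2 of Proposition~\ref{prop:lift_lambda_join} is the kernel inclusion $\bigsqcap_{f\in S'}f^*\Eq[2]\sqsubseteq(\bigvee_{f\in S'}f)^*\Eq[2]$, which holds since agreement under every $f$ forces agreement under their pointwise join. Condition~3 is handled as in the Kripke proof: for $k\colon P_1\dtimes P_2\to\Eq[2]$ I put $S'_k\defeq\{\land\circ(\chi_{[x]_{P_1}}\times\chi_{[y]_{P_2}})\mid k(x,y)=\ttt\}$, using that $k$ respects $P_1\dtimes P_2$, so $\{k=\ttt\}$ is a union of products of equivalence classes and $pk=\bigvee_{k'\in S'_k}k'$.

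The genuinely new point, and the step I expect to be the main obstacle, is condition~1 for the acceptance modality $\tau_\epsilon$. Unlike the diamond $\diamond$ of the Kripke example, $\tau_\epsilon$ ignores its $(-)^\Sigma$-argument: $\tau_\epsilon\circ\Fda f=\pi_1$ (the first projection $\Fda X'\to 2$) for \emph{every} $f$, so $\tau_\epsilon$ does not preserve the empty join, which would wrongly force $\pi_1=\bot$. Hence condition~1 holds for $a=\epsilon$ only over non-empty families $S'$, whereas for $a\in\Sigma$ it holds for all $S'$ because evaluation at a fixed letter commutes with pointwise joins. I would close this gap by observing that $\bot\in S_\epsilon$ (take $k_1=\bot$, so $\land\circ(\bot\times k_2)=\bot$), so the subset $S'_k$ furnished by condition~3 can always be chosen non-empty, confining the use of condition~1 to non-empty families. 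Equivalently, one checks the approximating inequality directly for the degenerate $k$ with $pk=\bot$: there the right-hand side $(\tau_\epsilon\circ\Fda(pk))^*\Eq[2]=\{((t_1,\rho_1),(t_2,\rho_2))\mid t_1=t_2\}$ is literally one of the meetands on the left, so the inequality is trivial. With both conditions of Theorem~\ref{thm:suf_lift_dist} in place, $\sigma$ lifts $\distFda$; as a bonus, the last clause of Proposition~\ref{prop:lift_lambda_join} yields the per-letter liftings $\codlift{\Fda}{\Eq[2]}{\tau_a}$ for each $a\in A$.
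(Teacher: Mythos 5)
Your proof is correct and follows essentially the same route as the paper: it verifies the two conditions of Theorem~\ref{thm:suf_lift_dist}, the 2-cell condition by pointwise evaluation (both sides give $t_1 \land t_2$ at $a=\epsilon$ and $\rho_1(a)\land\rho_2(a)$ at $a\in\Sigma$), and the approximation condition via Proposition~\ref{prop:lift_lambda_join}, reusing the witnessing sets $S'_k$ of characteristic functions from \S{}\ref{eg:comp_kripke}.

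The one point where you diverge is the empty-join issue for $\tau_\epsilon$, and there your treatment is actually \emph{more} careful than the paper's. The paper asserts that condition~1 of Proposition~\ref{prop:lift_lambda_join} holds outright, on the grounds that both $\tau_\epsilon\circ(2\times(\bigvee_{f\in S'}f)^\Sigma)$ and $\bigvee_{f\in S'}\tau_\epsilon\circ(2\times f^\Sigma)$ send $(t,\rho)$ to $t$; for $S'=\emptyset$ the latter is the constant $\fff$, so this claim fails exactly as you observe (it does hold for the diamond modality of the Kripke example, where $\diamond\circ\pow(\bot)=\bot$, which is why the issue never surfaces there). Your repair --- noting that $\bot\in S_a$, so the set $S'_k$ furnished by condition~3 can always be chosen non-empty, or equivalently checking the degenerate $k$ with $pk=\bot$ directly, where $(\tau_\epsilon\circ\Fda(pk))^*\Eq[2]$ is already one of the meetands on the left-hand side --- is precisely what is needed to make the application of Proposition~\ref{prop:lift_lambda_join} legitimate. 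In short, your proposal is the paper's proof carried out rigorously, and it identifies and closes a small gap in the paper's own justification.
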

  As a consequence, given deterministic automata
  $\coalg \Fda {c_1}{X_1}$ and $\coalg \Fda {c_2}{X_2}$, if
  $x_i, x_i' \in X_i$ are language equivalent in $c_i$ ($i\in\{1,2\}$),
  the pairs $(x_1, x_2)$ and $(x_1', x_2')$ are also language equivalent
  in the product automaton $\distFda \circ (c_1 \times c_2)$.

  The proof of Proposition \ref{pp:langeq} is as follows. The modality
  $\land$ satisfies two conditions in Theorem~\ref{thm:suf_lift_dist}: 1)
  For each $(t, \rho), (t', \rho') \in 2 \times 2^\Sigma$, both
  $\land \circ \tau_a$ and
  $\tau_\epsilon \circ (2 \times \land^\Sigma) \circ \distFda$ map
  $((t, \rho), (t', \rho'))$ to $t \land t$ if $a = \epsilon$ and
  $\rho(a) \land \rho'(a)$ otherwise.  2) We can prove it by
  Proposition~\ref{prop:lift_lambda_join}.  The second and third conditions
  are the same as \S{}\ref{eg:comp_kripke}.  The first condition in Proposition~\ref{prop:lift_lambda_join}
  holds because for each $S' \subseteq \mathbf{Set}(X', 2)$,
both
  $\tau_a \circ 2 \times (\bigvee_{f \in S'}f)^\Sigma$ and
  $\bigvee_{f \in S'}\tau_a \circ 2 \times f^\Sigma$ map $(t, \rho) \in 2 \times X'^\Sigma$ to
  $t$ if $a=\epsilon$, and $\bigvee_{f \in S'} f(\rho(a))$ otherwise.

\subsection{Transferring Liftable Modalities} \label{sec:transfer}

We next study the \emph{transfer} of liftable modalities from
one $\clat$-fibration to another.
We first see the interaction between $L/R$ and the
$\mathrm{Sp}$.
\begin{lemma} \label{lem:full_fibred}
  Let $(F, \dF) \in \CAT^{\clat} (p, q)$.
  \begin{enumerate}
  \item $\abs q{\dF\bO}\circ\dF \sqsubseteq \Sp[A]F{id}\circ\abs p\bO$.
    They are equal when $\dF$ is full.
  \item $\dF \circ \conc p\bO \sqsubseteq \conc q{\dF\bO} \circ \Sp[A] F{id}$.
    They are equal when $\dF$ is fibered and preserving fibered meets.
    \qed
  \end{enumerate}
\end{lemma}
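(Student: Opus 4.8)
The plan is to reduce both inequalities to the concrete descriptions of $\abs{p}{\bO}$, $\conc{p}{\bO}$ and $\Sp[A]F{\id}$ recalled above, and then to use only two facts about the $1$-cell $(F,\dF)$: functoriality of $\dF$ and the identity $q\circ\dF=F\circ p$. Note first that $(q,F\circ\O,\dF\bO)$ is a \ptfib{} (indeed $q\circ\dF\bO=F\circ p\circ\bO=F\circ\O$), so $\abs{q}{\dF\bO}$, $\conc{q}{\dF\bO}$ and $\Sp[A]F{\id}\colon\Sp[A]\BB\O\to\Sp[A]\CC{F\O}$ are all well defined; this is the common setting for both parts. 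Throughout I use the concrete order on the fibres of $\Spf[A]\BB\O$ (and of $\Spf[A]\CC{F\O}$): instantiating the morphism condition of $\Sp$ with $f=\id_X$, a vertical morphism $(X,S)\to(X,S')$ exists exactly when $S'(a)\subseteq S(a)$ for all $a$, so $(X,S)\sqsubseteq(X,S')$ iff $S'(a)\subseteq S(a)$ componentwise --- that is, \emph{larger hom-families are smaller}.

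For (1), I would evaluate both sides at $P\in\EE$ with $pP=X$. Unfolding, $\abs{q}{\dF\bO}(\dF P)$ is $FX$ paired with the family $a\mapsto q(\FF(\dF P,\dF\bO(a)))$, whereas $\Sp[A]F{\id}(\abs{p}{\bO}(P))$ is $FX$ paired with $a\mapsto\{F(pk)\mid k\in\EE(P,\bO(a))\}$. By the order convention, the claimed inequality is precisely the componentwise inclusion $\{F(pk)\mid k\in\EE(P,\bO(a))\}\subseteq q(\FF(\dF P,\dF\bO(a)))$, which is witnessed by $k\mapsto\dF k$: indeed $\dF k\in\FF(\dF P,\dF\bO(a))$ and $q(\dF k)=F(pk)$. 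For the equality under fullness, the reverse inclusion is exactly surjectivity of $\dF\colon\EE(P,\bO(a))\to\FF(\dF P,\dF\bO(a))$, so every $h$ equals some $\dF k$, whence $qh=F(pk)$.

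For (2), I would evaluate both sides at $(X,S)\in\Sp[A]\BB\O$, obtaining $\dF\bigl(\bigsqcap_{a,k\in S(a)}k^*\bO(a)\bigr)$ on the left and $\bigsqcap_{a,k\in S(a)}(Fk)^*\dF\bO(a)$ on the right, both in $\FF_{FX}$. The inequality needs no extra hypotheses: from the Cartesian morphism $k\colon k^*\bO(a)\dto\bO(a)$, applying $\dF$ yields a morphism $\dF(k^*\bO(a))\to\dF\bO(a)$ over $Fk$, so $\dF(k^*\bO(a))\sqsubseteq(Fk)^*\dF\bO(a)$ by the universal property of the Cartesian lifting in $q$; combining with the monotone bound $\dF(\bigsqcap_{a,k}k^*\bO(a))\sqsubseteq\dF(k^*\bO(a))$ and taking the meet over $(a,k)$ gives the left-hand side below the right-hand side. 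For the equality, preservation of fibred meets gives $\dF(\bigsqcap_{a,k}k^*\bO(a))=\bigsqcap_{a,k}\dF(k^*\bO(a))$, while fiberedness of $(F,\dF)$, i.e.\ $(Fk)^*\dF\bO(a)=\dF(k^*\bO(a))$, upgrades the previous inequality to an equality; substituting identifies the two meets.

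The computations are routine once the definitions are unfolded; the only points requiring care are the bookkeeping of the reversed order on the $\Spf[A]\BB\O$-fibres in part (1) --- so that the always-valid inclusion $\{F(pk)\}\subseteq q(\FF(\dF P,\dF\bO(a)))$ realises the stated $\sqsubseteq$ --- and, in part (2), the observation that $\dF(k^*\bO(a))\sqsubseteq(Fk)^*\dF\bO(a)$ holds for \emph{any} lifting, fiberedness being needed only to turn it into an equality (and meet-preservation being the separate ingredient that pushes $\dF$ through the defining meet of $\conc{p}{\bO}$).
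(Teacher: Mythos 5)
Your proof is correct, and it is exactly the argument the paper has in mind when it marks this lemma as routine (no proof is given, even in the appendix): unfold the concrete descriptions of $\abs p\bO$, $\conc p\bO$ and $\Sp[A]F{\id}$, use $q\circ\dF=F\circ p$ and functoriality of $\dF$ for the two inequalities, and invoke fullness (resp.\ fiberedness plus preservation of fibered meets) only for the stated equalities. You also handle the two genuinely delicate points correctly: the inclusion-reversing order on the fibres of $\Spf[A]\CC{F\O}$ in part (1), and the observation in part (2) that $\dF(k^*\bO(a))\sqsubseteq(Fk)^*\dF\bO(a)$ holds for an arbitrary lifting, with fiberedness needed only to upgrade it to an equality.
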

When $F=Id$, these properties are working behind Komorida et al.'s
argument of transferring codensity
bisimilarities~\cite{DBLP:journals/ngc/KomoridaKHKHEH22}.  Let
$(Id,G)\colon p\to q$ be a 1-cell in $\CAT^\clat$ such that $G$ is full,
fibered and preserving fiberwise meets; such a 1-cell is called a
\emph{transfer situation}
in~\cite{DBLP:journals/ngc/KomoridaKHKHEH22}.  Then the 1-cell
commutes with codensity liftings, that is,
$G \circ \codlift{F}{\mathbf{\Omega}}{\tau} = \codlift{F}{G \circ
  \mathbf{\Omega}}{\tau}\circ G$. From this commutativity, together
with the adjoint lifting theorem of Hermida and Jacobs \cite[Theorem
2.14]{DBLP:journals/iandc/HermidaJ98} (see also
\cite{DBLP:journals/logcom/SprungerKDH21,DBLP:journals/ngc/KomoridaKHKHEH22,DBLP:conf/calco/TurkenburgBKR23}
for relevant results), the preservation of codensity bisimilarity
is obtained:
\begin{displaymath}
  G(\nu (c^* \circ \codlift{F}{\mathbf{\Omega}}{\tau}))
  =\nu (c^*\circ\codlift{F}{G \circ \mathbf{\Omega}}{\tau})
  \quad
  (\coalg FcX).
\end{displaymath}
Such $G$ not only transfers the codensity bisimilarities, but also
the liftability property of modalities.
\begin{proposition}
  \label{prop:transfer_principle_decent}
  Let $(\pfib p\EE\BB A\O\bO, F, \tau)$ be codensity bisimulation data
  and $(T,\lambda)$ be an $N$-ary composition operation for
  $F$-coalgebras.  Let $(Id,G)\colon p\to (q\colon \FF\to\BB)$ be a 1-cell in
  $\CAT^\clat$ such that $G$ is full, fibered and preserving fibered
  meets.  If a modality $\sigma$ lifts $\lambda$ along $p$, then
  $\sigma$ lifts $\lambda$ also along $q$
  in the \ptfib{} $(\pfib q\FF\BB A\O{G\circ \bO})$.
\end{proposition}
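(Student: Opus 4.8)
The plan is to push the liftability inequality for $p$ forward along $G$, after recording how $G$ interacts with the two halves of the decomposition \eqref{eq:decomp}. Write $R=\conc p\bO$, $L=\abs p\bO$, $R_N=\conc{p^N}{\cptpl\bO N}$, $L_N=\abs{p^N}{\cptpl\bO N}$ for the $p$-side adjoints, and $R'=\conc q{G\bO}$, $L'=\abs q{G\bO}$, $R'_N=\conc{q^N}{\cptpl{G\bO}N}$, $L'_N=\abs{q^N}{\cptpl{G\bO}N}$ for the $q$-side; note that the middle category $\Sp[A]\BB\O$ and the functors $\Sp[A]F\tau,\Sp[A]T\sigma$ depend only on $(\BB,\O)$ and are therefore literally shared by the two \ptfibs. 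Applying Lemma~\ref{lem:full_fibred} to the $1$-cells $(\mathrm{Id},G)$ and $(\mathrm{Id},G^N)$, and using $\Sp[A]{\mathrm{Id}}{\mathrm{id}}=\mathrm{id}$, the three hypotheses on $G$ collapse to the identities $L'\circ G=L$ (from fullness) and $G\circ R=R'$ (from fiberedness and meet-preservation), together with their $N$-ary versions $L'_N\circ G^N=L_N$ and $G^N\circ R_N=R'_N$. Substituting these into \eqref{eq:decompgen} gives at once the commutation laws
\begin{align*}
  G\circ\codlift F\bO\tau&=\codlift F{G\bO}\tau\circ G,&
  G\circ\Ncodlift T\bO\sigma N&=\Ncodlift T{G\bO}\sigma N\circ G^N,
\end{align*}
the first being the one recalled before the statement and the second its evident generalization to the \mergefunc functor $T$.

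Next I would record an idempotency property that lets me replace an arbitrary argument by one in the image of $G$. Since $R'=G\circ R$, every value of $\codlift F{G\bO}\tau$ and of $\Ncodlift T{G\bO}\sigma N$ lies in the image of $G$. Moreover, reading off the explicit meet in \eqref{eq:decompgen}, the underlying sets of base morphisms $\{qk\mid k\in\FF(Q,G\bO(a))\}$ and $\{qk'\mid k'\in\FF(R'L'Q,G\bO(a))\}$ coincide for every $Q$ and $a$ (the unit $Q\dto R'L'Q$ yields one inclusion, and $R'L'Q\sqsubseteq (qk)^*G\bO(a)$ for each such $k$ yields the other), so that
\begin{displaymath}
  \codlift F{G\bO}\tau=\codlift F{G\bO}\tau\circ R'L',\qquad
  \Ncodlift T{G\bO}\sigma N=\Ncodlift T{G\bO}\sigma N\circ R'_NL'_N .
\end{displaymath}

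With these in hand, fix $\vec Q\in\FF^N$ and set $\vec P\defeq R_N L'_N\vec Q\in\EE^N$, so that $G^N\vec P=R'_NL'_N\vec Q$ by $G\circ R=R'$. Combining idempotency with the commutation laws yields
\begin{align*}
  \Ncodlift T{G\bO}\sigma N\bigl((\codlift F{G\bO}\tau)^N\vec Q\bigr)
  &=G\bigl(\Ncodlift T\bO\sigma N((\codlift F\bO\tau)^N\vec P)\bigr),\\
  \codlift F{G\bO}\tau\bigl(\Ncodlift T{G\bO}\sigma N\vec Q\bigr)
  &=G\bigl(\codlift F\bO\tau(\Ncodlift T\bO\sigma N\vec P)\bigr),
\end{align*}
where in each line one first rewrites the inner $(\codlift F{G\bO}\tau)^N\vec Q$, resp.\ $\Ncodlift T{G\bO}\sigma N\vec Q$, as $G$ of a $p$-side object at $\vec P$ (using idempotency to pass to $R'_NL'_N\vec Q=G^N\vec P$ and then commutation), and finally pushes the outer lifting through $G$. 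Since $\sigma$ lifts $\lambda$ along $p$, the inequality
\begin{displaymath}
  \Ncodlift T\bO\sigma N((\codlift F\bO\tau)^N\vec P)
  \sqsubseteq
  \lambda^*\,\codlift F\bO\tau(\Ncodlift T\bO\sigma N\vec P)
\end{displaymath}
holds; applying the monotone functor $G$ and using that $G$ is fibered, so $G\circ\lambda^*=\lambda^*\circ G$, turns it into exactly the required inequality at $\vec Q$. As $\vec Q$ was arbitrary, $\sigma$ lifts $\lambda$ along $q$.

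The main obstacle is the idempotency step. The forward transport only directly establishes the $q$-side inequality for arguments in the image of $G^N$, and the real work is to show that a general $\vec Q$ may be replaced, \emph{without changing the two composite codensity liftings}, by $R'_NL'_N\vec Q$, which does lie in that image. I expect to prove this from the meet presentation as sketched (the hom-sets into $G\bO(a)$ are invariant under the closure $R'L'$); an alternative is to note that $L'Q$ is a vertical retract of $L'R'L'Q$ and that $R'\circ\Sp[A]F\tau$ carries this retraction to mutually inverse vertical maps in a poset fibre, forcing equality. The remaining verifications—the collapse of the hypotheses on $G$ to $L'G=L$ and $GR=R'$, and the bookkeeping that the middle $\Sp$-data is shared between the two \ptfibs—are routine given Lemma~\ref{lem:full_fibred} and the decomposition.
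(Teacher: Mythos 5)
Your proposal is correct, and its skeleton is the paper's: both proofs apply Lemma~\ref{lem:full_fibred} with $F=\Id$ to get $\abs q{G\bO}\circ G=\abs p\bO$ and $G\circ\conc p\bO=\conc q{G\bO}$, deduce from \eqref{eq:decompgen} the commutation laws $G\circ\codlift F\bO\tau=\codlift F{G\bO}\tau\circ G$ and $G\circ\Ncodlift T\bO\sigma N=\Ncodlift T{G\bO}\sigma N\circ G^N$, and then transport the $p$-side lifting of $\lambda$ along $G$. The genuine difference is the ending. The paper composes the given lifting $\dot\lambda$ with $G$, reads off its type through the commutation laws, and then handles the remaining issue --- that this only yields components at objects of the form $G^N\vec P$ --- with the single sentence ``because $G$ is full''; no mechanism for producing a component at an arbitrary $\vec Q\in\FF^N$ is spelled out there. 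Your idempotency step, $\codlift F{G\bO}\tau\circ\conc q{G\bO}\circ\abs q{G\bO}=\codlift F{G\bO}\tau$ together with its $N$-ary analogue, is exactly that missing mechanism: it trades $\vec Q$ for $\conc{q^N}{\cptpl{G\bO}N}\abs{q^N}{\cptpl{G\bO}N}\vec Q$, which lies in the image of $G^N$ since $\conc{q^N}{\cptpl{G\bO}N}=G^N\circ\conc{p^N}{\cptpl\bO N}$. Notably, this step needs no fullness at all --- it follows from your hom-set computation, or more slickly from the triangle identity $\abs q{G\bO}\circ\conc q{G\bO}\circ\abs q{G\bO}=\abs q{G\bO}$ applied to \eqref{eq:decompgen} --- so your account both completes and clarifies the paper's final sentence, which is its weakest point. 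Two minor remarks: (i) since liftability over these faithful fibrations is the pointwise condition of \S{}\ref{sec:liftings}, your per-object inequalities suffice and no naturality check is owed; (ii) with your choice $\vec P\defeq\conc{p^N}{\cptpl\bO N}\abs{q^N}{\cptpl{G\bO}N}\vec Q$, your first displayed equality tacitly identifies the componentwise closure $(\conc q{G\bO}\abs q{G\bO}Q_i)_i$ with the $N$-ary closure $\conc{q^N}{\cptpl{G\bO}N}\abs{q^N}{\cptpl{G\bO}N}\vec Q$; these can differ only when some hom-set $\FF(Q_j,G\bO(a))$ is empty, a degenerate situation the paper itself ignores (e.g.\ when it writes $(\codlift F\bO\tau)^N=\codlift{F^N}{\cptpl\bO N}{\tau^N}$), and you can sidestep it entirely by weakening both displayed equalities to ``$\sqsubseteq$'' using the unit $\vec Q\sqsubseteq G^N\vec P$ and monotonicity, keeping idempotency only for $\Ncodlift T{G\bO}\sigma N$; the resulting chain of inequalities still closes.
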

The proof uses the equality
$G\circ\codlift T\bO\sigma=\codlift T{\cptpl\bO N}\sigma\circ G^N$
derivable from Lemma \ref{lem:full_fibred}.
\begin{example}
  Consider the $\clat$-fibration
  $\pfgt{\mathbf{ERel}}\colon \mathbf{ERel} \to \Set$ and the
  inclusion functor
  $i\colon \mathbf{EqRel} \hookrightarrow \mathbf{ERel}$.  Then
  $(\pfgt{\EqRel}, \pfgt{\ERelNA}, i)$ is a transfer situation.
In \S{}\ref{eg:comp_kripke}, we showed that $\sigma_{\land}$
lifts $\lambda^\pow$ along $\pfgt{\EqRel}$.
  Proposition~\ref{prop:transfer_principle_decent} induces that
  $\sigma_{\land}$
lifts $\lambda^\pow$ also along $\pfgt{\ERelNA}$.
\end{example}

 \section{Examples of Lifting Distributive laws via Modalities} \label{sec:examples}

\subsection{Bisimilarity Pseudometric for Deterministic
  Automata} \label{sec:bisim_pseudo} We next study a {\em bisimilarity
  pseudometric} for deterministic
automata~\cite{DBLP:journals/lmcs/BaldanBKK18,DBLP:conf/concur/Bonchi0P18},
which is a quantitative extension of language equivalence.
We fix a weight $w \in \intv$ where $\mathbb{I}$ is the interval $[0, 1]$.
We regard the interval $\intv$ as a
quantitative extension of binary truth values via the cast function
$i\colon 2\to\intv$ defined by $i(\ttt)=0$ and $i(\fff)=1$.

\paragraph{Codensity bisimulation data}
We use the same coalgebraic formulation of deterministic automata as
$\Fda $-coalgebras as \S{}\ref{eg:lang_equiv}
where the functor $\Fda$ is introduced in Example~\ref{ex:codlift_Fda}.
 The difference from
that example is that we employ the category of pseudometric spaces for
modeling bisimilarity pseudometrics. We adopt the \ptfib{}
$\pfibU p \PMet \Set {\Sigma\uplus\{\epsilon\}} \O \bO$ where
$\bO$ constantly returns the Euclidean space $(\intv,d_{\intv})$
over the $[0,1]$-interval. We pair it with the behavior
functor $\Fda$ for deterministic automata and the following modality
$\tau_a\colon \Fda(\intv) \to \intv$ to obtain codensity
bisimulation data:
\begin{displaymath}
  \tau_a(t,\rho)=
  \begin{choice}
    t & a=\epsilon \\
    w \cdot \rho(a) & a\in\Sigma.
  \end{choice}
\end{displaymath}

The codensity lifting $\codlift{\Fda }{\bO}{\tau}$ maps
$(X,d)\in\PMet$ to the space over $\Fda X$ with the following
pseudometric:
\begin{displaymath}
  ((t_1, \rho_1), (t_2, \rho_2)) \mapsto
  max\{|t_1 - t_2|, w \cdot \max_{a \in \Sigma} d(\rho_1(a), \rho_2(a))\}.
\end{displaymath}
Now, $\codlift{\Fda }{\bO}{\tau}$-bisimilarity on an $\Fda $-coalgebra
$\coalg \Fda cX$ is a quantitative extension of language equivalence
on $X$. It maps $(x, x') \in X^2$ to $0$ if the languages of $x, x'$
are the same, and $w^n$ otherwise where $n$ is the minimum length of a
word that is accepted from one and not from the other.  This notion
corresponds to language equivalence if $w=1$.

\paragraph{Binary one-step composition operation}
We reuse $(\times,\distFda)$
in Example \ref{ex:detaut_composition}.

\paragraph{Modality for $\times$}
We adopt $\sigma_\land\colon \intv^2 \to \intv$ given in
\S{}\ref{sec:pmet_codlift}.  As we saw in \S{}\ref{sec:pmet_codlift},
the binary codensity lifting
$\Ncodlift{\times}{d_{\intv}}{\sigma_\land}{2}$ maps
$(X,d_1),(X,d_2)\in \mathbf{PMet}$ to the
space on $X \times Y$ with the following pseudometric:
\begin{displaymath}
  ((x, y), (x', y')) \mapsto \max(d_1(x, x'), d_2(y, y')).
\end{displaymath}
\begin{proposition}
  The modality $\sigma_\land$ lifts $\distFda$ along $\pfgt{\PMet}$.
  \qed
\end{proposition}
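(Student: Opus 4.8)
The plan is to verify the sufficient condition of Theorem~\ref{thm:suf_lift_dist}, after first replacing $\sigma_{\land}$ by $\sigma_{\lor}(a,b)=\max(a,b)$. The point of the replacement is that whether ``$\sigma$ lifts $\distFda$'' holds depends on $\sigma$ only through the lifting $\Ncodlift{\times}{d_{\intv}}{\sigma}{2}$ that appears in both composites of the definition; and, as already noted after Proposition~\ref{prop:isom_f} (taking $f(a)=1-a$, which is an isometry of $(\intv,d_{\intv})$), one has $\Ncodlift{\times}{d_{\intv}}{\sigma_{\land}}{2}=\Ncodlift{\times}{d_{\intv}}{\sigma_{\lor}}{2}$. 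So it suffices to prove that $\sigma_{\lor}$ lifts $\distFda$, and since $\sigma_{\lor}$ satisfies \eqref{eq:star} we may use throughout that $\Ncodlift{\times}{d_{\intv}}{\sigma_{\lor}}{2}$ is the max pseudometric of Proposition~\ref{prop:meet-free_pmet}.

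For the first condition of Theorem~\ref{thm:suf_lift_dist} I would check the equation $\sigma_{\lor}\circ(\tau_a\times\tau_a)=\tau_a\circ\Fda(\sigma_{\lor})\circ\distFda$ for every $a\in\Sigma\uplus\{\epsilon\}$. This is a direct computation on $\Fda\intv\times\Fda\intv$: for $a=\epsilon$ both sides send $((t_1,\rho_1),(t_2,\rho_2))$ to $i(t_1\land t_2)=\max(i(t_1),i(t_2))$, using that the cast $i$ turns conjunction on $2$ into $\max$ on $\intv$; for $a\in\Sigma$ both sides send it to $w\cdot\max(\rho_1(a),\rho_2(a))$.

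The second, approximating, condition is the heart of the matter, and here the argument must depart from the qualitative examples of \S\ref{eg:comp_kripke} and \S\ref{eg:lang_equiv}. There the condition was discharged by Proposition~\ref{prop:lift_lambda_join}, but over $\pfgt{\PMet}$ its third hypothesis fails: writing $D=\Ncodlift{\times}{d_{\intv}}{\sigma_{\lor}}{2}(P_1,P_2)$ for the max pseudometric on $Z=X_1\times X_2$, a non-expansive $k\colon (Z,D)\to(\intv,d_{\intv})$ is in general not a pointwise join of maps $(x,y)\mapsto\max(g_1(x),g_2(y))$ (for instance, the not-equal map on $\{0,1\}^2$ is non-expansive for $D$ yet is no such join). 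I would therefore verify Definition~\ref{def:approximating} directly for $S=\Sp[\Sigma\uplus\{\epsilon\}]{\times}{\sigma_{\lor}}\circ\abs{p^2}{\cptpl\bO 2}(P_1,P_2)$, whose components are $S_a=\{(x,y)\mapsto\max(g_1(x),g_2(y))\mid g_i\colon P_i\dto\bO\}$ and which satisfies $\conc{p}{\bO}(S)=D$. Fixing an index $a$ and a non-expansive $k$, the required inequality of pseudometrics on $\Fda Z$ reduces to two observations. For $a=\epsilon$ it is immediate, because the $a'=\epsilon$ summand of the meet already equals the target $|i(t)-i(t')|$. For $a\in\Sigma$ the target at $((t,\rho),(t',\rho'))$ is $w\,|k(\rho(a))-k(\rho'(a))|$, and it is dominated by the $a'=a$ summand of the meet: indeed $\conc{p}{\bO}(S)=D$ gives $\sup_{k'\in S_a}|k'(p)-k'(q)|=D(p,q)$, while non-expansiveness of $k$ gives $|k(\rho(a))-k(\rho'(a))|\le D(\rho(a),\rho'(a))$.

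I expect the main obstacle to be exactly this last step. The qualitative proofs rely on representing an arbitrary test map as a join of ``rank-one'' maps $\max(g_1,g_2)$, and this global representation is simply unavailable in the metric setting. The way around it is the observation that the approximating condition is strictly weaker than Proposition~\ref{prop:lift_lambda_join}'s hypothesis: it only asks, for each individual pair of successor states $\rho(a),\rho'(a)$, for a single separating $k'\in S_a$ realizing their $D$-distance, and such a $k'$ (a clamped one-sided distance function) is always available. No map $k'$ needs to approximate $k$ simultaneously at all points.
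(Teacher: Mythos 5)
Your proof is correct, and its skeleton is the same as the paper's own proof (Proposition~\ref{ap:prop:bisim_pseudo} in Appendix~\ref{ap:bisim_pseudo}): verify the two conditions of Theorem~\ref{thm:suf_lift_dist}, dispatching $a=\epsilon$ by the matching summand and $a\in\Sigma$ by non-expansiveness of $k$ with respect to the max pseudometric together with the fact that product-form tests already realize that pseudometric (your appeal to $\conc{p}{\bO}(S)=D$, which holds by \eqref{eq:decompgen} and Proposition~\ref{prop:meet-free_pmet}, is the same content as the paper's explicit witnesses). The one genuine difference is your preliminary replacement of $\sigma_\land$ by $\sigma_\lor$ via Proposition~\ref{prop:isom_f}, justified by the correct observation that ``$\sigma$ lifts $\distFda$'' depends on $\sigma$ only through $\Ncodlift{\times}{d_{\intv}}{\sigma}{2}$. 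The paper instead checks condition~(\ref{item:2-cell}) for $\sigma_\land=\min$ itself, asserting $\sigma_\land\circ(\tau_\epsilon\times\tau_\epsilon)=i(t\land t')$; under the stated cast $i(\ttt)=0$, $i(\fff)=1$, conjunction on $2$ transports to $\max$ on $\intv$, so that equality holds only if one reads $\land$ through the opposite cast --- your $\sigma_\lor$ version is the one that is literally true under the stated conventions, and it also explains why the paper's condition-(\ref{item:approximating}) witnesses carry the conjugation $f(a)=1-a$ inside them (the maps $k_i\coloneq 1-d_i(\cdot,-)$), a conjugation you perform once and for all at the outset. Your side remark that Proposition~\ref{prop:lift_lambda_join} is inapplicable over $\pfgt{\PMet}$, with the not-equal-map counterexample, is also correct and consistent with the paper, which likewise verifies Definition~\ref{def:approximating} directly rather than through that proposition.
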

This liftability yields a bound of bisimilarity pseudometrics of
composite automata: given deterministic automata
$c_1\colon X_1 \to \Fda (X_1)$ and $c_2\colon X_2 \to \Fda (X_2)$, the
distance between $(x, y)$ and $(x', y')$ in the product automaton
$\distFda \circ (c_1 \times c_2)$ is bounded by $\max(v_1, v_2)$ where
$v_1$ is the distance between $x, x'$ in $c_1$ and $v_2$ is the
distance between $y, y'$ in $c_2$.

The proof is to show that two conditions in Theorem~\ref{thm:suf_lift_dist}. See
\ifarxiv
Appendix~\ref{ap:bisim_pseudo}
\else
\cite[Appendix A.3.1]{kori2024composing}
\fi
for its proof.

\subsection{Similarity Pseudometric for Deterministic Automata} \label{sec:sim_pseudo}
Let us consider similarity on deterministic automata, as a variation of the example in the previous subsection.

We get a similarity framework just by relaxing the symmetry condition of $\PMet$
and taking an asymmetric truth-value domain instead of the Euclidean distance $d_{\intv}$.
For omitted proofs, see
\ifarxiv
Appendix~\ref{ap:sim_pseudo}.
\else
\cite[Appendix A.3.2]{kori2024composing}.
\fi

\paragraph{Codensity bisimulation data}
Let $\QPMet$ be the category of Lawvere metric spaces (meaning
asymmetric pseudometric spaces) and non-expansive maps.  The forgetful
functor $\pfgt{\QPMet}\colon \QPMet \to \Set$ is a $\clat$-fibration like
$\pfgt{\PMet}$.  We adopt the \ptfib{}
$\pfibU p \QPMet \Set A \O \bO$ such that $\bO$ constantly returns the
space over the interval with the Lawvere metric
$\das(x, y) \coloneqq \max(0, y-x)$.  We pair it with the behavior
functor $\Fda$ for deterministic automata and the same modality $\tau$
in \S{}\ref{sec:bisim_pseudo} to form codensity bisimulation data.

The codensity lifting $\codlift{\Fda }{\bO}{\tau}$ maps
$(X,d)\in\QPMet$ to the space on $\Fda(X)$ with the following
Lawvere metric:
\begin{displaymath}
  ((t_1, \rho_1), (t_2, \rho_2)) \mapsto
  max\{\das(i(t_1), i(t_2)), w \cdot \max_{a \in \Sigma} d(\rho_1(a), \rho_2(a))\};
\end{displaymath}
recall that $i\colon 2\to\intv$ is the cast function
(\S{}\ref{sec:bisim_pseudo}).  Note that $\das$ maps $(\ttt, \fff)$ to
$1$ and the others to $0$.  In \S{}\ref{sec:bisim_pseudo} we used the Euclidean distance
$d_{\intv}$ instead of $\das$, which also maps $(\fff, \ttt)$ to
$1$.

The choice of the Euclidean distance makes a difference in the notion of bisimilarity.
Now $\codlift{\Fda }{\bO}{\tau}$-bisimilarity on an $\Fda $-coalgebra
$\coalg \Fda cX$ is a similarity pseudometric on $X$ for the deterministic automaton $c$.
It maps $(x, x') \in X^2$ to $0$ if the language of $x$ is included in that of $x'$, and $w^n$ otherwise where $n$ is the minimum length of a
word that is accepted from $x$ and not from $x'$.
A bisimialrity pseudometric in \S{}\ref{sec:bisim_pseudo} detects words that are accepted from $x'$ and not from $x$ while a similarity pseudometric doesn't.

\paragraph{Binary one-step composition operation}
We reuse $(\times,\distFda)$
in Example \ref{ex:detaut_composition}.

\paragraph{Modality for $\times$}
We adopt $\sigma_\land\colon \intv^2 \to \intv$ given in
\S{}\ref{sec:pmet_codlift}.
The binary codensity lifting
$\Ncodlift{\times}{\das}{\sigma_\land}{2}$ maps $(X,d_1)$ and
$(Y, d_2)$ to the Lawvere metric on $X \times Y$
\begin{displaymath}
  ((x, y), (x', y')) \mapsto \max(d_1(x, x'), d_2(y, y')).
\end{displaymath}
\begin{proposition}
  The modality $\sigma_\land$ lifts $\distFda$ along $\pfgt{\QPMet}$.
  \qed
\end{proposition}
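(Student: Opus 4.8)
The plan is to apply the sufficient criterion of Theorem~\ref{thm:suf_lift_dist}, reusing the bisimilarity-pseudometric argument of \S\ref{sec:bisim_pseudo} almost verbatim and checking only that the asymmetry of the truth object $(\intv,\das)$ causes no harm. The one genuinely new difficulty is that Proposition~\ref{prop:isom_f} is no longer available to identify the $\sigma_\land$- and $\sigma_\lor$-liftings (the isomorphism $f(a)=1-a$ now sends $\das$ to its opposite instead of fixing it), so I must make that identification by hand; I expect this to be the main obstacle.

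Concretely, I would first reduce to $\sigma_\lor$. Liftability of $\distFda$ depends on the modality only through the lifting $\Ncodlift{\times}{\das}{\sigma_\land}{2}$, which was computed to be the max-metric $((x,y),(x',y'))\mapsto\max(d_1(x,x'),d_2(y,y'))$; so I would show that $\Ncodlift{\times}{\das}{\sigma_\lor}{2}$ equals this same max-metric, whence the two liftings coincide and the proposition reduces to checking the hypotheses of Theorem~\ref{thm:suf_lift_dist} for $\sigma_\lor$. This equality is obtained by bounding the defining meet directly: the inequality $\sqsubseteq\max(d_1,d_2)$ uses non-expansiveness of $\max$, and the reverse inequality is witnessed, for a fixed pair whose $\max$ is realized by $d_1$, by the map $u\mapsto\min(1,d_1(x,u))$ measured from the source point $x$ together with a constant map, both $\das$-non-expansive by the triangle inequality of the Lawvere metric. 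Throughout one must track the direction of the one-sided distance $\das(x,y)=\max(0,y-x)$.

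With the reduction in place I would verify condition~\ref{item:2-cell}, that $\sigma_\lor$ makes $\distFda$ a 2-cell, in its equivalent pointwise form $\sigma_a\circ(\tau_a\times\tau_a)=\tau_a\circ\Fda(\sigma_a)\circ\distFda$. For $a\in\Sigma$ both sides send $((t_1,\rho_1),(t_2,\rho_2))$ to $w\cdot\max(\rho_1(a),\rho_2(a))$, and for $a=\epsilon$ both send it to the cast of $t_1\wedge t_2$; the crucial point is that $i$ is order-reversing ($i(\ttt)=0$), so conjunction on the accepting bit turns into a join and is matched by $\sigma_\lor=\max$, not by $\sigma_\land$. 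This is exactly why the preceding reduction is needed.

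Finally I would verify the approximating condition~\ref{item:approximating} directly, as in \S\ref{sec:bisim_pseudo}. Unwinding Definition~\ref{def:approximating}, it suffices to show that $\codlift{\Fda}{\bO}{\tau}$ of the max-metric dominates $(\tau_a\circ\Fda g)^*(\intv,\das)$ for every index $a$ and every non-expansive $g$ out of the product. For $a=\epsilon$ both metrics equal $\das(i(t_1),i(t_2))$; for $a\in\Sigma$ the target is $w\cdot\das(g(\rho_1(a)),g(\rho_2(a)))$, which non-expansiveness of $g$ bounds by $w\cdot\max(d_1,d_2)(\rho_1(a),\rho_2(a))$, and this is already a defining term of $\codlift{\Fda}{\bO}{\tau}$ of the max-metric. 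Equivalently one dominates the pullback distance pair-by-pair with the same distance-to-a-point maps, selecting the projection that realizes $\max(d_1,d_2)$; again only care with the direction of $\das$ is required. Once the lifting coincidence is established, these two verifications are routine transcriptions of the symmetric case.
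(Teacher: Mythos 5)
Your proposal is correct, and it takes a genuinely different (and more careful) route than the paper. The paper's own proof is a one-line appeal to the symmetric case: it claims the tuple $(\pfibU p \QPMet \Set A \O \bO,(\times,\distFda),\sigma_\land)$ satisfies conditions \eqref{item:2-cell} and \eqref{item:approximating} of Theorem~\ref{thm:suf_lift_dist} ``in the same way as'' Proposition~\ref{ap:prop:bisim_pseudo}, after directly computing $\Ncodlift{\times}{\das}{\sigma_\land}{2}$ to be the max Lawvere metric (via the witnesses $k_1(t)=1-d_1(t,x')$, mirror images of your $d_1(x,\cdot)$-type witnesses); the modality $\sigma_\lor$ never appears. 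The issue you spotted is real: with the cast fixed in \S\ref{sec:bisim_pseudo} and used explicitly in \S\ref{sec:sim_pseudo} ($i(\ttt)=0$, $i(\fff)=1$, so that $\das$ sends $(\ttt,\fff)$ to $1$), the $\epsilon$-instance of condition \eqref{item:2-cell} reads $\sigma(i(t_1),i(t_2))=i(t_1\land t_2)$, and since conjunction of booleans corresponds to $\max$ of the casts, this holds for $\sigma_\lor$ but fails for $\sigma_\land$ (take $t_1=\ttt$, $t_2=\fff$: $\min(0,1)=0\neq 1$). The paper's appendix writes both sides as ``$t\land t'$'', implicitly reading the accept bit with the opposite cast $\ttt\mapsto 1$; in \S\ref{sec:bisim_pseudo} that slip is harmless because $a\mapsto 1-a$ is a $d_{\intv}$-isometry, but here the cast is pinned down by the similarity semantics, and flipping it would turn the lifting into reverse language inclusion. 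Hence the sufficient condition must be applied to $\sigma_\lor$, and—exactly as you observe—Proposition~\ref{prop:isom_f} cannot be used to transfer the result back to $\sigma_\land$, since $a\mapsto 1-a$ carries $\das$ to its opposite; so the coincidence $\Ncodlift{\times}{\das}{\sigma_\lor}{2}=\Ncodlift{\times}{\das}{\sigma_\land}{2}$ has to be established by a fresh direct computation, which your distance-from-source witnesses supply. In short, your detour through $\sigma_\lor$ is not overhead: it is what makes Theorem~\ref{thm:suf_lift_dist} actually apply to the modality named in the statement, and it quietly repairs the paper's proof-by-analogy. One wording slip in your final paragraph: what must dominate each pullback $(\tau_a\circ\Fda g)^*\bO(a)$ is not $\codlift{\Fda}{\bO}{\tau}$ applied to the max metric (that is trivially true) but the meet indexed by the restricted family $\sigma_\lor\circ(k_1\times k_2)$; your subsequent two-step bound—non-expansiveness of $g$, then realizing $\max(d_1,d_2)$ inside the restricted family by distance-to-a-point maps—is precisely the needed argument, so this is phrasing only.
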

This liftability yields a bound of similarity pseudometrics of
composite automata: given deterministic automata
$c_1\colon X_1 \to \Fda (X_1)$ and $c_2\colon X_2 \to \Fda (X_2)$, the
distance between $(x, y)$ and $(x', y')$ in the product automaton
$\distFda \circ (c_1 \times c_2)$ is bounded by $\max(v_1, v_2)$ where
$v_1$ is the distance between $x, x'$ in $c_1$ and $v_2$ is the
distance between $y, y'$ in $c_2$.

The proof is to show that two conditions in Theorem~\ref{thm:suf_lift_dist}.
\ifarxiv
See Appendix~\ref{ap:sim_pseudo} for its proof.
\else
See \cite[Appendix A.3.2]{kori2024composing} for its proof.
\fi

\subsection{Bisimulation Metric for Markov Decision Processes} \label{sec:bisim_metric}
\emph{Bisimulation metric} is a quantitative notion of behavioral equivalences for probabilistic systems~\cite{DBLP:journals/tcs/DesharnaisGJP04, DBLP:journals/entcs/DengCPP06, DBLP:journals/tcs/BreugelW05}.
We shall see liftability of a distributive law for composing bisimulation metrics~\cite{DBLP:journals/tcs/DesharnaisGJP04}.

\paragraph{Codensity bisimulation data}
We first set-up coalgebraic formulation of Markov decision processes (MDPs) and bisimulation metric.
Let us write $\subdistmnd\colon \Set \to \Set$ for the probability distribution functor
which maps $X$ to the set of distributions on $X$.
Then MDPs are naturally modeled by $\pow\circ\subdistmnd$-coalgebras
$\coalg{\pow\subdistmnd}cX$. Hereafter we omit the composition between
$\pow$ and $\subdistmnd$.

For bisimulation metrics,
let us provide two codensity bisimulation data
for the Kantorovich and Hausdorff lifting.

The first codensity bisimulation data is
$(\pfibU p \PMet \Set 1 {\intv} {d_{\intv}}, \subdistmnd, e)$
where $e$ is the expectation function $e\colon \subdistmnd\intv\to \intv$.
The codensity lifting $\codlift{\subdistmnd}{d_{\intv}}{e}\colon \mathbf{PMet} \to \mathbf{PMet}$
maps $d \in \mathbf{PMet}_X$ to the Kantorovich pseudometric $\mathcal{K}(d)$ on $\subdistmnd X$:
\begin{displaymath}
  (\mu_1,\mu_2) \mapsto
  \sup_{k\in\PMet((X,d),(\intv,d_{\intv}))}
  |\Sigma_{x \in X} k(x) \cdot (\mu_1(x) - \mu_2(x))|.
\end{displaymath}

The second codensity bisimulation data is
$(\pfibU p \PMet \Set 1 {\intv} {d_{\intv}}, \pow, \mathrm{inf})$.
The codensity lifting $\codlift{\pow}{d_{\intv}}{\mathrm{inf}}\colon \mathbf{PMet} \to \mathbf{PMet}$
mapping $d \in \mathbf{PMet}_X$ to the Hausdorff $\mathcal{H}(d)$ pseudometric on $\pow X$:
\begin{align*}
  (A_1, A_2)
  \mapsto
  &\sup_{k\in\PMet((X,d),(\intv,d_{\intv}))}
    |\mathrm{inf}_{a_1 \in A_1} k(a_1) -  \mathrm{inf}_{a_2 \in A_2} k(a_2)| \\
  =& \max(\sup_{a_1 \in A_1}\inf_{a_2 \in A_2} d(a_1, a_2), \sup_{a_2 \in A_2}\inf_{a_1 \in A_1} d(a_1, a_2))
\end{align*}
Notice that $\mathrm{sup}\, \emptyset = 0$ and
$\mathrm{inf}\, \emptyset=1$, hence
$\mathcal{H}(d)(\emptyset, A_2) = 1$ for $A_2 \neq \emptyset$.  The
equality is proved in
\cite[Appendix 3]{DBLP:journals/ngc/KomoridaKHKHEH22}.

Then a
$\codlift{\pow}{d_{\intv}}{\mathrm{inf}}\circ\codlift{\subdistmnd}{d_{\intv}}{e}$-bisimulation
on $c$ is called a \emph{bisimulation metric},
and the bisimilarity is called the \emph{bisimilarity metric}~\cite{DBLP:journals/entcs/DengCPP06}.
It maps a pair of states $(x, y)$ to a distance that measures quantitative analogue to behavioral equivalences.

\paragraph{Binary one-step composition operation}
We have seen the binary one-step
composition $(\times, \distpow)$ for $\pow$ in \S{}\ref{eg:comp_kripke}. We next introduce the one for $\subdistmnd $
by
\[ \lambda^\subdistmnd (\mu, \mu') \coloneqq (x, y) \mapsto \mu(x)
  \cdot \mu'(y). \] This induces the binary one-step composition
$(\times, \lambda^{\powsub})$ for $\powsub$-coalgebras by
\[ \lambda^{\powsub} \defeq (\pow \circ \lambda^\subdistmnd) \bullet
   (\lambda^\pow \circ \subdistmnd^2) . \]

\paragraph{Modality for $\times$}
From the discussion above, it is sufficient to find a modality
$\sigma$ for $\times$ that lifts both
distributive laws $\lambda^\pow$ and $\lambda^\subdistmnd$. Then
both distributive laws $\lambda^\pow$ and $\lambda^\subdistmnd$ become
liftable, hence so is $\lambda^{\mathcal{P D}}$.  Among several
modalities introduced in \S{}\ref{sec:pmet_codlift}, only $\sigmant$
lifts both binary operations. Therefore:
\begin{proposition} \label{prop:local_comp_metric}
The distributive law $\lambda^{\powsub}$ is liftable on countable pseudometric spaces
    w.r.t.~$\Ncodlift{\times}{d_{\intv}}{\sigmant}{2} \circ (\codlift{\pow}{d_{\intv}}{\mathrm{inf}} \circ \codlift{\subdistmnd}{d_{\intv}}{e})^N$
    and $\codlift{\pow}{d_{\intv}}{\mathrm{inf}}\circ \codlift{\subdistmnd}{d_{\intv}}{e} \circ \Ncodlift{\times}{d_{\intv}}{\sigmant}{2}$.
  \qed
\end{proposition}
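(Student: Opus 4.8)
The plan is to bypass the generic sufficient condition of Theorem~\ref{thm:suf_lift_dist}, whose approximating clause is unclear for $\subdistmnd$, and instead verify liftability directly by exploiting the factorization $\lambda^{\powsub} = (\pow\circ\distsubd)\bul(\distpow\circ\subdistmnd^2)$. The first step is a purely formal reduction: since liftings compose along vertical composition of natural transformations, and since $\sigmant$ satisfies \eqref{eq:star} so that $\Ncodlift{\times}{d_{\intv}}{\sigmant}{2}$ equals the concrete metric lifting $\lifttimes{\sigmant}$ of Proposition~\ref{prop:meet-free_pmet}, it suffices to produce a lifting $\dot{\distpow}$ of $\distpow$ with respect to $\lifttimes{\sigmant}\circ(\codlift{\pow}{d_{\intv}}{\mathrm{inf}})^2$ and $\codlift{\pow}{d_{\intv}}{\mathrm{inf}}\circ\lifttimes{\sigmant}$, and a lifting $\dot{\distsubd}$ of $\distsubd$ with respect to $\lifttimes{\sigmant}\circ(\codlift{\subdistmnd}{d_{\intv}}{e})^2$ and $\codlift{\subdistmnd}{d_{\intv}}{e}\circ\lifttimes{\sigmant}$. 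Whiskering $\dot{\distpow}$ by the Kantorovich lifting on the right and $\dot{\distsubd}$ by the Hausdorff lifting on the left, then composing vertically, yields a lifting of $\lambda^{\powsub}$ of exactly the type required by the proposition. As liftings are fibrewise and non-expansive maps compose, this reduces the problem to the liftability of the two factors, phrased concretely as the pseudometric inequalities below. Throughout I write $\lifttimes{\sigmant}(d_X,d_Y)$ for the pseudometric $((x,y),(x',y'))\mapsto 1-(1-d_X(x,x'))(1-d_Y(y,y'))$.

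For the Hausdorff factor I would argue by direct computation, with no countability needed. Writing the directed distance $\vec{\mathcal{H}}(d)(A,A') = \sup_{a\in A}\inf_{a'\in A'} d(a,a')$, the key observation is that the integrand $u+v-uv$ is monotone and continuous on $\intv^2$, so both the inner infimum over $A_2\times B_2$ and the outer supremum over $A_1\times B_1$ separate across the two coordinates. This gives $\vec{\mathcal{H}}(\lifttimes{\sigmant}(d_X,d_Y))(A_1\times B_1, A_2\times B_2) = \sigmant(\vec{\mathcal{H}}(d_X)(A_1,A_2), \vec{\mathcal{H}}(d_Y)(B_1,B_2))$; taking the maximum over the two directions and using monotonicity of $\sigmant$ then yields $\mathcal{H}(\lifttimes{\sigmant}(d_X,d_Y))(A_1\times B_1, A_2\times B_2)\le \sigmant(\mathcal{H}(d_X)(A_1,A_2),\mathcal{H}(d_Y)(B_1,B_2))$, which is precisely the non-expansiveness expressing that $\sigmant$ lifts $\distpow$.

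For the Kantorovich factor I would pass from the dual (test-function) description of $\mathcal{K}$ supplied by the codensity lifting to the primal (coupling) description, and this is where the restriction to countable carriers enters. A countable set with the discrete topology is Polish and the cost $d_X\le 1$ is continuous and bounded, so Kantorovich--Rubinstein strong duality gives $\mathcal{K}(d_X)(\mu_1,\mu_2) = \inf_\gamma \sum_{x,x'} d_X(x,x')\,\gamma(x,x')$, and likewise for $Y$. Choosing couplings $\gamma_X,\gamma_Y$ whose costs approximate $\mathcal{K}(d_X)(\mu_1,\mu_2)$ and $\mathcal{K}(d_Y)(\nu_1,\nu_2)$ arbitrarily well, the product coupling $\gamma_X\otimes\gamma_Y$ is a coupling of $\distsubd(\mu_1,\nu_1)$ and $\distsubd(\mu_2,\nu_2)$, and the algebraic identity $1-(1-u)(1-v)=u+v-uv$ factors over the product so that its $\lifttimes{\sigmant}$-transport cost equals $\sigmant(\sum d_X\gamma_X,\ \sum d_Y\gamma_Y)$. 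Weak duality (the easy inequality $\sup_k\le\inf_\gamma$) bounds $\mathcal{K}(\lifttimes{\sigmant}(d_X,d_Y))(\distsubd(\mu_1,\nu_1),\distsubd(\mu_2,\nu_2))$ above by this cost, and letting the couplings approach optimality and invoking continuity of $\sigmant$ gives the inequality $\mathcal{K}(\lifttimes{\sigmant}(d_X,d_Y))(\distsubd(\mu_1,\nu_1),\distsubd(\mu_2,\nu_2))\le \sigmant(\mathcal{K}(d_X)(\mu_1,\mu_2),\mathcal{K}(d_Y)(\nu_1,\nu_2))$, i.e.\ $\sigmant$ lifts $\distsubd$ on countable spaces.

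I expect the Kantorovich factor to be the main obstacle. The clean algebra of the product coupling only delivers the bound once $\mathcal{K}(d_X)$ is known to equal the coupling infimum, and justifying strong duality, hence the existence of near-optimal couplings realizing the dual value, on possibly infinite countable pseudometric spaces is exactly the technical content that the generic approximating condition of Theorem~\ref{thm:suf_lift_dist} does not obviously supply; this is why the statement is restricted to countable carriers and proved directly. The Hausdorff factor and the formal composition step, by contrast, should be routine, and the Hausdorff step applies unrestrictedly since it never refers to cardinality of the underlying set.
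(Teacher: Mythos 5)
Your overall route is the same as the paper's: you factor $\lambda^{\powsub}$ as $(\pow\circ\distsubd)\bul(\distpow\circ\subdistmnd^2)$, reduce liftability of the composite to liftability of the two factors (the whiskering/vertical-composition step you describe is exactly the paper's reduction, and it is compatible with the countability restriction since only the $\distsubd$ factor is applied at the restricted objects), prove the Hausdorff factor by a direct $\sup/\inf$ computation, and prove the Kantorovich factor via Kantorovich--Rubinstein duality on countable carriers. The paper packages the Kantorovich step as Lemma~\ref{lem:gebler}, an adaptation of Gebler et al.'s Theorem~2.20 proved by essentially the coupling argument you sketch; your only real deviation is that you inline that lemma for the specific modality $\sigmant$, exploiting that $\sigmant(u,v)=u+v-uv$ is affine in each argument so that the cost of the product coupling is exactly $\sigmant$ of the component costs. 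That computation is correct and slightly cleaner than invoking concavity, though it buys nothing in generality.

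There is, however, a genuine gap in your Hausdorff step: the separation identity
$\vec{\mathcal{H}}(\lifttimes{\sigmant}(d_X,d_Y))(A_1\times B_1,A_2\times B_2)=\sigmant\bigl(\vec{\mathcal{H}}(d_X)(A_1,A_2),\vec{\mathcal{H}}(d_Y)(B_1,B_2)\bigr)$
is false when any of the four sets is empty, and the empty cases are not vacuous, since non-expansiveness of $\distpow$ must hold on all of $\pow X\times\pow Y$ (coalgebra states may have empty successor sets). Recall the conventions $\sup\emptyset=0$ and $\inf\emptyset=1$, so that $\mathcal{H}(d)(\emptyset,A)=1$ for $A\neq\emptyset$. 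If, say, $A_1\times B_1\neq\emptyset$ while $A_2=\emptyset\neq B_2$, then $\mathcal{H}(\lifttimes{\sigmant}(d_X,d_Y))(A_1\times B_1,\emptyset)=1$, and the required bound $1\le\sigmant\bigl(\mathcal{H}(d_X)(A_1,\emptyset),\mathcal{H}(d_Y)(B_1,B_2)\bigr)$ survives only because $\sigmant(1,x)=\sigmant(x,1)=1$ --- a property of $\sigmant$ beyond \eqref{eq:star} (it fails for $\sigmap$, for instance) that your argument never invokes. This is precisely why the paper's Lemma~\ref{lem:pd_compat} lists $\sigma(1,x)=\sigma(x,1)=1$ as an explicit hypothesis and why its proof opens with a case analysis on $\lambda^\pow(A_1,B_1)$ or $\lambda^\pow(A_2,B_2)$ being empty. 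Your proof needs that case analysis added; once it is, the nonempty case goes through exactly as you describe, since monotonicity and continuity of $\sigmant$ do give the separation of $\sup$ and $\inf$ over nonempty product index sets.
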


As a consequence of the proposition above, given two MDPs
$\coalg{\powsub}{c_i}{X_i}$ over at most countable states $X_i$
($i=1,2$), if $d_i$ is a bisimulation metric for $c_i$, then the
mapping
\begin{displaymath}
  d((x, y), (x', y')) \defeq \sigmant (d_1(x, x'), d_2(y, y'))
\end{displaymath}
becomes a bisimulation metric for the composite MDP
$\lambda^\powsub \circ(c_1 \times c_2)$.  Moreover, the above
proposition implies preservation of \emph{bisimilarity metrics}.  If
$d_i$ is the bisimilarity metric for $c_i$, then the bisimilarity
metric for the composite MDP $\lambda^\powsub \circ(c_1 \times c_2)$
is bounded by $d$.  This bound itself was shown in
  \cite{DBLP:journals/corr/GeblerLT16}.  We give a proof in terms of
  composition of distributive laws.

We prove Proposition~\ref{prop:local_comp_metric} by checking if
$\sigmant$ lifts both binary operations
$(\times, \lambda^\pow)$ and $(\times, \lambda^\subdistmnd)$. Other
$\sigma$ written in \S{}\ref{sec:pmet_codlift} do not satisfy all
conditions in the lemma.
\begin{lemma} \label{lem:pd_compat}
  Let $\sigma\colon \intv^2 \to \intv$ be a modality satisfying \eqref{eq:star}, so that $\Ncodlift{\times}{d_\intv}{\sigma}{2} = \times^\sigma$.
  \begin{enumerate}
  \item The modality $\sigma$ lifts $\lambda^{\pow}$ if $\sigma$ is
    concave, $\sigma$ preserves infimums, and
    $\sigma(1, x) = \sigma(x, 1) = 1$ for any $x \in \intv$.

  \item The modality $\sigma$ lifts
    $\lambda^{\subdistmnd}$ on countable
    pseudometric spaces if $\sigma$ is concave.
\qed
  \end{enumerate}
\end{lemma}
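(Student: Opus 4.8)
The plan is to verify the defining inequality of liftability for each distributive law \emph{directly}, rather than through the sufficient condition of Theorem~\ref{thm:suf_lift_dist}, whose approximating clause is exactly the part that is unclear for these probabilistic liftings. By Proposition~\ref{prop:meet-free_pmet} we may replace $\Ncodlift{\times}{d_{\intv}}{\sigma}{2}$ by $\lifttimes{\sigma}$, so that, writing $\mathcal{L}$ for the codensity lifting of the behaviour functor at hand, the liftability of $\lambda$ unfolds to the non-expansiveness statement
\[
  \mathcal{L}\bigl(\lifttimes{\sigma}(d_X,d_Y)\bigr)\bigl(\lambda(p,q),\lambda(p',q')\bigr)
  \leq \sigma\bigl(\mathcal{L}(d_X)(p,p'),\mathcal{L}(d_Y)(q,q')\bigr)
\]
for all $(X,d_X),(Y,d_Y)$ and all arguments. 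In part~1 this is instantiated with $\lambda=\distpow$, $\mathcal{L}=\mathcal{H}$ and $\lambda(A,B)=A\times B$; in part~2 with $\lambda=\distsubd$, $\mathcal{L}=\mathcal{K}$ and $\lambda(\mu,\nu)=\mu\otimes\nu$ the product distribution.

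For part~1 I would use the closed form of the Hausdorff pseudometric as the maximum of its two directed $\sup$--$\inf$ distances. Fixing a directed part and a point $(x_1,y_1)\in A_1\times B_1$, the inner infimum over $A_2\times B_2$ factors: the two coordinates of $\lifttimes{\sigma}(d_X,d_Y)$ depend on disjoint variables, so preservation of infima gives $\inf_{(x_2,y_2)}\sigma(d_X(x_1,x_2),d_Y(y_1,y_2)) = \sigma\bigl(\inf_{x_2}d_X(x_1,x_2),\inf_{y_2}d_Y(y_1,y_2)\bigr)$, and the boundary condition $\sigma(1,x)=\sigma(x,1)=1$ is precisely what keeps this correct in the empty-set cases, where $\inf\emptyset=1$. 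Monotonicity of $\sigma$ then pulls the outer supremum inside $\sigma$ and, applied once more, replaces each directed distance by the symmetric $\mathcal{H}(d_X),\mathcal{H}(d_Y)$; taking the maximum over the two directed parts yields the bound. Here the decisive properties are monotonicity, preservation of infima, and the boundary condition.

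For part~2 I would pass, via Kantorovich--Rubinstein duality, from the $\sup$-over-Lipschitz definition of $\mathcal{K}$ to its primal form $\mathcal{K}(d)(\mu,\mu')=\inf_{\omega}\mathbb{E}_{\omega}[d]$ over couplings $\omega$ of $\mu,\mu'$; the countability hypothesis is what secures this duality and the existence of (near-)optimal couplings. Taking optimal couplings $\omega_X$ of $(\mu_1,\mu_2)$ and $\omega_Y$ of $(\nu_1,\nu_2)$, the product measure $\omega_X\otimes\omega_Y$ is a coupling of $\mu_1\otimes\nu_1$ and $\mu_2\otimes\nu_2$, and hence upper-bounds the left-hand $\mathcal{K}$-distance by its cost $\mathbb{E}_{\omega_X\otimes\omega_Y}\bigl[\sigma(d_X,d_Y)\bigr]$. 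The finishing step is an iterated Jensen inequality: since $\omega_X\otimes\omega_Y$ makes $d_X$ and $d_Y$ independent, integrating first over $\omega_Y$ and then over $\omega_X$ and using concavity of $\sigma$ in each argument gives $\mathbb{E}_{\omega_X\otimes\omega_Y}\bigl[\sigma(d_X,d_Y)\bigr]\leq \sigma\bigl(\mathbb{E}_{\omega_X}[d_X],\mathbb{E}_{\omega_Y}[d_Y]\bigr)=\sigma\bigl(\mathcal{K}(d_X)(\mu_1,\mu_2),\mathcal{K}(d_Y)(\nu_1,\nu_2)\bigr)$.

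The hard part will be part~2: matching the codensity ($\sup$-over-Lipschitz) description of $\mathcal{K}$ with the coupling-cost form and then applying Jensen across the independent product. Both the duality and the attainment of optima rest on the countability restriction, and if optimal couplings are not attained the argument must be run with $\varepsilon$-optimal couplings followed by a limit $\varepsilon\to 0$. Part~1, by contrast, is a purely order-theoretic manipulation once the infimum factorisation and the empty-set corner cases are dealt with.
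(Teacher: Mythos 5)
Your proposal is correct and takes essentially the same route as the paper: the paper also verifies liftability directly (it explicitly notes that Theorem~\ref{thm:suf_lift_dist} is not applied here), proves the $\lambda^{\pow}$ case by exactly your sup--inf factorisation of the Hausdorff lifting, with monotonicity, preservation of infima, and the boundary condition $\sigma(1,x)=\sigma(x,1)=1$ handling the empty-set corner cases, and reduces the $\lambda^{\subdistmnd}$ case to Lemma~\ref{lem:gebler}, whose proof is precisely the Kantorovich--Rubinstein duality, product-coupling, and iterated-Jensen argument on countable spaces that you sketch. The only cosmetic difference is that the paper defers that last argument to the cited adaptation of Gebler et al.\ rather than re-deriving it as you do.
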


The second statement comes from the following lemma, which is a direct
adaptation of~\cite[Theorem~2.20]{DBLP:journals/corr/GeblerLT16} in the
context of pseudometric spaces.
It can be proved in almost the same way as the original theorem; here
we rely on the Kantorovich-Rubinstein duality theorem on at most
countable sets~\cite{villani2009optimal}.

\begin{lemma}
  \label{lem:gebler}
  Let $X, Y$ be at most countable sets,
$(X, d_1), (Y, d_2)  \in \PMet$,
  and $\sigma\colon  \intv^2 \to \intv$ be a concave function satisfying \eqref{eq:star}.
  Then
  for each
  $\mu_1, \mu_1' \in \subdistmnd X$ and
  $\mu_2, \mu_2' \in \subdistmnd Y$,
  \begin{equation} \label{eq:gebler}
    \mathcal{K}(d)(\lambda^{\subdistmnd}_{X, Y}(\mu_1, \mu_2), \lambda^{\subdistmnd}_{X, Y}(\mu_1', \mu_2')) \leq \sigma(\mathcal{K}(d_1)(\mu_1, \mu_1'), \mathcal{K}(d_2)(\mu_2, \mu_2'))
  \end{equation}
  where $d$ is the pseudometric on $X \times Y$ given by
  $d((x, y), (x', y')) \coloneqq \sigma(d_1(x, x'), d_2(y, y'))$.
  \qed
\end{lemma}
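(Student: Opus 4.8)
The plan is to move from the dual (supremum) definition of the Kantorovich pseudometric $\mathcal{K}$ to its primal optimal-transport form, to build a transport plan for the product $\lambda^{\subdistmnd}$ from optimal plans of the two components, and then to exploit concavity of $\sigma$ through Jensen's inequality, exactly as in \cite[Theorem~2.20]{DBLP:journals/corr/GeblerLT16} but phrased for pseudometric spaces. Two preliminary remarks fix the setting. First, the conditions \eqref{eq:star} guarantee (Proposition~\ref{prop:meet-free_pmet}) that $d=\sigma(d_1,d_2)$ is a genuine pseudometric, so $\mathcal{K}(d)$ is defined. Second, the elements of $\subdistmnd X$ from \S\ref{sec:bisim_metric} are probability distributions, so each $\mu_i,\mu_i'$ has total mass $1$ and $\lambda^{\subdistmnd}_{X,Y}(\mu,\mu')(x,y)=\mu(x)\mu'(y)$ is again a probability distribution on $X\times Y$; this equal-mass property is precisely what makes the couplings below exist and the averaging step valid.

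First I would rewrite each component distance as a transport cost via Kantorovich--Rubinstein duality on at most countable sets \cite{villani2009optimal}. Since $d_1,d_2$ are $\intv$-valued, the spaces have diameter at most $1$, so any $1$-Lipschitz real function differs by a constant from a $[0,1]$-valued one, and this constant is immaterial to $\sum_x k(x)(\mu_i-\mu_i')(x)$ because $\mu_i$ and $\mu_i'$ have equal mass. Hence the $[0,1]$-restricted supremum defining $\mathcal{K}$ coincides with the unrestricted Kantorovich--Rubinstein supremum, and duality turns it into the primal minimum
\begin{displaymath}
  \mathcal{K}(d_i)(\mu_i,\mu_i') = \min_{\omega_i\in\Gamma(\mu_i,\mu_i')} \sum_{x,x'} d_i(x,x')\,\omega_i(x,x'),
\end{displaymath}
where $\Gamma(\mu_i,\mu_i')$ denotes the couplings, i.e.\ joint distributions with marginals $\mu_i$ and $\mu_i'$. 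On countable (hence Polish) spaces this minimum is attained; I fix optimal couplings $\omega_1,\omega_2$.

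Next I would form the product plan $\gamma((x,y),(x',y'))\coloneqq\omega_1(x,x')\,\omega_2(y,y')$ on $(X\times Y)^2$. Using that $\omega_i$ has marginals $\mu_i,\mu_i'$, a short marginal computation shows $\gamma$ has marginals $\lambda^{\subdistmnd}_{X,Y}(\mu_1,\mu_2)$ and $\lambda^{\subdistmnd}_{X,Y}(\mu_1',\mu_2')$, so $\gamma$ is admissible for the left-hand side of \eqref{eq:gebler} and the primal characterization gives
\begin{displaymath}
  \mathcal{K}(d)\big(\lambda^{\subdistmnd}(\mu_1,\mu_2),\lambda^{\subdistmnd}(\mu_1',\mu_2')\big)\leq \sum_{(x,x'),(y,y')}\sigma\big(d_1(x,x'),d_2(y,y')\big)\,\omega_1(x,x')\,\omega_2(y,y').
\end{displaymath}
The right-hand side is the average of $\sigma(d_1,d_2)$ against the probability measure $\omega_1\otimes\omega_2$, so Jensen's inequality for the concave $\sigma$ on $\intv^2$ bounds it by $\sigma$ evaluated at the two coordinatewise averages. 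Since $\omega_1\otimes\omega_2$ is a product measure these averages are $\sum_{x,x'} d_1(x,x')\omega_1(x,x')$ and $\sum_{y,y'} d_2(y,y')\omega_2(y,y')$, which by optimality of $\omega_1,\omega_2$ equal $\mathcal{K}(d_1)(\mu_1,\mu_1')$ and $\mathcal{K}(d_2)(\mu_2,\mu_2')$; this is precisely \eqref{eq:gebler}.

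The step I expect to be most delicate is the duality input: I must justify both that the supremum over non-expansive $[0,1]$-valued maps defining $\mathcal{K}$ really coincides with the transport-cost primal, and that an optimal coupling exists, which on countably infinite carriers needs tightness and weak lower semicontinuity of the cost rather than a finite linear program; this is exactly where \cite{villani2009optimal} and the pseudometric adaptation of \cite[Theorem~2.20]{DBLP:journals/corr/GeblerLT16} do the work. A second, conceptually important point is that the argument rests on $\omega_1\otimes\omega_2$ being a probability measure: couplings with the prescribed marginals exist only because $\mu_i$ and $\mu_i'$ have the same total mass, and Jensen needs a probability measure. This is why I read $\subdistmnd$ as the probability-distribution functor of \S\ref{sec:bisim_metric}; for genuine subdistributions of unequal mass the construction breaks and the inequality can fail outright (for instance with $\sigmap$, taking one factor to be a point mass and the other a scaled point mass), so the unit-mass structure is used in an essential way.
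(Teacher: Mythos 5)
Your proof is correct and takes essentially the same route as the paper, which establishes Lemma~\ref{lem:gebler} only by citation as a direct adaptation of \cite[Theorem~2.20]{DBLP:journals/corr/GeblerLT16} relying on Kantorovich--Rubinstein duality on at most countable sets \cite{villani2009optimal}; your argument (duality to pass from the supremum definition of $\mathcal{K}$ to optimal couplings, the product coupling realizing $\lambda^{\subdistmnd}$, then Jensen's inequality for the concave $\sigma$) is precisely that adaptation, phrased for pseudometric spaces. Your side remarks, that \eqref{eq:star} makes $d$ a pseudometric via Proposition~\ref{prop:meet-free_pmet} and that unit total mass is what makes couplings exist and the Jensen step valid, correctly locate where the hypotheses are used.
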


We did not apply Theorem~\ref{thm:suf_lift_dist} to this case
because
as of now,
it remains unsolved whether
the modality $\sigmant$ satisfies the second condition in Theorem~\ref{thm:suf_lift_dist}.
We note that the modality satisfies the first condition.
 \section{Composing Codensity Games} \label{sec:game}

We introduce a composition of codensity games, and show that it
preserves game invariants. As a consequence, we prove
the preservation of bisimilarities (the inequality shown in
Corollary~\ref{cor:presbisim})
under the sufficient condition in
Theorem~\ref{thm:suf_lift_dist}
by composition of games.

\subsection{Compositionality of Invariants}

We briefly recall definitions and results on codensity games.
We reserve the variable $i$ for $N$-indices,
and write $\vfami{x_i}{i}{N}$ for the sequence of mathematical entities $x_i$ (such as morphisms and objects)  indexed by $i \in N$.
\begin{definition}
  A \emph{safety game}
  is a game $\mathcal{G} = (Q_\duplicator, Q_\spoiler, E)$ played by two players $\duplicator$ (Duplicator) and $\spoiler$ (Spoiler)
  where
  $Q_\duplicator, Q_\spoiler$ are sets of positions of $\duplicator, \spoiler$ respectively
  and $E \subseteq (Q_\duplicator \times Q_\spoiler \cup Q_\spoiler \times Q_\duplicator)$
  is a set of possible moves.
  A play of $\mathcal{G}$ is a finite or infinite sequence of positions
  $q_0, q_1, \dots$
  such that $(q_i, q_{i+1}) \in E$ for each $i$.
  The player $\spoiler$ wins a play if the sequence is finite and the last position is in $Q_\duplicator$,
  and the player $\duplicator$ wins the game otherwise.

  A \emph{strategy} of $\duplicator$ is a partial function $s\colon Q^* \times Q_\duplicator \rightharpoonup Q_\spoiler$
  where $Q^*$ is the set of finite sequences of $Q_\duplicator \uplus Q_\spoiler$.
  A strategy $s$ of $\duplicator$ is \emph{winning} from $q$ if
  $\duplicator$ wins any play $q_0, q_1, \dots$ such that $q_0 = q$ and $q_{i+1} = s(q_0, q_1, \dots, q_{i})$ for each $q_i \in Q_\duplicator$.
  A position
  $q \in Q_\spoiler$ is \emph{winning} (for $\duplicator$) if there exists a winning strategy $s$ of $\duplicator$ from $q$.
\end{definition}

Winning positions on a safety games are characterizes by \emph{invariants}, which also induce winning strategies of $\duplicator$.
\begin{definition}
  A set $\mathcal{V} \subseteq Q_\spoiler$ is an \emph{invariant} for $\duplicator$ if
  for each $q \in \mathcal{V}$ and $q' \in Q_\duplicator$, $(q, q') \in E$ implies
  that
  there is $q'' \in \mathcal{V}$ such that $(q', q'') \in E$.
\end{definition}

\begin{proposition} \label{prop:win_invariant}
  A position $q \in Q_\spoiler$ is winning
  if and only if
  there is an invariant $\mathcal{V}$ for $\duplicator$ such that $q \in \mathcal{V}$.
  \qed
\end{proposition}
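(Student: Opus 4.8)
The plan is to prove both directions of this standard safety-game characterization, using the set of \emph{all} winning positions as the maximal invariant.

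For the ``if'' direction, suppose $\mathcal{V}$ is an invariant for $\duplicator$ with $q \in \mathcal{V}$. I would construct a winning strategy from $q$. Using the axiom of choice, for every $q' \in Q_\duplicator$ that admits some $q'' \in \mathcal{V}$ with $(q', q'') \in E$, fix one such $q''$, and let the (memoryless) strategy $s$ send any history ending at $q'$ to this chosen $q''$. The key claim, proved by induction along a play $q_0 = q, q_1, q_2, \dots$ consistent with $s$, is that every $\duplicator$-reachable $\spoiler$-position $q_{2k}$ lies in $\mathcal{V}$: the base case is $q_0 = q \in \mathcal{V}$, and the inductive step uses the invariant property at $q_{2k} \in \mathcal{V}$ to guarantee that after $\spoiler$'s move $q_{2k} \to q_{2k+1}$ there is a witness $q_{2k+2} \in \mathcal{V}$ with $(q_{2k+1}, q_{2k+2}) \in E$, which is precisely what $s$ selects. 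Consequently $\duplicator$ is never stuck at a $\duplicator$-position, so no consistent play starting from $q$ is finite ending in $Q_\duplicator$; by definition $\duplicator$ wins every such play, hence $s$ is winning and $q$ is a winning position.

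For the ``only if'' direction, I would take $\mathcal{V}$ to be the set of all winning positions in $Q_\spoiler$; it contains $q$ by hypothesis, and it remains to verify the invariant condition. Given a winning $p \in \mathcal{V}$ with winning strategy $s$, and any $p' \in Q_\duplicator$ with $(p, p') \in E$, I first argue that $s$ must be defined on the history $(p, p')$: otherwise the finite play $p, p'$ is consistent with $s$ and ends in $Q_\duplicator$, so $\spoiler$ wins, contradicting that $s$ is winning. Writing $p'' \defeq s(p, p')$, we have $(p', p'') \in E$ and $p'' \in Q_\spoiler$; the shifted strategy $s'(\vec r, r) \defeq s(p, p', \vec r, r)$ is then a winning strategy from $p''$, since every play from $p''$ consistent with $s'$ is in bijection with a play from $p$ consistent with $s$ (prefixed by $p, p'$) and hence won by $\duplicator$. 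Thus $p'' \in \mathcal{V}$ is the required witness, and $\mathcal{V}$ is an invariant containing $q$.

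The main obstacle is the bookkeeping forced by the history-dependent strategies of the definition. In the ``if'' direction one must check that a memoryless choice remains consistent and that $\duplicator$ genuinely never runs out of moves; in the ``only if'' direction the delicate points are (i) justifying that the winning strategy is defined on the relevant one-step extension --- which crucially uses that getting $\duplicator$ stuck at a $\duplicator$-position is \emph{exactly} $\spoiler$'s winning condition --- and (ii) checking that the shifted strategy inherits the winning property via the correspondence of consistent plays. Neither step is deep, but both must be phrased carefully to match the precise definitions of play, consistency, and winning given above.
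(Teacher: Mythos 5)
Your proof is correct. Note that the paper itself gives no proof of this proposition at all: it is stated with a terminal \qed as a standard fact about safety games, so there is no ``paper approach'' to compare against. Your argument is exactly the classical one that the paper implicitly relies on --- a memoryless strategy extracted from an invariant via choice for the ``if'' direction, and the set of \emph{all} winning positions as the maximal invariant, together with the shifted-strategy argument, for the ``only if'' direction.

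Two small points of care, both of which you essentially flag yourself. First, under a fully literal reading of the paper's definition of winning strategy, a finite play that stops prematurely at a $\duplicator$-position (even where $s$ is defined) would count as consistent and be won by $\spoiler$, which would make the proposition false; your proof tacitly adopts the intended reading, namely that a consistent play ends at a $\duplicator$-position only when the strategy offers no legal continuation. Second, in the ``only if'' direction, besides arguing that $s$ is \emph{defined} on the history $(p,p')$, you should also rule out that $s(p,p')$ is an \emph{illegal} move (the paper's strategies are arbitrary partial functions into $Q_\spoiler$, not constrained to follow $E$); the same stuck-position argument handles this, since a history with no legal continuation per $s$ again yields a consistent finite play ending in $Q_\duplicator$. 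Neither point is a genuine gap.
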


Codensity games are safety games that are induced by codensity liftings and coalgebras.
\begin{definition}[codensity games~\cite{DBLP:journals/ngc/KomoridaKHKHEH22}]
  Let $(\pfib p\EE\BB A\O\bO,F,\tau)$ be codensity bisimulation data
  and $c\colon X \to F(X)$ be an $F$-coalgebra.
The \emph{codensity game} $\cgame{c}$ is the safety game $(Q_\duplicator, Q_\spoiler, E)$ played by two players $\duplicator,\spoiler$
where $Q_\duplicator \coloneqq \{(a, k) \mid a \in A, k \in \BB(X, \O(a))\}$,
$Q_\spoiler \coloneqq \mathrm{Obj}(\EE_X)$, and $E \coloneqq {\to_\spoiler} \cup {\to_\duplicator}$ is
given by
\begin{align*}
&P \to_\spoiler (a, k) \text{ if }\tau_a \circ Fk \circ c\colon P\decent{\nrightarrow} \bO(a), \\
&(a, k) \to_\duplicator P' \text{ if } k\colon P' \decent{\nrightarrow} \bO(a).
\end{align*}
\end{definition}

We show characterizations of invariants and winning positions of codensity games, which are required in the latter part.
\begin{proposition}[{\cite{DBLP:journals/ngc/KomoridaKHKHEH22}}] \label{prop:win_bisim}
  \begin{enumerate}
    \item
  A set $\mathcal{V} \subseteq \mathrm{Obj}(\mathbb{E}_X)$ is an invariant for $\duplicator$ in the codensity game $\cgame{c}$
  if and only if $\bigsqcup_{P \in \mathcal{V}} P$ is an $\codlift{F}{\bO}{\tau}$-bisimulation on $c$.
    \item
  A position $P \in \EE_X$ is winning in $\mathcal{G}_c$
  if and only if
  $P$ is a \emph{witness} of codensity bisimilarity on $c$, that is, $P\sqsubseteq  \nu\bigl(c^* \circ \codlift{F}{\bO}{\tau} \bigr)$.
  \qed
  \end{enumerate}
\end{proposition}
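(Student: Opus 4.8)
The plan is to prove part~1 by unfolding both the invariant condition and the codensity bisimulation condition into equivalent statements about fibre inequalities in $\EE_X$, and then to derive part~2 from part~1 together with Proposition~\ref{prop:win_invariant}. Throughout I would exploit the faithfulness of $\clat$-fibrations recalled in \S\ref{sec:clatfib}, namely that $f\colon P \dto Q$ holds iff $P \sqsubseteq f^*Q$.

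First I would rewrite the two move relations of $\cgame{c}$ as inequalities. Since $\tau_a \circ Fk \circ c$ is a base morphism $X \to \O(a)$ and $k\colon X\to\O(a)$, faithfulness gives that the spoiler move $P \to_{\spoiler} (a,k)$ holds exactly when $P \not\sqsubseteq (\tau_a \circ Fk \circ c)^* \bO(a)$, and the duplicator move $(a,k) \to_{\duplicator} P'$ holds exactly when $P' \not\sqsubseteq k^* \bO(a)$. In parallel I would expand the bisimulation condition for a candidate $W \in \EE_X$: because $c^*$ preserves meets in a $\clat$-fibration and $\codlift{F}{\bO}{\tau}(W) = \bigsqcap_{a,\, k \in \EE(W,\bO(a))} (\tau_a \circ F(pk))^*\bO(a)$, the inequality $W \sqsubseteq c^* \circ \codlift{F}{\bO}{\tau}(W)$ is equivalent to: for every $a\in A$ and every $g\colon X \to \O(a)$ with $W \sqsubseteq g^*\bO(a)$, one has $W \sqsubseteq (\tau_a \circ Fg \circ c)^*\bO(a)$.

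For part~1, set $W \defeq \bigsqcup_{P \in \mathcal{V}} P$. The key step is a twofold use of the fact that a join is a least upper bound. Taking the contrapositive of the inner implication in the invariant condition, the existential ``$\exists P'' \in \mathcal{V}$ with $(a,k)\to_{\duplicator} P''$'' negates to ``$\forall P'' \in \mathcal{V},\ P'' \sqsubseteq k^*\bO(a)$'', which is precisely $W \sqsubseteq k^*\bO(a)$. This precondition is independent of $P$, so the outer universal over $P \in \mathcal{V}$ may be moved onto the conclusion, where ``$\forall P \in \mathcal{V},\ P \sqsubseteq (\tau_a \circ Fk \circ c)^*\bO(a)$'' collapses to $W \sqsubseteq (\tau_a \circ Fk \circ c)^*\bO(a)$. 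Hence the invariant condition becomes: for all $a$ and all $k$ with $W \sqsubseteq k^*\bO(a)$, $W \sqsubseteq (\tau_a \circ Fk \circ c)^*\bO(a)$ --- exactly the expanded bisimulation condition for $W$ derived above. For part~2 I would then invoke Proposition~\ref{prop:win_invariant} together with the fact that $\nu(c^* \circ \codlift{F}{\bO}{\tau})$ is the greatest codensity bisimulation on $c$. If $P$ is winning there is an invariant $\mathcal{V} \ni P$; by part~1, $W = \bigsqcup_{Q\in\mathcal{V}} Q$ is a bisimulation, whence $P \sqsubseteq W \sqsubseteq \nu(c^* \circ \codlift{F}{\bO}{\tau})$. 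Conversely, if $P \sqsubseteq \nu(c^* \circ \codlift{F}{\bO}{\tau})$, I would take $\mathcal{V} \defeq \{Q \in \EE_X \mid Q \sqsubseteq \nu(c^* \circ \codlift{F}{\bO}{\tau})\}$, whose join is the greatest bisimulation itself; by part~1 this $\mathcal{V}$ is an invariant, and it contains $P$, so $P$ is winning.

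The argument is essentially bookkeeping, and I expect no deep obstacle. The one point requiring genuine care is the alignment of quantifier ranges: I must check that the meet defining $\codlift{F}{\bO}{\tau}(W)$ ranges over exactly those base morphisms $g$ with $W \sqsubseteq g^*\bO(a)$ (equivalently, over $\EE$-morphisms $W \to \bO(a)$), so that it matches the precondition ``$W \sqsubseteq k^*\bO(a)$'' appearing in the unfolded invariant, and that pulling $c^*$ inside the meet is justified by meet-preservation. The double collapse of quantifiers over $\mathcal{V}$ into statements about $\bigsqcup \mathcal{V}$ is the conceptual heart of part~1, and everything else follows formally.
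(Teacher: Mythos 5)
Your proof is correct. The paper itself offers no argument for this proposition (it is imported from Komorida et al.\ \cite{DBLP:journals/ngc/KomoridaKHKHEH22} with the proof omitted), and your reconstruction --- unfolding both the invariant condition and the bisimulation condition $W \sqsubseteq c^*\circ\codlift{F}{\bO}{\tau}(W)$ into fibre inequalities via faithfulness and meet-preservation of $c^*$, collapsing the quantifiers over $\mathcal{V}$ into statements about the join $W$, and then deducing part~2 from Proposition~\ref{prop:win_invariant} together with the Knaster--Tarski characterization of $\nu(c^*\circ\codlift{F}{\bO}{\tau})$ as the greatest post-fixpoint --- is exactly the standard argument underlying the cited result, including the one delicate point you flag, namely that the meet in the codensity lifting ranges over $\EE$-morphisms $W \to \bO(a)$, which correspond precisely to base morphisms $k$ with $W \sqsubseteq k^*\bO(a)$.
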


Now we move to a codensity game of a composite coalgebra.
Given $F$-coalgebras $c_i\colon X_i \to F(X_i)$ ($i \in N$),
the codensity game $\cgame{\coalgmerge T\lambda{\vfami{c_i}{i}{N}}}$
of the composite coalgebra $\coalgmerge T\lambda{\vfami{c_i}{i}{N}}$
is given by
\begin{align*}
&(P \in \EE_{T(\vfami{X_i}{i}{N})}) \to_\spoiler (a, k)
 \text{ if }
\tau_a \circ Fk \circ (\lambda \circ T(\vfami{c_i}{i}{N}))\colon
      P \decent{\nrightarrow} \bO(a), \\
&(a \in A, k \in \BB(T(\vfami{X_i}{i}{N}), \O(a))) \to_\duplicator P' \text{ if $k\colon P' \decent{\nrightarrow} \bO(a)$.}
\end{align*}
In general, the codensity game $\cgame{\coalgmerge T\lambda{\vfami{c_i}{i}{N}}}$ may not be composed from each game of $\cgame{c_i}$:
in particular, positions and moves may not be composed by products, and therefore invariants may not be.

Instead, we define
a composite codensity game (Definition~\ref{def:comp_cgame}), where invariants are compositionally preserved:
\begin{definition}[composite codensity games] \label{def:comp_cgame}
  Let $(\pfib p\EE\BB A\O\bO,F,\tau)$ be a codensity bisimulation data
  and $c_i\colon X_i \to F(X_i)$ ($i \in N$) be $F$-coalgebras.  The
  \emph{composite codensity game} $\compcgame{c_i}{i}{N}{T}{\sigma}$
  is the safety game $(Q_\duplicator, Q_\spoiler, E)$ by two players
  $\duplicator,\spoiler$ where
  $Q_\duplicator \coloneqq \{(a, \vfami{k_i}{i}{N}) \mid a \in A,
  k_i\in \mathbb{C}(X_i,\Omega(a))\}$,
  $Q_\spoiler \coloneqq \prod_i \mathrm{Obj}(\mathbb{E}_{X_i})$,
  and $E \coloneqq \to_\spoiler \cup \to_\duplicator$ is given by
\begin{align*}
&\vfami{P_i}{i}{N} \to_\spoiler (a, \vfami{k_i}{i}{N})
 \text{ if }
\sigma_a \circ T(\vfami{\tau_a \circ Fk_i \circ c_i}{i}{N})\colon
      \Ncodlift{T}{\bO}{\sigma}{N}\vfami{P_i}{i}{N} \decent{\nrightarrow} \bO(a), \\
&(a, \vfami{k_i}{i}{N}) \to_\duplicator \vfami{P'_i}{i}{N} \text{ if there is $i \in N$ s.t.~$k_i\colon P'_i \decent{\nrightarrow} \bO(a)$.}
\end{align*}
\end{definition}

The following proposition
shows that we can construct an invariant by composing invariants of component games $\cgame{c_i}$.
\begin{proposition} \label{prop:invariant_compositional_bisim}
  If each $\mathcal{V}_i \subseteq \mathrm{Obj}(\mathbb{E}_{X_i})$ is an invariant of $\cgame{c_i}$ for $\duplicator$
  then $\prod_{i \in N}\mathcal{V}_i$ is an invariant of $\compcgame{c_i}{i}{N}{T}{\sigma}$ for $\duplicator$.
  \qed
\end{proposition}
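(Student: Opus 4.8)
The plan is to verify the invariant condition of Definition~\ref{def:comp_cgame} index-by-index, reducing it to the invariant conditions of the component games $\cgame{c_i}$. First I would fix a Spoiler position $\vfami{P_i}{i}{N}\in\prod_{i\in N}\mathcal{V}_i$ (so $P_j\in\mathcal{V}_j$ for every $j$) together with a Duplicator position $(a,\vfami{k_i}{i}{N})$ admitting a move $\vfami{P_i}{i}{N}\to_\spoiler(a,\vfami{k_i}{i}{N})$; by definition this means $\sigma_a\circ T(\vfami{\tau_a\circ Fk_i\circ c_i}{i}{N})\colon \Ncodlift{T}{\bO}{\sigma}{N}\vfami{P_i}{i}{N}\decent{\nrightarrow}\bO(a)$. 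The goal is to exhibit some $\vfami{P'_i}{i}{N}\in\prod_{i\in N}\mathcal{V}_i$ with $(a,\vfami{k_i}{i}{N})\to_\duplicator\vfami{P'_i}{i}{N}$.

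The hard part, and the main obstacle, will be to show that a composite Spoiler move forces a Spoiler move $P_i\to_\spoiler(a,k_i)$ in \emph{some} component game $\cgame{c_i}$; this is where the codensity lifting of the \mergefunc functor does its work. I would argue by contraposition using the explicit meet formula for the $N$-codensity lifting from \S\ref{sec:codlift_structure}. Suppose that no component move exists, i.e.\ $g_i\defeq \tau_a\circ Fk_i\circ c_i\colon P_i\dto\bO(a)$ for every $i$; by faithfulness of the fibration each $g_i$ lifts uniquely to a morphism $\dot g_i\in\EE(P_i,\bO(a))$. Then the pair consisting of the modality index $a$ and the tuple $\vfami{\dot g_i}{i}{N}$ is one of the indices over which the defining meet is taken,
\[
  \Ncodlift{T}{\bO}{\sigma}{N}\vfami{P_i}{i}{N}
  =\bigsqcap_{\substack{a'\in A,\\ k'_i\in\EE(P_i,\bO(a'))}}
   \bigl(\sigma_{a'}\circ T(\vfami{pk'_i}{i}{N})\bigr)^*\bO(a'),
\]
and since $p\dot g_i=g_i$ the corresponding factor is $(\sigma_a\circ T(\vfami{g_i}{i}{N}))^*\bO(a)$. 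As the meet lies below each factor, $\Ncodlift{T}{\bO}{\sigma}{N}\vfami{P_i}{i}{N}\sqsubseteq(\sigma_a\circ T(\vfami{g_i}{i}{N}))^*\bO(a)$, equivalently $\sigma_a\circ T(\vfami{\tau_a\circ Fk_i\circ c_i}{i}{N})\colon \Ncodlift{T}{\bO}{\sigma}{N}\vfami{P_i}{i}{N}\dto\bO(a)$, contradicting the assumed composite move. Hence at least one index $i$ admits $P_i\to_\spoiler(a,k_i)$ in $\cgame{c_i}$.

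Finally I would invoke the component invariants. Since $P_i\in\mathcal{V}_i$ and $\mathcal{V}_i$ is an invariant of $\cgame{c_i}$, the move $P_i\to_\spoiler(a,k_i)$ provides some $P'_i\in\mathcal{V}_i$ with $(a,k_i)\to_\duplicator P'_i$, that is $k_i\colon P'_i\decent{\nrightarrow}\bO(a)$. I then assemble the Duplicator response by placing this $P'_i$ at the chosen index and keeping $P'_j\defeq P_j\in\mathcal{V}_j$ unchanged at every other index $j$; this tuple lies in $\prod_{i\in N}\mathcal{V}_i$, and since there is an index (namely $i$) with $k_i\colon P'_i\decent{\nrightarrow}\bO(a)$, Definition~\ref{def:comp_cgame} yields $(a,\vfami{k_i}{i}{N})\to_\duplicator\vfami{P'_i}{i}{N}$. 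This establishes the invariant condition, so $\prod_{i\in N}\mathcal{V}_i$ is an invariant of $\compcgame{c_i}{i}{N}{T}{\sigma}$.
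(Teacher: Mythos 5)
Your proof is correct and takes essentially the same approach as the paper's: both argue by contraposition, observing that if every $\tau_a\circ Fk_i\circ c_i\colon P_i\dto\bO(a)$ held, then the induced tuple of $\EE$-morphisms would index a factor of the defining meet of $\Ncodlift{T}{\bO}{\sigma}{N}(\vfami{P_i}{i}{N})$, blocking the composite Spoiler move, after which the component invariants supply the $P'_i$ for the Duplicator's response tuple. The only immaterial difference is that the paper instantiates the meet at the pullback objects $(\tau_a\circ Fk_i\circ c_i)^*\bO(a)$ and implicitly uses monotonicity of the lifting, whereas you instantiate it directly at the $P_i$.
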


\begin{example}
  Consider language equivalence of deterministic automata as presented in \S{}\ref{eg:lang_equiv}.
  For an $\Fda$-coalgebra $c\colon X \to \Fda X$,
  the codensity game $\cgame{c}$ is $((\Sigma \uplus \{\epsilon\}) \times \Set(X, 2), \mathrm{Obj}(\mathrm{EqRel}_X), \to_\spoiler \cup \to_\duplicator)$ with
  \begin{align*}
&P \to_\spoiler (\epsilon, k) \text{ if }\exists (x, y) \in P.~\pi_1 c(x) \neq \pi_1 c(y), \\
&P \to_\spoiler (a \in \Sigma, k) \text{ if }\exists (x, y) \in P.~k(\pi_2 c(x)) \neq k(\pi_2 c(y)), \\
&(a, k) \to_\duplicator P' \text{ if } \exists (x, y) \in P'.~k(x) \neq k(y),
  \end{align*}
and
  for two $\Fda$-coalgebras $\{c_i\colon X_i \to \Fda X_i\}_{i \in 2}$,
the composite codensity game $\compcgame{c_i}{i}{2}{\times}{\sigma_\land}$ is $((\Sigma \uplus \{\epsilon\}) \times \Set(X_1, 2) \times \Set(X_2, 2), \mathrm{Obj}(\mathrm{EqRel}_{X_1}) \times \mathrm{Obj}(\mathrm{EqRel}_{X_2}), \to_\spoiler \cup \to_\duplicator)$ with
  \begin{align*}
&(P_1, P_2) \to_\spoiler (\epsilon, k_1, k_2)  \\
&\quad \text{ if }\exists (x_1, y_1) \in P_1.~\exists (x_2, y_2) \in P_2. \\
&\phantom{\quad \text{ if }}\pi_1 c(x_1) \land \pi_1(c(x_2)) \neq \pi_1 c(y_1) \land \pi_1 c(y_2), \\
&(P_1, P_2) \to_\spoiler (a \in \Sigma, k_1, k_2) \\
&\quad \text{ if }\exists (x_1, y_1) \in P_1.~\exists (x_2, y_2) \in P_2. \\
&\phantom{\quad \text{ if }}k(\pi_2 c(x_1)) \land k(\pi_2(c(x_2))) \neq k(\pi_2 c(y_1)) \land k(\pi_2 c(y_2)), \\
&(a, k_1, k_2) \to_\duplicator (P'_1, P'_2) \text{ if } \exists i\in 2.~\exists (x, y) \in P'_i.~k(x_i) \neq k(y_i).
  \end{align*}
  In this composite codensity game,
  positions are composed by pairing
  and
  conditions are composed by the modality $\sigma_\land$.
\end{example}

\subsection{Preservation of Bisimilarities}

Under the sufficient condition in Theorem~\ref{thm:suf_lift_dist},
as discussed in \S{}\ref{sec:liftability},
we naturally have the following natural transformation above $\lambda$.
  \begin{align*}
  &R \circ \Sp[A]T\sigma \circ \Sp[A]{F^N}{\tau^N} \circ L_N
    \hfill \nonumber \\
  &\Rightarrow R \circ \Sp[A]F\tau \circ \Sp[A]T\sigma \circ L_N &\text{by \eqref{item:2-cell} in Thm.~\ref{thm:suf_lift_dist}}, \\
  &= \codlift{F}{\bO}{\tau} \Ncodlift{T}{\bO}{\sigma}{N} &\text{by \eqref{item:approximating} in Thm.~\ref{thm:suf_lift_dist}}. \\
  \end{align*}
It connects invariants in composite codensity games $\compcgame{c_i}{i}{N}{T}{\sigma}$,
which utilizes combination of modalities $\sigma \circ T(\vec{\tau})$,
and $\codlift{F}{\bO}{\tau}$-bisimulations composed by $\Ncodlift{T}{\bO}{\sigma}{N}$.
We introduce a
join-closure operation for invariants: for a set
$\mathcal{V} \subseteq \prod_{i \in N}
\mathrm{Obj}(\mathbb{E}_{X_i})$, we write $\overline{\mathcal{V}}$ for
$\mathcal{V} \cup \Big\{\fami{\bigsqcup_{\vfami{P_i}{i}{N} \in
    \mathcal{V}} P_i}{i}{N} \Big\}$.  We remark that if a set
$\mathcal{V}$ is an invariant of a codensity game $\mathcal{G}_c$ for
$\duplicator$ then so is $\overline{\mathcal{V}}$.

\begin{theorem} \label{thm:invariant_compositional_bisim}
  Assume that the sufficient condition in Theorem~\ref{thm:suf_lift_dist} holds
  and
  let $\mathcal{V} \subseteq \prod_{i \in N} \mathrm{Obj}(\mathbb{E}_{X_i})$
  be a set.
  \begin{enumerate}
    \item
  The set $\mathcal{V}$ is an invariant for $\duplicator$ in
  $\compcgame{c_i}{i}{N}{T}{\sigma}$
  if \\
  $\bigsqcup_{\vfami{P_i}{i}{N} \in \mathcal{V}}\Ncodlift{T}{\bO}{\sigma}{N}(\vfami{P_i}{i}{N})$
  is an $\codlift{F}{\bO}{\tau}$-bisimulation on $\coalgmerge T\lambda{\vfami{c_i}{i}{N}}$.
  \item
  The set $\overline{\mathcal{V}}$ is an invariant for $\duplicator$ in
$\compcgame{c_i}{i}{N}{T}{\sigma}$
  if and only if \\
  $\bigsqcup_{\vfami{P_i}{i}{N} \in \overline{\mathcal{V}}}\Ncodlift{T}{\bO}{\sigma}{N}(\vfami{P_i}{i}{N})$
  is an $\codlift{F}{\bO}{\tau}$-bisimulation on $\coalgmerge T\lambda{\vfami{c_i}{i}{N}}$.
  \qed
  \end{enumerate}
\end{theorem}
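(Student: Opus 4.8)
The plan is to reduce both sides of each equivalence to one common family of fibre-wise inequalities and compare them. Write $d \defeq \coalgmerge T\lambda{\vfami{c_i}{i}{N}}$ for the composite coalgebra, $R_{\mathcal V}\defeq\bigsqcup_{\vfami{P_i}{i}{N}\in\mathcal V}\Ncodlift{T}{\bO}{\sigma}{N}(\vfami{P_i}{i}{N})$, and let $\bar P_i\defeq\bigsqcup_{\vfami{P_j}{j}{N}\in\mathcal V}P_i$ be the $i$-th componentwise join, so that $\overline{\mathcal V}=\mathcal V\cup\{\vfami{\bar P_i}{i}{N}\}$ and, since $\Ncodlift{T}{\bO}{\sigma}{N}$ is monotone (it preserves meets by Theorem~\ref{th:2-fun}), $R_{\overline{\mathcal V}}=\Ncodlift{T}{\bO}{\sigma}{N}(\vfami{\bar P_i}{i}{N})$.

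First I would unfold the invariant condition of Definition~\ref{def:comp_cgame}. Because the Duplicator move is monotone in each component $P'_i$, the existence of a response inside $\mathcal V$ is equivalent to $(a,\vfami{k_i}{i}{N})\to_\duplicator\vfami{\bar P_i}{i}{N}$, i.e.\ to $\bar P_i\not\sqsubseteq k_i^*\bO(a)$ for some $i$; dually, monotonicity of $\Ncodlift{T}{\bO}{\sigma}{N}$ and of the Spoiler move turns ``some position of $\mathcal V$ has a Spoiler move into $(a,\vfami{k_i}{i}{N})$'' into $R_{\mathcal V}\not\sqsubseteq h_a(\vfami{k_i}{i}{N})^*\bO(a)$, where $h_a(\vfami{k_i}{i}{N})\defeq\sigma_a\circ T(\vfami{\tau_a\circ Fk_i\circ c_i}{i}{N})$. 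Taking the contrapositive, $\mathcal V$ is an invariant if and only if, for every $a$ and every $\vfami{k_i}{i}{N}$ with $k_i\colon\bar P_i\dto\bO(a)$ for all $i$, one has $R_{\mathcal V}\sqsubseteq h_a(\vfami{k_i}{i}{N})^*\bO(a)$; the same statement with $R_{\overline{\mathcal V}}$ in place of $R_{\mathcal V}$ characterises invariance of $\overline{\mathcal V}$.

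The bridge to the coalgebra is the identity
\[
  h_a(\vfami{k_i}{i}{N})\;=\;\tau_a\circ F\bigl(\sigma_a\circ T(\vfami{k_i}{i}{N})\bigr)\circ d,
\]
which I would derive from condition~\eqref{item:2-cell} of Theorem~\ref{thm:suf_lift_dist}, i.e.\ $\sigma_a\circ T(\fami{\tau_a}{i}{N})=\tau_a\circ F\sigma_a\circ\lambda$, together with functoriality of $T$ and naturality of $\lambda$, after factoring each $\tau_a\circ Fk_i\circ c_i$ and pulling $\lambda\circ T(\vfami{c_i}{i}{N})=d$ to the outside. Writing $g\defeq\sigma_a\circ T(\vfami{k_i}{i}{N})$, this identifies the Spoiler test morphism $h_a(\vfami{k_i}{i}{N})$ with the bisimulation test morphism $\tau_a\circ Fg\circ d$; moreover the hypothesis $k_i\colon\bar P_i\dto\bO(a)$ forces $g\colon R_{\overline{\mathcal V}}\dto\bO(a)$, hence also $g\colon R_{\mathcal V}\dto\bO(a)$ since $R_{\mathcal V}\sqsubseteq R_{\overline{\mathcal V}}$.

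The first statement is now the easy direction: if $R_{\mathcal V}$ is an $\codlift F\bO\tau$-bisimulation on $d$, i.e.\ $R_{\mathcal V}\sqsubseteq d^*\codlift F\bO\tau(R_{\mathcal V})=\bigsqcap_{a,\,g\colon R_{\mathcal V}\dto\bO(a)}(\tau_a\circ Fg\circ d)^*\bO(a)$, then every generating morphism $g=\sigma_a\circ T(\vfami{k_i}{i}{N})$ above indexes a term of this meet, so $R_{\mathcal V}\sqsubseteq h_a(\vfami{k_i}{i}{N})^*\bO(a)$ and $\mathcal V$ is an invariant. For the second statement I would prove the converse as well, using that $R_{\overline{\mathcal V}}=\conc p\bO(S)$ for $S\defeq\Sp[A]{T}{\sigma}\circ\abs{p^N}{\cptpl\bO N}(\vfami{\bar P_i}{i}{N})$: the approximating condition~\eqref{item:approximating} of Theorem~\ref{thm:suf_lift_dist} rewrites $\codlift F\bO\tau(R_{\overline{\mathcal V}})=\conc p\bO\circ\Sp[A]{F}{\tau}(S)$ as the meet indexed by exactly the generating morphisms $\sigma_a\circ T(\vfami{k_i}{i}{N})$, so the bisimulation inequality for $R_{\overline{\mathcal V}}$ collapses to the invariant inequality, giving the ``if and only if''. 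The main obstacle is precisely this last reduction: for a general $\mathcal V$ the object $R_{\mathcal V}$ need not be of the form $\conc p\bO(S)$, so the approximating condition does not apply to it and arbitrary test morphisms $g\colon R_{\mathcal V}\dto\bO(a)$ cannot be cut down to generating ones — this is the structural reason the first statement is only an implication, and why passing to the join-closed $\overline{\mathcal V}$, which makes $R_{\overline{\mathcal V}}$ a genuine codensity lifting, is needed for the equivalence.
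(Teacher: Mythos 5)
Your proof is correct and takes essentially the same route as the paper's: the paper factors the argument through an auxiliary result (Lemma~\ref{lem:invariant_comp}) stating that, under both conditions of Theorem~\ref{thm:suf_lift_dist}, $\mathcal{V}$ is an invariant iff $\bigsqcup_{\vfami{P_i}{i}{N}\in\mathcal V}\Ncodlift{T}{\bO}{\sigma}{N}(\vfami{P_i}{i}{N})\sqsubseteq\bigl(\coalgmerge T\lambda{\vfami{c_i}{i}{N}}\bigr)^{*}\codlift{F}{\bO}{\tau}\,\Ncodlift{T}{\bO}{\sigma}{N}\bigl(\fami{\bigsqcup_{\vfami{P_i}{i}{N}\in\mathcal V}P_i}{i}{N}\bigr)$, and the proof of that lemma is precisely your unfolding of the game into fibrewise inequalities over the generating morphisms $\sigma_a\circ T(\vfami{\tau_a\circ Fk_i\circ c_i}{i}{N})$, using condition~(1) for your ``bridge identity'' and condition~(2) to collapse the codensity meet to generator form, after which part~1 follows by monotonicity and part~2 from the identity $\bigsqcup_{\vfami{P_i}{i}{N}\in\overline{\mathcal V}}\Ncodlift{T}{\bO}{\sigma}{N}(\vfami{P_i}{i}{N})=\Ncodlift{T}{\bO}{\sigma}{N}\bigl(\fami{\bigsqcup_{\vfami{P_i}{i}{N}\in\mathcal V}P_i}{i}{N}\bigr)$. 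The only differences are presentational: you inline the lemma and sharpen the dependency analysis by noting that part~1 needs only the 2-cell condition~(1) (the paper routes both parts through the lemma, hence through both conditions), and your closing explanation of why part~1 is merely an implication --- that $R_{\mathcal V}$ for general $\mathcal V$ is not itself a codensity-lifting value, so the approximating condition cannot be applied to it --- is exactly the paper's rationale for introducing the join-closure $\overline{\mathcal{V}}$.
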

\begin{corollary} \label{cor:comp_win_invariant}
  Assume that the sufficient condition in Theorem~\ref{thm:suf_lift_dist} holds.
  If $\vfami{P_i}{i}{N}$ is a winning position for $\duplicator$
  with an invariant $\overline{\mathcal{V}}$
  then
  $\Ncodlift{T}{\bO}{\sigma}{N}(\vfami{P_i}{i}{N})$ is a witness  on $\coalgmerge T\lambda{\vfami{c_i}{i}{N}}$, that is, $\Ncodlift{T}{\bO}{\sigma}{N}(\vfami{P_i}{i}{N})\sqsubseteq  \nu\Bigl({\big(\coalgmerge T\lambda{\vfami{c_i}{i}{N}}\big)}^* \circ \codlift{F}{\bO}{\tau} \Bigr)$.
  \qed
\end{corollary}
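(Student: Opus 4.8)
The plan is to deduce the bound by chaining two results already established: the characterization of invariants of composite games as bisimulations (Theorem~\ref{thm:invariant_compositional_bisim}) and the fact that codensity bisimilarity is the greatest $\codlift{F}{\bO}{\tau}$-bisimulation. No fresh computation is needed; the entire combinatorial content has been front-loaded into Theorem~\ref{thm:invariant_compositional_bisim}, so the corollary is a short deduction.

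First I would unpack the hypothesis through Proposition~\ref{prop:win_invariant}: to say that $\vfami{P_i}{i}{N}$ is a winning position for $\duplicator$ \emph{with} invariant $\overline{\mathcal{V}}$ means precisely that $\overline{\mathcal{V}}$ is an invariant for $\duplicator$ in $\compcgame{c_i}{i}{N}{T}{\sigma}$ and that $\vfami{P_i}{i}{N}\in\overline{\mathcal{V}}$. Since the sufficient condition of Theorem~\ref{thm:suf_lift_dist} is assumed, I would then apply item~(2) of Theorem~\ref{thm:invariant_compositional_bisim} in its forward direction: because $\overline{\mathcal{V}}$ is an invariant, the object
\begin{displaymath}
  B \defeq \bigsqcup_{\vfami{Q_i}{i}{N}\in\overline{\mathcal{V}}} \Ncodlift{T}{\bO}{\sigma}{N}(\vfami{Q_i}{i}{N})
\end{displaymath}
is an $\codlift{F}{\bO}{\tau}$-bisimulation on the composite coalgebra $\coalgmerge T\lambda{\vfami{c_i}{i}{N}}$. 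As $\nu\bigl((\coalgmerge T\lambda{\vfami{c_i}{i}{N}})^* \circ \codlift{F}{\bO}{\tau}\bigr)$ is the greatest postfixpoint in the complete lattice $\EE_{T(\vfami{X_i}{i}{N})}$, hence the greatest such bisimulation, this gives $B \sqsubseteq \nu\bigl((\coalgmerge T\lambda{\vfami{c_i}{i}{N}})^* \circ \codlift{F}{\bO}{\tau}\bigr)$.

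Finally, since $\vfami{P_i}{i}{N}\in\overline{\mathcal{V}}$, its image $\Ncodlift{T}{\bO}{\sigma}{N}(\vfami{P_i}{i}{N})$ is one of the components of the join defining $B$, so $\Ncodlift{T}{\bO}{\sigma}{N}(\vfami{P_i}{i}{N}) \sqsubseteq B$; transitivity of $\sqsubseteq$ then yields the claimed inequality. The only point that genuinely demands care --- and the closest thing to an obstacle --- is that the argument must use the \emph{join-closed} invariant $\overline{\mathcal{V}}$ and not an arbitrary invariant $\mathcal{V}$: it is exactly for $\overline{\mathcal{V}}$ that Theorem~\ref{thm:invariant_compositional_bisim} supplies the implication ``invariant $\Rightarrow$ bisimulation'' (item~(1) alone provides only the converse direction for a general $\mathcal{V}$, which would not let us conclude that $B$ is a bisimulation). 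This is precisely why the statement is phrased in terms of $\overline{\mathcal{V}}$, and recording that dependence is the substantive part of the write-up.
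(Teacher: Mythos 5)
Your proof is correct and follows exactly the route the paper intends (the corollary is stated without explicit proof precisely because it is this short deduction): unpack "winning with invariant $\overline{\mathcal{V}}$" via Proposition~\ref{prop:win_invariant}, apply the "invariant $\Rightarrow$ bisimulation" direction of Theorem~\ref{thm:invariant_compositional_bisim}(2), use that $\nu\bigl((\coalgmerge T\lambda{\vfami{c_i}{i}{N}})^*\circ\codlift{F}{\bO}{\tau}\bigr)$ is the greatest postfixpoint, and conclude by $\Ncodlift{T}{\bO}{\sigma}{N}(\vfami{P_i}{i}{N})$ being a component of the join. Your remark that item~(1) alone would not suffice --- since it only gives the converse implication for a general $\mathcal{V}$, whereas the join-closed $\overline{\mathcal{V}}$ is what makes the needed direction available --- is exactly the right observation about why the statement is phrased with $\overline{\mathcal{V}}$.
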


Finally, through a composite codensity game,
we prove the preservation of bisimilarities (the inequalities shown in Corollary~\ref{cor:presbisim}).
In the proof, we instead prove the following equivalent statement:

\begin{proposition}
  Assume that the sufficient condition in Theorem~\ref{thm:suf_lift_dist} holds.
  If $P_i$ is a witness on $c_i$, that is, $P_i\sqsubseteq  \nu\Bigl({c_i}^* \circ \codlift{F}{\bO}{\tau} \Bigr)$  for each $i$,
  then $\Ncodlift{T}{\bO}{\sigma}{N}(\vfami{P_i}{i}{N})$ is a witness on $T_\lambda(c_1,\cdots,c_N)$, that is, $\Ncodlift{T}{\bO}{\sigma}{N}(\vfami{P_i}{i}{N}) \sqsubseteq \nu(T_\lambda(c_1,\cdots,c_N)^* \circ \codlift{F}{\bO}{\tau})$.
\end{proposition}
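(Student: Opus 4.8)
The plan is to avoid any direct fixpoint manipulation and instead give a purely game-theoretic proof, assembling the results on composite codensity games. First I would reinterpret the hypotheses: by Proposition~\ref{prop:win_bisim}(2), saying that $P_i$ is a witness on $c_i$ (i.e.\ $P_i \sqsubseteq \nu(c_i^* \circ \codlift{F}{\bO}{\tau})$) is exactly the statement that $P_i$ is a winning position for $\duplicator$ in the codensity game $\cgame{c_i}$. Consequently, by Proposition~\ref{prop:win_invariant}, for each $i \in N$ there exists an invariant $\mathcal{V}_i \subseteq \mathrm{Obj}(\mathbb{E}_{X_i})$ for $\duplicator$ in $\cgame{c_i}$ with $P_i \in \mathcal{V}_i$.

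Next I would combine these component invariants into the composite game. By Proposition~\ref{prop:invariant_compositional_bisim}, the product $\prod_{i \in N} \mathcal{V}_i$ is an invariant for $\duplicator$ in the composite codensity game $\compcgame{c_i}{i}{N}{T}{\sigma}$. Since $P_i \in \mathcal{V}_i$ for every $i$, the tuple $\vfami{P_i}{i}{N}$ lies in $\prod_{i} \mathcal{V}_i$, so $\vfami{P_i}{i}{N}$ is a winning position of the composite game. To connect this to a witness on the composite coalgebra I invoke Corollary~\ref{cor:comp_win_invariant}, which is stated for \emph{join-closed} invariants $\overline{\mathcal{V}}$. I therefore pass to the join-closure $\overline{\prod_{i} \mathcal{V}_i}$; this is again an invariant for $\duplicator$ (by the remark preceding Theorem~\ref{thm:invariant_compositional_bisim}), and it still contains $\vfami{P_i}{i}{N}$ because $\prod_{i} \mathcal{V}_i \subseteq \overline{\prod_{i} \mathcal{V}_i}$. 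Thus $\vfami{P_i}{i}{N}$ is a winning position witnessed by a join-closed invariant.

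Finally, applying Corollary~\ref{cor:comp_win_invariant} to the invariant $\overline{\prod_{i} \mathcal{V}_i}$ yields precisely $\Ncodlift{T}{\bO}{\sigma}{N}(\vfami{P_i}{i}{N}) \sqsubseteq \nu(T_\lambda(c_1,\dots,c_N)^* \circ \codlift{F}{\bO}{\tau})$, i.e.\ that $\Ncodlift{T}{\bO}{\sigma}{N}(\vfami{P_i}{i}{N})$ is a witness on $T_\lambda(c_1,\dots,c_N)$, as required. Note that, since the argument starts from an \emph{arbitrary} witness $P_i$ on each $c_i$ (not only the greatest one), it directly delivers the full generality of the statement, and by monotonicity of $\Ncodlift{T}{\bO}{\sigma}{N}$ it recovers the codensity instance of Corollary~\ref{cor:presbisim}.

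The substantive work is already packaged in Proposition~\ref{prop:invariant_compositional_bisim} and Corollary~\ref{cor:comp_win_invariant} (the latter relying on the sufficient condition of Theorem~\ref{thm:suf_lift_dist}), so the present statement is essentially bookkeeping. The only step that must not be skipped, and the main pitfall, is the bridge between the two notions of invariant: Proposition~\ref{prop:invariant_compositional_bisim} produces the \emph{product} invariant, whereas Corollary~\ref{cor:comp_win_invariant} expects a \emph{join-closed} invariant $\overline{\mathcal{V}}$, so one must explicitly verify that passing to $\overline{\prod_{i}\mathcal{V}_i}$ both preserves invariance and keeps $\vfami{P_i}{i}{N}$ inside.
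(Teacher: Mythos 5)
Your overall architecture is exactly the paper's own proof: translate witnesses into winning positions via Proposition~\ref{prop:win_bisim}, extract invariants via Proposition~\ref{prop:win_invariant}, compose them via Proposition~\ref{prop:invariant_compositional_bisim}, and conclude via Corollary~\ref{cor:comp_win_invariant}. However, there is a genuine gap at exactly the step you flag as the main pitfall. You form the product invariant $\prod_i \mathcal{V}_i$ of the composite game first and only then pass to its join-closure $\overline{\prod_i \mathcal{V}_i}$, citing the remark preceding Theorem~\ref{thm:invariant_compositional_bisim} to claim the closure is still an invariant. That remark is stated (and proved, in the appendix) only for codensity games $\cgame{c}$, not for composite codensity games $\compcgame{c_i}{i}{N}{T}{\sigma}$, and its proof does not transfer: for $\cgame{c}$ the argument rests on the fact that Spoiler's move condition has the form $\bigsqcup_{P\in\mathcal{V}}P \not\sqsubseteq g^*\bO(a)$, which immediately yields a single $P'\in\mathcal{V}$ with $P'\not\sqsubseteq g^*\bO(a)$. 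In the composite game the condition instead reads $\Ncodlift{T}{\bO}{\sigma}{N}(\vfami{P_i}{i}{N}) \not\sqsubseteq g^*\bO(a)$ with $g = \sigma_a\circ T(\vfami{\tau_a\circ Fk_i\circ c_i}{i}{N})$, and since the codensity lifting $\Ncodlift{T}{\bO}{\sigma}{N}$ preserves meets but not joins, knowing that no tuple in $\prod_i\mathcal{V}_i$ triggers the Spoiler move does not rule out that the tuple of componentwise joins does. So the invariance of $\overline{\prod_i\mathcal{V}_i}$ is not free; it is precisely where the sufficient condition must enter, and the remark you cite cannot supply it.

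The paper sidesteps this by performing the join-closure at the component level, where the remark legitimately applies: it takes join-closed invariants $\overline{\mathcal{V}}_i$ of the codensity games $\cgame{c_i}$ containing $P_i$, applies Proposition~\ref{prop:invariant_compositional_bisim} to those, and then appeals to the set identity $\prod_{i\in N}\overline{\mathcal{V}}_i = \overline{\prod_{i\in N}\mathcal{V}_i}$ so that the resulting product invariant has the join-closed shape demanded by Corollary~\ref{cor:comp_win_invariant}. Your route can be repaired without changing its shape, but by a different argument than the one you cite: since each $\mathcal{V}_i$ is an invariant of $\cgame{c_i}$, Proposition~\ref{prop:win_bisim}(1) makes $\bigsqcup_{P\in\mathcal{V}_i}P$ a $\codlift{F}{\bO}{\tau}$-bisimulation on $c_i$; the sufficient condition makes $\lambda$ liftable (Theorem~\ref{thm:suf_lift_dist}), so Theorem~\ref{th:liftop} makes $\Ncodlift{T}{\bO}{\sigma}{N}\bigl(\vfami{\bigsqcup_{P\in\mathcal{V}_i}P}{i}{N}\bigr)$ a bisimulation on the composite coalgebra; by monotonicity this object equals $\bigsqcup_{\vfami{P_i}{i}{N}\in\overline{\prod_i\mathcal{V}_i}}\Ncodlift{T}{\bO}{\sigma}{N}(\vfami{P_i}{i}{N})$, so Theorem~\ref{thm:invariant_compositional_bisim} certifies that $\overline{\prod_i\mathcal{V}_i}$ is an invariant. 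Either fix closes the gap; as written, your justification of the closure step is invalid.
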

We outline the alternative proof of the proposition by composite codensity games.
If each $P_i$ is a witness on codensity bisimilarity on $c_i$,
$P_i$ is a winning position for $\duplicator$ in $\cgame{c_i}$ by Proposition~\ref{prop:win_bisim}.
Then
Proposition~\ref{prop:win_invariant} ensures the existence of
invariants $\overline{\mathcal{V}}_i$ of $\cgame{c_i}$ such that $P_i \in \overline{\mathcal{V}}_i$.
Proposition~\ref{prop:invariant_compositional_bisim}
and
$\prod_{i \in N} \overline{\mathcal{V}}_i = \overline{\prod_{i \in N} \mathcal{V}_i}$
imply that $\vfami{P_i}{i}{N}$
is a winning position with the invariant $\overline{\prod_{i \in N} \mathcal{V}_i}$.
Therefore, by
Corollary~\ref{cor:comp_win_invariant} we conclude
that $\Ncodlift{T}{\bO}{\sigma}{N}(\vfami{P_i}{i}{N})$ is a witness of codensity bisimilarity on $\coalgmerge T\lambda{\vfami{c_i}{i}{N}}$.

\section{Conclusion}
In this paper, we have presented generalized codensity liftings in a 2-categorical framework. This development has facilitated the derivation of liftings of structure functors, especially binary product functors, and enhanced our understanding and manipulation of codensity liftings.

By integrating the 2-categorical framework of generalized codensity liftings and
structure functors through codensity liftings, we were able to identify a sufficient condition to lift a distributive law between functors into one between codensity liftings.
Additionally, we have investigated our sufficient condition from the perspective of compositional reasoning of codensity games.

For future work,
we will continue to explore and expand the scope of our approach, particularly considering
different structure functors.

  Additionally, we intend to streamline the current categorical notation using the syntax of categorical logic.

Recently, higher-order abstract GSOS frameworks~\cite{DBLP:journals/pacmpl/GoncharovMSTU23,DBLP:conf/lics/Urbat00MS23,DBLP:conf/fossacs/GoncharovSSTU24} are actively studied:
they extend behavior endofunctors to bifunctors and use certain dinatural transformations for distributive laws.
Seeking a higher-order extension of our framework, that is,
studying a liftability of distributive laws along codensity liftings in a higher-order setting is an interesting future direction.
More generally, in the current paper we focused on compositionality for individual operations. Accomodating
languages and calculi in our approach as defined, for instance, by (abstract) GSOS specifications, remains future work.

\section*{Acknowledgement}
  The authors are gradeful to Ichiro Hasuo, and Noboru Isobe for helpful discussions,
  and to the anonymous reviewers for helpful feedback.
  MK, KW and SK were supported by ERATO HASUO Metamathematics for
  Systems Design Project (No. JPMJER1603), JST.
  MK was supported by JSPS DC KAKENHI Grant (No. 22J21742).
  KW was supported by the JST grants No.~JPMJFS2136 and  JPMJAX23CU.
  JR was supported by the NWO grant No.~OCENW.M20.053.

\bibliographystyle{plain}
\bibliography{lics2024}

\ifarxiv
\clearpage
\onecolumn
\appendix
\section{Omitted Proofs}

\subsection{Omitted Proofs for Section~\ref{sec:codlift_structure}}
\begin{proposition}[Proof for Example.~\ref{eg:ncod_times}]
  In Example.~\ref{eg:ncod_times},
  the \ptfibs{}
  $\pfibU p {\mathbf{EqRel}} \Set 1 2 {(2, \mathrm{Eq_2})}$
  and
  $\pfibU p {\mathbf{Top}} \Set 1 2 {(2, \{\emptyset, \{\mathrm{true}\}, 2\})}$
  satisfy
  $\conc{p}{\bO} \abs{p}{\bO} = \mathrm{id}$.
\end{proposition}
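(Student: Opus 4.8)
The plan is to unfold the two adjoint functors in the special case $A=1$ and reduce the statement to a separation property of the truth object $\bO$. For a \ptfib{} $\pfib p\EE\BB 1\O\bO$, composing the concrete descriptions of $\abs p\bO$ and $\conc p\bO$ recalled after \eqref{eq:decomp} gives
\begin{displaymath}
  \conc p\bO \circ \abs p\bO(P) = \bigsqcap_{f\colon pP\to\O,\ P\sqsubseteq f^*\bO} f^*\bO ,
\end{displaymath}
where the index ranges over exactly those $f$ that lift to a morphism $P\dto\bO$, via the equivalence $f\colon P\dto\bO \iff P\sqsubseteq f^*\bO$ from \S\ref{sec:clatfib}. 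The inequality $P\sqsubseteq\conc p\bO\circ\abs p\bO(P)$ is automatic, since every term $f^*\bO$ of the meet already dominates $P$ (equivalently, it is the unit of $\abs p\bO\dashv\conc p\bO$). Hence it suffices to prove the reverse inequality $\conc p\bO\circ\abs p\bO(P)\sqsubseteq P$ in each fibration, i.e.\ that $P$ is recovered as the meet of pullbacks of the truth object.

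For $\pfgt{\EqRel}$ with $\bO=(2,\Eq[2])$, I would first note that for $f\colon X\to 2$ the pullback is the kernel $f^*\Eq[2]=\{(x,y)\mid f(x)=f(y)\}$, and that an equivalence relation $R$ satisfies $R\sqsubseteq f^*\Eq[2]$ exactly when $f$ is constant on each $R$-class. Since here $\sqsubseteq$ is inclusion and the fibrewise meet is intersection, the reverse inequality reads $\bigcap_{R\subseteq f^*\Eq[2]}f^*\Eq[2]\subseteq R$. This follows by a two-valued separation argument: given $(x,y)\notin R$, take $f$ to be the characteristic function of the class $[x]_R$; then $f$ respects $R$, yet $f(x)=\ttt\neq\fff=f(y)$, so $(x,y)$ is excluded from the meet.

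For $\pfgt{\Top}$ with $\bO$ the Sierpi\'nski space $(2,\{\emptyset,\{\ttt\},2\})$, I would use that the continuous maps $f\colon (X,O_X)\to\bO$ are precisely the characteristic functions $\chi_U$ of open sets $U\in O_X$, with pullback $f^*\bO=\{\emptyset,U,X\}$; in particular each such $f$ lies in the indexing family. The crucial bookkeeping is that the fibre order $\sqsubseteq$ on topologies is \emph{reverse} inclusion, so the fibrewise meet $\bigsqcap$ is the topology generated by the union of its arguments. Thus $\bigsqcap_{U\in O_X}\{\emptyset,U,X\}$ is the topology generated by $\bigcup_{U\in O_X}\{\emptyset,U,X\}=O_X$, which equals $O_X$ because $O_X$ is already a topology; this gives the desired equality directly.

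The main obstacle is not any deep argument but keeping the order conventions straight: one must correctly identify $\sqsubseteq$ and $\bigsqcap$ in each fibre ($\EqRel$: inclusion and intersection; $\Top$: reverse inclusion and generated topology) and verify that the separating, respectively generating, functions genuinely belong to the indexing family $\{f\mid P\sqsubseteq f^*\bO\}$. Once these are pinned down, both cases collapse to the elementary facts that an equivalence relation is the intersection of the kernels of the two-valued maps it respects, and that a topology is generated by the characteristic maps of its own opens.
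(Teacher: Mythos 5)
Your proof is correct, and it takes a genuinely different route from the paper's own proof. The paper begins by asserting that $\conc{p}{\bO}\circ\abs{p}{\bO}=\mathrm{id}$ is \emph{equivalent} to its single instance at the object $\bO$, namely $\bigsqcap_{k\colon \bO\dto\bO}(pk)^*\bO=\bO$, and then checks only that instance (in $\EqRel$ via $k=\mathrm{id}$, in $\Top$ via the generated-topology computation carried out for the Sierpi\'nski space alone). You instead verify the defining condition $\bigsqcap_{f\colon P\dto\bO}(pf)^*\bO=P$ at \emph{every} object $P$: by separation with characteristic functions of $R$-classes in $\EqRel$, and by the identification of continuous maps into the Sierpi\'nski space with characteristic functions of opens in $\Top$. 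What your extra work buys is completeness: the reduction the paper invokes is stated without justification, and the direction it actually needs --- that the instance at $\bO$ implies the identity on all of $\EE$ --- is not a general fact about \ptfibs. For instance, in the $\clat$-fibration over $\Set$ whose every fibre is the two-element lattice $\{\bot,\top\}$ and whose reindexing maps are all identities, with $\O=2$ and $\bO=(2,\top)$, every $k\colon\bO\dto\bO$ satisfies $(pk)^*\bO=\top$, so the $\bO$-instance holds, yet $\conc{p}{\bO}\circ\abs{p}{\bO}(X,\bot)=(X,\top)\neq(X,\bot)$. So the paper's one-object shortcut is really a special property of these two fibrations, and your object-by-object check is exactly the kind of argument needed to substantiate it; your proof stands on its own as written.
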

\begin{proof}
  $\conc{p}{\bO} \abs{p}{\bO} = \mathrm{id}$ is equivalent to $\bigsqcap_{k\colon \bO \to \bO}k^*\bO = \bO$.
  \begin{itemize}
    \item ($\pfibU p {\mathbf{EqRel}} \Set 1 2 {(2, \mathrm{Eq_2})}$):
      $\bigsqcap_{k\colon \bO \to \bO}k^*\bO = \{(x, y) \mid \forall k\colon \bO \to \bO.~kx = ky\} = \bO$ holds.
      The last equality is because $(x, y)$ in l.h.s.~satisfies $x = y$ by letting $k \coloneqq \mathrm{id}$, and $(x, y)$ in r.h.s.~satisfies $kx = ky$ for each $k\colon \bO \to \bO$.
    \item ($\pfibU p {\mathbf{Top}} \Set 1 2 {(2, \{\emptyset, \{\mathrm{true}\}, 2\})}$):
    $\bigsqcap_{k\colon \bO \to \bO}k^*\bO$ is the topology generated by $\bigcup_{k\colon \bO \to \bO} \{k^{-1}A \mid A \in \bO\}$.
    $\bO$ is included in the topology by $k \coloneqq \mathrm{id}\colon \bO \to \bO$,
    and the topology is included in $\bO$ because $k^{-1}A \in \bO$ for each $k\colon \bO \to \bO$ and $A \in \bO$.
  \end{itemize}
\end{proof}
\begin{proof}[Proof of Proposition~\ref{prop:meet-free_pmet}]
  \begin{enumerate}
    \item ($\Rightarrow$) Suppose $\times^\sigma\colon \mathbf{PMet}^2 \to \mathbf{PMet}$ is a lifting of the product functor $\times$.
    Consider arbitrary $a, b, c, d \in [0, 1]$.
    For $x, x', x'' \in \{a, b, c, d\}$
    we write $P_x \in \mathbf{PMet}_2$ for the pseudometric satisfying $P_x(0, 1) = x$,
    and $P_{x, x', x''} \in \mathbf{PMet}_3$ (when $x+x' \leq x'')$ for the one mapping $(0, 1)$ to $x$, $(1, 2)$ to $x'$, $(0, 2)$ to $x+x'$
    where $2=\{0, 1\}$ and $3=\{0, 1, 2\}$.
    The modality $\sigma$ is monotone because
    $a \leq c$ and $b \leq d$ imply
    $\sigma(a, b) = \sigma(P_a(0, 1), P_b(0, 1)) \leq \sigma(P_{c}(0, 1), P_{d}(0, 1)) = \sigma(c, d)$
    by monotonicity of the functor $\lifttimes{\sigma}$.
    $\sigma(0, 0) = \lifttimes{\sigma}(P_a, P_a)((0, 0), (0, 0)) = 0$ because $\lifttimes{\sigma}(P_a, P_a)$ is a pseudometric.
    $\sigma(a, b) = \lifttimes{\sigma}(P_{c, |a-c|, a}, P_{d, |b-d|, b})((0, 0), (2, 2)) \leq \lifttimes{\sigma}(P_{c, |a-c|, a}, P_{d, |b-d|, b})((0, 0), (1, 1)) + \lifttimes{\sigma}(P_{c, |a-c|, a}, P_{d, |b-d|, b})((1, 1), (2, 2)) = \sigma(|a-c|, |b-d|)+\sigma(c, d)$.
    Thus we have $\sigma(a, b) - \sigma(c, d) \leq \sigma(|a-c|, |b-d|)$.

    ($\Leftarrow$) Suppose $\sigma$ satisfies \eqref{eq:star}.
    Then $\lifttimes{\sigma}(P, Q)$ is a pseudometric for each $P, Q \in \mathbf{PMet}$:
    For arbitrary $x, x', x'' \in pP$, $y, y', y'' \in pQ$,
    \begin{itemize}
      \item
    $\lifttimes{\sigma}(P, Q)((x, y), (x, y)) = \sigma(P(x, x), Q(y, y)) = \sigma(0, 0) = 0$,
    \item
    $\lifttimes{\sigma}(P, Q)((x, y), (x', y')) = \sigma(P(x, x'), Q(y, y')) =  \sigma(P(x', x), Q(y', y)) = \lifttimes{\sigma}(P, Q)((x', y'), (x, y))$,
    \item
    $\lifttimes{\sigma}(P, Q)((x, y), (x'', y'')) = \sigma(P(x, x''), Q(y, y'')) \leq \sigma(P(x, x')+P(x', x''), Q(y, y')+Q(y', y'')) \leq \sigma(P(x, x'), Q(y, y'))+\sigma(P(x', x''), Q(y', y'')) = \lifttimes{\sigma}(P, Q)((x, y), (x', y')) + \lifttimes{\sigma}(P, Q)((x', y'), (x'', y''))$.
    \end{itemize}
    For each $f\colon P \to P'$ and $g\colon Q \to Q'$ in $\mathbf{PMet}$,
    $f\times g \colon \lifttimes{\sigma}(P, Q) \decent{\rightarrow} \lifttimes{\sigma}(P', Q')$ by monotonicity of $\sigma$.

    \item The third condition of \eqref{eq:star} is equivalent to
    $|\sigma(a, b) - \sigma(c, d)| \leq \sigma(|a-c|, |b-d|)$ for each $a, b, c, d \in [0, 1]$,
    and it corresponds to $\sigma\colon \times^\sigma(d_{[0, 1]}) \decent{\rightarrow} d_{[0, 1]}$.
    Therefore it induces $\times^\sigma(P, Q) \sqsubseteq \Ncodlift{\times}{d_{[0, 1]}}{\sigma}{2}(P, Q)$ for each $P, Q \in \mathbf{PMet}$ by Proposition~\ref{prop:codlift_decent}.
    For each $P, Q \in \mathbf{PMet}$,
    $\times^\sigma(P, Q) \sqsupseteq \Ncodlift{\times}{d_{[0, 1]}}{\sigma}{2}(P, Q)$
    because for each $x, x' \in pP$ and $y, y' \in pQ$,
    letting $k_1(t) \coloneqq P(x, t)$ and $k_2(t) \coloneqq Q(y, t)$,
    \begin{align*}
    \times^\sigma(P, Q)((x, y), (x', y')) &= \sigma(P(x, x'), Q(y, y')) \\
    &= \sigma(k_1(x'), k_2(y')) \\
    &= |\sigma(k_1(x), k_2(y)) - \sigma(k_1(x'), k_2(y'))| \\
    &= (\sigma \circ k_1 \times k_2)^*d_{[0, 1]}((x, y), (x', y')) \\
    &\sqsupseteq \Ncodlift{\times}{d_{[0, 1]}}{\sigma}{2}(P, Q)((x, y), (x', y')).
    \end{align*}
  \end{enumerate}
\end{proof}

\begin{proposition} \label{ap:prop:sigma_negneg}
  The modalities $\sigmant, \sigma_\lor$ defined by $\sigmant(a, b) \coloneqq 1-(1-a)(1-b)$ and $\sigma_\lor(a, b) \coloneqq \max(a, b)$ satisfy \eqref{eq:star}.
\end{proposition}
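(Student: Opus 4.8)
The plan is to verify the three clauses of \eqref{eq:star} for each of the two modalities separately. For both $\sigma_\lor$ and $\sigmant$ the first two clauses are immediate: $\max(a,b)$ is visibly monotone with $\max(0,0)=0$, and $\sigmant(a,b)=1-(1-a)(1-b)$ has partial derivatives $1-b$ and $1-a$, which are nonnegative on $\intv^2$, so it is monotone, with $\sigmant(0,0)=1-1\cdot 1=0$. Hence the only substantial work is the third clause, the non-expansiveness inequality $\sigma(a,b)-\sigma(c,d)\le\sigma(|a-c|,|b-d|)$.

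For $\sigma_\lor$ I would invoke the standard fact that $\max$ is $1$-Lipschitz. Assuming without loss of generality that $\max(a,b)=a$, one gets
\begin{displaymath}
  \max(a,b)-\max(c,d)=a-\max(c,d)\le a-c\le|a-c|\le\max(|a-c|,|b-d|),
\end{displaymath}
and the case where the maximum is attained at $b$ is symmetric.

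For $\sigmant$ the cleanest route is the change of variables $x\coloneqq 1-a$, $y\coloneqq 1-b$, $z\coloneqq 1-c$, $w\coloneqq 1-d$, all lying in $\intv$. Then $\sigmant(a,b)-\sigmant(c,d)=zw-xy$, while $|a-c|=|x-z|$ and $|b-d|=|y-w|$, so writing $p\coloneqq|x-z|$ and $q\coloneqq|y-w|$ the target inequality becomes $zw-xy\le p+q-pq$. I would prove this by a four-way case analysis on the signs of $z-x$ and $w-y$. The decisive case, and the step I expect to be the main obstacle, is $z\ge x$, $w\ge y$, where $z=x+p$, $w=y+q$ give $zw-xy=xq+py+pq$: here the naive estimate $zw-xy\le p+q$ does \emph{not} suffice, because the right-hand side carries the extra negative term $-pq$. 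The fix is to exploit that the variables are capped at $1$: from $z=x+p\le 1$ and $w=y+q\le 1$ one obtains $x\le 1-p$ and $y\le 1-q$, whence
\begin{displaymath}
  xq+py+pq\le(1-p)q+p(1-q)+pq=p+q-pq,
\end{displaymath}
with equality, so the bound is tight. The remaining three sign-cases are easier: in each of them $zw-xy$ reduces to $pw-xq$, $zq-py$, or $-zq-pw-pq$, which are bounded respectively by $p$, by $q$, and by $0$; since $p+q-pq-p=q(1-p)\ge 0$ and $p+q-pq-q=p(1-q)\ge 0$ we have $p+q-pq\ge\max(p,q)\ge 0$, so the inequality follows with room to spare.
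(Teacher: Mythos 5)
Your proof is correct and takes essentially the same route as the paper's: for $\sigmant$ both arguments reduce the third clause of \eqref{eq:star} to a four-way case analysis on the signs of the two differences, obtaining the identical per-case bounds ($\sigmant(|a-c|,|b-d|)$, $|a-c|$, $|b-d|$, and $0$, respectively) and using the cap at $1$ in the decisive both-positive case; your complementation substitution $x=1-a$, etc., is only a cosmetic repackaging of the paper's direct expansion $\epsilon_1(1-v_2)+\epsilon_2(1-v_1)+\epsilon_1\epsilon_2$. The $\sigma_\lor$ part is likewise the same elementary estimate, organized slightly more slickly via the WLOG on which argument attains the maximum.
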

\begin{proof}
  For each $\sigma \in \{\sigmant, \sigma_\lor\}$,
  It is easy to show that $\sigma$ is monotone  and $\sigma(0, 0)=0$.
  Here we only show $\sigma(a,b) - \sigma(c, d) \leq \sigma(|a-c|, |b-d|)$
  for each $a,b,c,d \in [0, 1]$.

  \begin{itemize}
    \item $(\sigma = \sigmant)$:
  Let $v_1 \coloneqq a$, $\epsilon_1 \coloneqq a-c$,
  $v_2 \coloneqq b$, and $\epsilon_2 \coloneqq b-d$.
  Then
  $\max(0, \epsilon_i) \leq v_i \leq \min(1, 1+\epsilon_i) \, (i=1, 2)$ and
(lhs) $= \sigmant(v_1, v_2) - \sigmant(v_1-\epsilon_1, v_2-\epsilon_2)
  = \epsilon_1(1-v_2)+\epsilon_2(1-v_1)+\epsilon_1\epsilon_2$.

  If $\epsilon_1, \epsilon_2 \geq 0$,
  (lhs) $\leq \epsilon_1(1-\epsilon_2)+\epsilon_2(1-\epsilon_1)+\epsilon_1\epsilon_2
  = \sigmant(\epsilon_1, \epsilon_2) \leq$ (rhs).

  If $\epsilon_1 \geq 0$ and $\epsilon_2 \leq 0$,
  (lhs) $\leq \epsilon_1(1-\epsilon_2)+\epsilon_2(1-1)+\epsilon_1\epsilon_2
  = \epsilon_1 \leq$ (rhs).

  If $\epsilon_1 \leq 0$ and $\epsilon_2 \geq 0$,
  (lhs) $\leq \epsilon_1(1-1)+\epsilon_2(1-\epsilon_1)+\epsilon_1\epsilon_2
  = \epsilon_2 \leq$ (rhs).

  If $\epsilon_1 \leq 0$ and $\epsilon_2 \leq 0$,
  (lhs) $\leq \epsilon_1(1-(1+\epsilon_2))+\epsilon_2(1-(1+\epsilon_1))+\epsilon_1\epsilon_2
  = -\epsilon_1\epsilon_2 \leq 0 \leq$ (rhs).
    \item $(\sigma = \sigma_{\lor})$:
      Since $\sigma_\lor(x, y) = \sigma_\lor(y, x)$ for each $x, y \in [0, 1]$,
      we can assume $a \leq b$ without loss of generality.

      If $c \leq d$, (lhs) = $b - d \leq |b-d| \leq$ (rhs).

      If $c > d$, (lhs) = $b - c$. If $b \leq c$, (lhs) $\leq 0 \leq$ (rhs).
      Otherwise ($b \geq c$), $b - c \leq b - d$ because $d \leq c \leq b$.
      Therefore (lhs) $= b - c \leq |b - d| \leq $ (rhs).
  \end{itemize}
\end{proof}

\begin{proof}[Proof of Proposition~\ref{prop:isom_f}]
  For each $\fami{P_i}{i}{N} \in \EE^N$ and $a \in A$,
$\{\sigma \circ T(\fami{k_i}{i}{N}) \mid k_i \colon P_i \decent{\to} \bO(a) \}
= \{\sigma \circ T(\fami{f \circ k_i}{i}{N}) \mid k_i \colon P_i \decent{\to} \bO(a) \}$
because
for each $\fami{k_i\colon P_i \decent{\to} \bO(a)}{i}{N}$, $\fami{f \circ k_i}{i}{N}, \fami{f^{-1} \circ k_i}{i}{N}\colon P_i \decent{\to} \bO(a)$ by $f^*\bO = \bO = (f^{-1})^*\bO$.
It induces
$\Sp[A]{T}\sigma \circ \abs p {\cptpl\bO N}
= \Sp[A]{T}{\sigma \circ T(f^N)} \circ \abs p {\cptpl\bO N}$.
Therefore, $\Ncodlift{T}{\bO}{\sigma}{N} =
  \conc p {\cptpl\bO N} \circ \Sp[A]{T}{\sigma} \circ \abs p {\cptpl\bO N}
  =
  \conc p {\cptpl\bO N} \circ \Sp[A]{T}{\sigma \circ T(f^N)} \circ \abs p {\cptpl\bO N}
  =
\Ncodlift{T}{\bO}{\sigma \circ T(f^N)}{N}$.

Similarly,
  for each $S \in \Sp[A]{\BB^N}{\bO}$ and $a \in A$,
$\bigsqcap_{a \in A, \fami{k_i}{i}{N} \in S} (\sigma \circ T(\fami{k_i}{i}{N}))^*\bO(a)
= \bigsqcap_{a \in A, \fami{k_i}{i}{N} \in S} (\sigma \circ T(\fami{k_i}{i}{N}))^*f^*\bO(a)
= \bigsqcap_{a \in A, \fami{k_i}{i}{N} \in S} (f \circ \sigma \circ T(\fami{k_i}{i}{N}))^*\bO(a)$.
It induces $\conc p {\cptpl\bO N} \circ \Sp[A]{T}{\sigma} = \conc p {\cptpl\bO N} \circ \Sp[A]{T}{f \circ \sigma}$.
Therefore,
$\Ncodlift{T}{\bO}{\sigma}{N} =
  \conc p {\cptpl\bO N} \circ \Sp[A]{T}{\sigma} \circ \abs p {\cptpl\bO N}
  =
  \conc p {\cptpl\bO N} \circ \Sp[A]{T}{f \circ \sigma} \circ \abs p {\cptpl\bO N}
  =
\Ncodlift{T}{\bO}{f \circ \sigma}{N}$.
\end{proof}

\subsection{Omitted Proofs for Section~\ref{sec:liftability}}
\begin{lemma} \label{lem:ap_join}
  In the setting of Proposition~\ref{prop:lift_lambda_join},
  if the three conditions in the proposition hold, then for each $a \in A$,
  \begin{displaymath}
  \conc p{\bO(a)} \circ \Sp[1]F{\tau_a}(S_a) \sqsubseteq \conc p{\bO(a)} \circ \Sp[1]F{\tau_a}\circ \abs{p}{\bO(a)} \circ \conc{p}{\bO}(S).
  \end{displaymath}
\end{lemma}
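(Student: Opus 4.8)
The plan is to unfold both sides using the concrete descriptions of $\abs{p}{\bO(a)}$, $\conc{p}{\bO(a)}$ and $\Sp[1]F{\tau_a}$ recalled in Section~\ref{sec:gcodlift}, and then to verify the inequality termwise. Writing $P \defeq \conc{p}{\bO}(S) = \bigsqcap_{a'\in A,\,k'\in S_{a'}} k'^*\bO(a')$, the left-hand side unfolds to
\begin{displaymath}
  \conc{p}{\bO(a)}\circ\Sp[1]F{\tau_a}(S_a) = \bigsqcap_{k\in S_a}(\tau_a\circ Fk)^*\bO(a),
\end{displaymath}
while, using $\abs{p}{\bO(a)}(P) = (X,\{pk \mid k\in\EE(P,\bO(a))\})$, the right-hand side unfolds to
\begin{displaymath}
  \bigsqcap_{h\colon P\dto\bO(a)}(\tau_a\circ Fh)^*\bO(a).
\end{displaymath}

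First I would observe that every $k\in S_a$ satisfies $P\sqsubseteq k^*\bO(a)$, since $k$ occurs among the morphisms in the meet defining $P$; hence $S_a$ is contained in the index set $\{h \mid h\colon P\dto\bO(a)\}$ of the right-hand meet. Thus the right-hand side is a meet over a \emph{larger} index set than the left, so $\text{(RHS)}\sqsubseteq\text{(LHS)}$ holds trivially, and the content of the lemma is exactly the reverse inequality. By the universal property of the meet, it then suffices to fix an arbitrary $h\colon P\dto\bO(a)$ and establish the termwise bound $\bigsqcap_{k\in S_a}(\tau_a\circ Fk)^*\bO(a)\sqsubseteq(\tau_a\circ Fh)^*\bO(a)$.

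For this fixed $h$, I would apply condition~(3) of Proposition~\ref{prop:lift_lambda_join} to obtain a subset $S'_k\subseteq S_a$ with $h=\bigvee_{k'\in S'_k}k'$. Condition~(1) then commutes the modality past this join, giving $\tau_a\circ Fh = \bigvee_{k'\in S'_k}\tau_a\circ Fk'$, and condition~(2), applied to the family $\{\tau_a\circ Fk' \mid k'\in S'_k\}\subseteq\BB(FX,\bO(a))$, yields
\begin{displaymath}
  \bigsqcap_{k'\in S'_k}(\tau_a\circ Fk')^*\bO(a)\sqsubseteq\Bigl(\bigvee_{k'\in S'_k}\tau_a\circ Fk'\Bigr)^*\bO(a) = (\tau_a\circ Fh)^*\bO(a).
\end{displaymath}
Since $S'_k\subseteq S_a$, the meet over all of $S_a$ is below the meet over $S'_k$, and chaining the two inequalities gives the desired termwise bound. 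Taking the meet over all $h\colon P\dto\bO(a)$ then completes the argument.

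The step I expect to demand the most care is the bookkeeping around the two index sets: recognizing that the right-hand side meets over the possibly much larger collection of \emph{all} $h$ with $P\dto\bO(a)$, and that condition~(3) is precisely what guarantees each such $h$ is a join of members of $S_a$, so that the termwise bound can be pushed through with conditions~(1) and~(2). Everything else is a routine application of the adjunction formulas for $\abs{}{}$ and $\conc{}{}$ and the definition of $\Sp[1]{}{}$.
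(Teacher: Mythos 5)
Your proof is correct and follows essentially the same route as the paper's: fix a morphism $h\colon \conc{p}{\bO}(S)\dto\bO(a)$ indexing the right-hand meet, use condition~(3) to write it as a join $\bigvee_{k'\in S'_k}k'$ of elements of $S_a$, shrink the left-hand meet to $S'_k$, and then apply conditions~(2) and~(1) to bound that by $(\tau_a\circ Fh)^*\bO(a)$. The only additions are the (correct) preliminary remark that the reverse inequality holds for free since $S_a$ is contained in the right-hand index set, and a harmless notational slip writing $\BB(FX,\bO(a))$ where $\BB(FX,\O(a))$ is meant.
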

\begin{proof}
  For each $a \in A$ and $k\colon \bigsqcap_{a' \in A, k' \in S_{a'}}k'^*\mathbf{\Omega}(a') \to \mathbf{\Omega}(a)$,
    \begin{align*}
    \bigsqcap_{k' \in S_a} (\tau_a \circ Fk')^*\bO(a)
    &\sqsubseteq \bigsqcap_{k' \in S'_k} (\tau_a \circ Fk')^* \bO(a) &\text{since }S'_k \subseteq S_a \\
    &\sqsubseteq (\bigvee_{k' \in S'_k} \tau_a \circ Fk')^* \bO(a) &\text{by the second condition} \\
    &= (\tau_a \circ F(\bigvee_{k' \in S'_k} k'))^* \bO(a) &\text{by the first condition} \\
    &= (\tau_a \circ Fpk)^* \bO(a).&\text{by the third condition}
    \end{align*}
\end{proof}

\begin{proof}[Proof of Proposition~\ref{prop:lift_lambda_join}]
  Because $\conc p\bO \circ \Sp[A]F{\tau}(S') = \bigsqcap_{a \in A}\conc p{\bO(a)} \circ \Sp[1]F{\tau_a} (S'_a)$ for each $S' \in \Sp[A]{\BB}{\O}$,
  $S$ is approximating to $\codlift{F}{\bO}{\tau}$ is equivalent to
  $\conc p\bO \circ \Sp[A]F{\tau}(S) \sqsubseteq \conc p{\bO(a)} \circ \Sp[1]F{\tau_a}\circ \abs{p}{\bO(a)} \circ \conc{p}{\bO}(S)$ for each $a \in A$.
  Lemma~\ref{lem:ap_join} and $\conc p\bO \circ \Sp[A]F{\tau}(S) \sqsubseteq \conc p{\bO(a)} \circ \Sp[1]F{\tau_a}(S_a)$ for each $a \in A$ conclude the proof.
\end{proof}

\begin{proof}[Proof of Proposition~\ref{prop:transfer_principle_decent}]
	Suppose that $\lambda$ is liftable with respect to $\Ncodlift{T}{\bO}{\sigma}{N}$, $\codlift{F}{\bO}{\tau}$, i.e.~there is a natural transformation $\dot{\lambda}\colon \Ncodlift{T}{\bO}{\sigma}{N}(\codlift{F}{\bO}{\tau})^N \Rightarrow \codlift{F}{\bO}{\tau} \Ncodlift{T}{\bO}{\sigma}{N}$ above $\lambda$.
	Then we have the natural transformation $G \circ \dot{\lambda}$ above $\lambda$.
	The natural transformation $G \circ \dot{\lambda}$ is
	from
	$\Ncodlift{T}{G\bO}{\sigma}{N}\circ (\codlift{F}{G\bO}{\tau})^N\circ G^N$ to $\codlift{F}{G\bO}{\tau} \circ \Ncodlift{T}{G\bO}{\sigma}{N} \circ G^N$
  because
	$G\circ \Ncodlift{T}{\bO}{\sigma}{N}\circ (\codlift{F}{\bO}{\tau})^N = \Ncodlift{T}{G\bO}{\sigma}{N}\circ (\codlift{F}{G\bO}{\tau})^N\circ G^N$
	and $G \circ \codlift{F}{\bO}{\tau} \circ \Ncodlift{T}{\bO}{\sigma}{N}
	= \codlift{F}{G\bO}{\tau} \circ \Ncodlift{T}{G\bO}{\sigma}{N} \circ G^N$ hold
	by the discussion of \S{}\ref{sec:transfer}.

	Because $G$ is full, it ensures the existence of a natural transformation from $\Ncodlift{T}{G\bO}{\sigma}{N}(\codlift{F}{G\bO}{\tau})^N$ to $\codlift{F}{G\bO}{\tau} \Ncodlift{T}{G\bO}{\sigma}{N}$ above $\lambda$.
\end{proof}

\begin{proof}[Proof of Proposition~\ref{prop:lift_dist_a}]
  $\Ncodlift{T}{\bO}{\sigma}{N}\circ(\codlift{F}{\bO}{\tau})^N
  \Rightarrow \bigsqcap_{a \in A} \Ncodlift{T}{\bO}{\sigma}{N}\circ(\codlift{F}{\bO(a)}{\tau_a})^N
  \Rightarrow \bigsqcap_{a \in A}\codlift{F}{\bO(a)}{\tau_a} \circ \Ncodlift{T}{\bO}{\sigma}{N}
  = \codlift{F}{\bO}{\tau} \circ \Ncodlift{T}{\bO}{\sigma}{N}.$
  The first natural transformation is above $\mathrm{id}$ and the second one is above $\lambda$.
\end{proof}

\subsection{Omitted Proofs for Section~\ref{sec:examples}}

\subsubsection{Bisimilarity Pseudometric} \label{ap:bisim_pseudo}
\begin{proposition} \label{ap:prop:bisim_pseudo}
  In the setting of \S{}\ref{sec:bisim_pseudo}, the tuple $(\pfibU p \PMet \Set A \O \bO, (\times, \distFda), \sigma_\land)$
satisfies the conditions in Theorem~\ref{thm:suf_lift_dist}.
\end{proposition}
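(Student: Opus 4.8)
The plan is to verify the two hypotheses of Theorem~\ref{thm:suf_lift_dist} for the data of \S{}\ref{sec:bisim_pseudo}. A preliminary remark makes the bookkeeping cleaner: condition~\eqref{item:2-cell} is an equality involving the actual modality, and it is most convenient to argue with $\sigma_{\lor}(a,b)=\max(a,b)$ rather than with $\sigma_{\land}=\min$. This is harmless, because with $f(x)\coloneqq 1-x$ Proposition~\ref{prop:isom_f} gives $\Ncodlift{\times}{d_{\intv}}{\sigma_{\land}}{2}=\Ncodlift{\times}{d_{\intv}}{\sigma_{\lor}}{2}$; since whether $\sigma$ lifts $\distFda$ depends only on the codensity liftings involved, the liftability statement for $\sigma_{\land}$ is literally the same as that for $\sigma_{\lor}$, and it suffices to verify the hypotheses for $\sigma_{\lor}$. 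Throughout I use that $\Ncodlift{\times}{d_{\intv}}{\sigma_{\lor}}{2}(P_1,P_2)$ is the max-pseudometric $D((x_1,x_2),(x_1',x_2'))=\max(d_1(x_1,x_1'),d_2(x_2,x_2'))$ on $pP_1\times pP_2$ (Proposition~\ref{prop:meet-free_pmet}).

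For condition~\eqref{item:2-cell} I would evaluate both composites on a pair $((t_1,\rho_1),(t_2,\rho_2))\in(2\times\intv^{\Sigma})^2$, splitting on $a$. For $a=\epsilon$ both sides reduce to comparing $\max(i(t_1),i(t_2))$ with $i(t_1\land t_2)$, which agree because $i(\ttt)=0$ is the minimum value: $\max(i(t_1),i(t_2))=0$ exactly when $t_1=t_2=\ttt$, i.e.\ exactly when $t_1\land t_2=\ttt$. For $a\in\Sigma$ both sides equal $w\cdot\max(\rho_1(a),\rho_2(a))$, using $\max(wu,wv)=w\max(u,v)$ since $w\ge 0$. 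This is the routine part; note that the same computation with $\min$ in place of $\max$ would fail at $a=\epsilon$, which is precisely why one passes to $\sigma_{\lor}$.

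For condition~\eqref{item:approximating} I would apply Proposition~\ref{prop:lift_lambda_join} to show that, for each $P=(P_1,P_2)$, the object $\Sp[A]{\times}{\sigma_{\lor}}\circ\abs{p^2}{\cptpl\bO 2}(P)$ — whose $a$-component is $S_a=\{\sigma_{\lor}\circ(g_1\times g_2)\mid g_i\in\PMet(P_i,(\intv,d_{\intv}))\}$ — is approximating. The delicate point is the order on the hom-posets $\Set(X,\intv)$: I take the reverse of the Euclidean order, so that the ``true'' value $0=i(\ttt)$ is the top and the orientation matches the fibre order of $\PMet$; with this choice the join $\bigvee$ is the pointwise \emph{infimum}. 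Conditions (1) and (2) of Proposition~\ref{prop:lift_lambda_join} are then immediate: $\tau_a$ is constant in $\rho$ for $a=\epsilon$ and is multiplication by $w\ge 0$ for $a\in\Sigma$, so it commutes with infima, and $|\inf_f f(x)-\inf_f f(x')|\le\sup_f|f(x)-f(x')|$ yields $\bigsqcap_f f^*\bO\sqsubseteq(\bigvee_f f)^*\bO$.

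The main obstacle is condition (3) of Proposition~\ref{prop:lift_lambda_join}: every $k$ with $pk=h\in\PMet((pP_1\times pP_2,D),(\intv,d_{\intv}))$ must be written as $h=\bigvee_{k'\in S'_k}k'$ with $S'_k\subseteq S_a$, i.e.\ as a pointwise infimum of maps $(x_1,x_2)\mapsto\max(g_1(x_1),g_2(x_2))$ with $g_i$ non-expansive. I would prove this by a McShane--Whitney style construction: for each base point $p^0=(x_1^0,x_2^0)$ set
\[
  g_i^{p^0}(x_i)\coloneqq\min\bigl(1,\;h(p^0)+d_i(x_i^0,x_i)\bigr)\qquad(i=1,2),
\]
each of which is non-expansive and $[0,1]$-valued, and put $k'_{p^0}\coloneqq\sigma_{\lor}\circ(g_1^{p^0}\times g_2^{p^0})\in S_a$. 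At $p^0$ one gets $k'_{p^0}(p^0)=\max(h(p^0),h(p^0))=h(p^0)$, and $k'_{p^0}\ge h$ everywhere: if both coordinates were strictly below $h(x_1,x_2)\le 1$ then $h(p^0)+d_i(x_i^0,x_i)<h(x_1,x_2)$ for $i=1,2$, forcing $\max(d_1(x_1^0,x_1),d_2(x_2^0,x_2))<h(x_1,x_2)-h(p^0)$, which contradicts non-expansiveness of $h$ with respect to $D$. Hence $h=\inf_{p^0}k'_{p^0}=\bigvee_{p^0}k'_{p^0}$, giving $S'_k=\{k'_{p^0}\mid p^0\in pP_1\times pP_2\}$. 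Getting the lattice orientation right (join $=$ infimum) is essential here: with the Euclidean orientation the analogous decomposition by suprema is impossible, as a two-point discrete example already shows. Feeding these three facts into Proposition~\ref{prop:lift_lambda_join} then yields condition~\eqref{item:approximating}, completing the verification.
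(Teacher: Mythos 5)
Your proof is correct and reaches the right conclusion, but by a genuinely different route than the paper's. The paper verifies both conditions of Theorem~\ref{thm:suf_lift_dist} \emph{directly for $\sigma_\land$}: its check of condition~\eqref{item:2-cell} computes $\sigma_\land\circ(\tau_\epsilon\times\tau_\epsilon)=t\land t'$, which implicitly embeds $2$ into $\intv$ by $\ttt\mapsto 1$ (so that $\min$ on $\{0,1\}$ \emph{is} conjunction), and it proves condition~\eqref{item:approximating} by hand rather than via Proposition~\ref{prop:lift_lambda_join}: for each pair of points and each $a\in\Sigma$ it exhibits witnesses of the form $k_j=1-d_j(z,-)$ whose $\sigma_\land$-combination attains the bound $w\cdot\max(d_1,d_2)$. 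You instead read $\tau_\epsilon$ through the cast $i(\ttt)=0$ that \S{}\ref{sec:bisim_pseudo} defines; under that reading condition~\eqref{item:2-cell} holds for $\sigma_\lor$ and fails for $\sigma_\land$, exactly as you observe, so your detour through Proposition~\ref{prop:isom_f} is forced, and your McShane--Whitney decomposition $h=\inf_{p^0}\max(g_1^{p^0},g_2^{p^0})$ (with join taken to be pointwise infimum) correctly discharges condition (3) of Proposition~\ref{prop:lift_lambda_join}. The two arguments are mirror images under $x\mapsto 1-x$, each sound on its own convention; the paper itself is ambiguous, since its appendix computation contradicts the cast it defines in the main text. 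Two caveats. First, the proposition literally asserts that the tuple \emph{with $\sigma_\land$} satisfies the theorem's conditions; you establish them for $\sigma_\lor$ and transfer only the liftability conclusion (legitimate, since liftability depends only on the codensity liftings, which coincide), so on your convention you prove the corollary this proposition supports rather than its literal wording, which on that convention is in fact false. Second, your claim that conditions (1)--(2) of Proposition~\ref{prop:lift_lambda_join} hold for \emph{each} $S'$ misses $S'=\emptyset$ (where (1) fails at $a=\epsilon$, an empty join being the constant $1$); this is harmless since only the nonempty families $S'_k$ from condition (3) are used, and the paper's own example verifications share the same gloss. What your heavier route buys is the stronger per-modality liftability via Proposition~\ref{prop:lift_dist_a}; what the paper's route buys is brevity and a proof of the statement as literally phrased.
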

\begin{proof}
1) For each $(t, \rho), (t', \rho') \in 2 \times [0, 1]^\Sigma$, $(\sigma_\land \circ \tau_\epsilon \times \tau_\epsilon)((t, \rho), (t', \rho')) = t \land t' = (\tau_\epsilon \circ (2 \times \sigma_\land^\Sigma) \circ \lambda) ((t, \rho), (t', \rho'))$
and
$(\sigma_\land \circ \tau_a \times \tau_a)((t, \rho), (t', \rho')) = w \cdot (\rho(a) \land \rho'(a)) = (\tau_a \circ (2 \times \sigma_\land^\Sigma) \circ \lambda) ((t, \rho), (t', \rho'))$ ($a \in \Sigma$).

2) We aim to show that for each $(d_1, d_2) \in \PMet_X \times \PMet_Y$, $a \in A$, $k\colon \Ncodlift{\times}{d_{[0, 1]}}{\sigma_\land}{2}(d_1, d_2) \to d_{[0, 1]}$,
$(t_1, f_1), (t_2, f_2) \in (2 \times (X \times Y)^\Sigma)$,
\begin{align*}
&|(\tau_a \circ (2 \times k^\Sigma))(t_1, f_1) - (\tau_a \circ (2 \times k^\Sigma))(t_2, f_2)|
\\&\leq
\sup_{\substack{a' \in A,\\ k_1\colon d_1 \to d_{[0, 1]},\\ k_2\colon d_2 \to d_{[0, 1]}}}
|(\tau_{a'} \circ (2 \times (\sigma_\land \circ k_1 \times k_2)^\Sigma))(t_1, f_1)
- (\tau_{a'} \circ (2 \times (\sigma_\land \circ k_1 \times k_2)^\Sigma))(t_2, f_2)|.
\end{align*}
If $a = \epsilon$, (lhs) $= |t_1 - t_2| =$ (rhs).
If $a \in \Sigma$,
noting that $|k(x_1, y_1) - k(x_2, y_2)| \leq \max(d_1(x_1, x_2), d_2(y_1, y_2))$ for each $x_1, x_2 \in X$ and $y_1, y_2 \in Y$,
(lhs) $= w \cdot |k f_1(a) - k f_2(a)| \leq w \cdot \max(d_1(x_1, x_2), d_2(y_1, y_2))
= |(\tau_a \circ (2 \times (\sigma_\land \circ k_1 \times k_2)^\Sigma))(t_1, f_1)
- (\tau_a \circ (2 \times (\sigma_\land \circ k_1 \times k_2)^\Sigma))(t_2, f_2)| \leq$ (rhs)
where $f_1(a) = (x_1, y_1)$, $f_2(a) = (x_2, y_2)$,
$k_1(t) \coloneqq 1-d_1(x_1, t)$, $k_2(t) \coloneqq 1-d_2(x_2, t)$.

\end{proof}

\subsubsection{Similarity Pseudometric} \label{ap:sim_pseudo}
\begin{lemma} \label{ap:lem:min_max}
  $\min(a, b) - \min(c, d) \leq \max(a-c, b-d)$ for each $a, b, c, d \in [0, 1]$.
\end{lemma}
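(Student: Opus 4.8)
The plan is to reduce the inequality to an elementary two-way case distinction according to which argument realises $\min(c,d)$, exploiting the fact that $\min(a,b)$ is bounded above by each of its own arguments. First I would record the two facts that drive the whole argument: $\min(a,b) \leq a$ and $\min(a,b) \leq b$, together with the observation that $\min(c,d)$ equals either $c$ or $d$. No reduction by symmetry is even needed; the case split is on $\min(c,d)$ rather than on $\min(a,b)$.

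In the case $\min(c,d) = c$, I would bound
\[
  \min(a,b) - \min(c,d) = \min(a,b) - c \leq a - c \leq \max(a-c,\, b-d),
\]
where the first inequality uses $\min(a,b) \leq a$ and the second is the defining upper-bound property of $\max$. Symmetrically, in the case $\min(c,d) = d$, I would instead use $\min(a,b) \leq b$ to obtain
\[
  \min(a,b) - \min(c,d) = \min(a,b) - d \leq b - d \leq \max(a-c,\, b-d).
\]
Since these two cases are exhaustive, the claimed inequality holds unconditionally.

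There is no genuine obstacle here: this is a purely elementary property of $\min$ and $\max$ on the reals, and the restriction of the variables to $[0,1]$ plays no role in the argument (it is inherited only from the ambient setting where the lemma is applied). The single point one must be careful about is to pair the correct bound on $\min(a,b)$ with the matching case for $\min(c,d)$ — using $\min(a,b) \leq a$ exactly when $\min(c,d) = c$, and $\min(a,b) \leq b$ exactly when $\min(c,d) = d$ — so that each difference telescopes into one of the two terms appearing inside the $\max$.
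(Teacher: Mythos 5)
Your proof is correct. It is the same elementary style of argument as the paper's, but with a leaner decomposition: the paper splits into four cases according to which of $a,b$ realises $\min(a,b)$ \emph{and} which of $c,d$ realises $\min(c,d)$, handling the two ``mixed'' cases ($a\leq b$, $c\geq d$ and $a\geq b$, $c\leq d$) by an extra intermediate inequality each. You instead split only on $\min(c,d)$ and never need to identify $\min(a,b)$ exactly, using just the upper bounds $\min(a,b)\leq a$ and $\min(a,b)\leq b$ paired with the matching case. This collapses the four cases to two and makes the pairing discipline (which bound on $\min(a,b)$ goes with which realisation of $\min(c,d)$) the only point requiring care, exactly as you flag. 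Both arguments are equally rigorous; yours is shorter and slightly more transparent about why the inequality holds, while the paper's exhaustive four-way split is more mechanical to verify line by line. Your side remark is also accurate: the restriction to $[0,1]$ plays no role in either proof.
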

\begin{proof}
  If $a \leq b$ and $c \leq d$, (lhs) $= a - c \leq$ (rhs).
  If $a \geq b$ and $c \geq d$, (lhs) $= b - d \leq$ (rhs).
  If $a \leq b$ and $c \geq d$, (lhs) $= a - d \leq b-d \leq $(rhs).
  If $a \geq b$ and $c \leq d$, (lhs) $= b - c \leq a-c \leq $(rhs).
\end{proof}

\begin{proposition}
  $\das$ is a Lawvere metric.
\end{proposition}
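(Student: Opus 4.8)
The plan is to verify the two defining axioms of a Lawvere metric directly from the formula $\das(x,y) = \max(0,\, y-x)$. Recall that a Lawvere metric (an asymmetric pseudometric) on a set is a function $d\colon X^2 \to \intv$ satisfying (i) $d(x,x) = 0$ for all $x$, and (ii) the triangle inequality $d(x,z) \le d(x,y) + d(y,z)$; crucially, symmetry is \emph{not} among the requirements, which is precisely the point of working with $\das$ rather than the Euclidean distance $d_\intv$. So the whole task reduces to checking reflexivity and the triangle inequality, after a quick well-definedness check.

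First I would confirm the codomain: for $x,y \in \intv = [0,1]$ we have $y-x \in [-1,1]$, hence $\das(x,y) = \max(0,\,y-x) \in [0,1] = \intv$, so $\das$ does map into the intended value domain. Reflexivity is then immediate, since $\das(x,x) = \max(0,\, x-x) = \max(0,0) = 0$.

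For the triangle inequality the key elementary fact is that $t \le \max(0,t)$ for every real $t$. Writing $z - x = (y-x) + (z-y)$ and applying this bound to each summand gives $z - x \le \max(0,\,y-x) + \max(0,\,z-y) = \das(x,y) + \das(y,z)$; and since the right-hand side is a sum of nonnegative terms we also have $0 \le \das(x,y) + \das(y,z)$. Taking the maximum of these two lower bounds yields $\das(x,z) = \max(0,\,z-x) \le \das(x,y) + \das(y,z)$, as required.

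I do not expect any genuine obstacle: the computation is entirely routine. The only point worth flagging is to invoke the asymmetric axiom set (omitting symmetry), since $\das$ is genuinely asymmetric — for instance $\das(0,1) = 1$ while $\das(1,0) = 0$ — so no symmetry condition can or should be checked.
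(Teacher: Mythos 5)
Your proof is correct, and it verifies the same two axioms as the paper — reflexivity and the triangle inequality, with symmetry rightly omitted — but it handles the triangle inequality by a genuinely different mechanism. The paper proves $\das(a,c) \le \das(a,b) + \das(b,c)$ by a four-way case analysis on whether $a \le b$ and whether $b \le c$, computing or bounding $\max(0,\,c-a)$ separately in each case. You instead use the single elementary fact $t \le \max(0,t)$: decomposing $c - a = (b-a) + (c-b)$ and bounding each summand gives $c - a \le \das(a,b) + \das(b,c)$, and since the right-hand side is nonnegative, the maximum $\das(a,c) = \max(0,\,c-a)$ of the two lower bounds is also dominated by it. This is case-free and arguably more robust (the same one-line argument works for $\max(0,-)$ applied to any telescoping decomposition), whereas the paper's exhaustive case split is more pedestrian but makes explicit that the inequality is in fact an equality in two of the four cases. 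Your additional well-definedness check that $\das$ lands in $\intv$ is a small point the paper leaves tacit; both proofs are complete for the proposition as stated.
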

\begin{proof}
  For each $a, b, c \in [0, 1]$,
  \begin{itemize}
  \item $\das(a, a) = 0$ is easy.
  \item $\das(a, c) \leq \das(a, b)+\das(b, c)$ is because:
    If $a \leq b$ and $b \leq c$, (lhs) $= c-a =$ (rhs).
    If $a \geq b$ and $b \geq c$, (lhs) $= 0 =$ (rhs).
    If $a \leq b$ and $b \geq c$, $c-a \leq b-a =$ (rhs) and $0 \leq b-a =$ (rhs).
    If $a \geq b$ and $b \leq c$, $c-a \leq c-b =$ (rhs) and $0 \leq c-b =$ (rhs).
  \end{itemize}
\end{proof}
\begin{proposition}
  \begin{enumerate}
    \item
  The codensity lifting $\codlift{F}{\das}{\tau}$ maps $(X,d)\in\QPMet$ to
  the pseudometric on $FX$ given by
    $((t_1, \rho_1), (t_2, \rho_2)) \mapsto
    max\{\das(t_1, t_2), w \cdot \max_{a \in \Sigma} d(\rho_1(a), \rho_2(a))\}$.
  \item
The binary codensity lifting
$\Ncodlift{\times}{\das}{\sigma_\land}{2}$  maps
$(d_1, d_2) \in \QPMet_X \times \QPMet_Y$ to the
Lawvere metric on $X \times Y$ given by
  $((x, y), (x', y')) \mapsto \max(d_1(x, x'), d_2(y, y'))$.
  \end{enumerate}
\end{proposition}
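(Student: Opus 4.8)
The plan is to unfold Definition~\ref{def:gen_codlift} directly in both cases. Recall that in the fibre $\QPMet_X$ the order $\sqsubseteq$ is the reverse pointwise order on Lawvere metrics, so the fibrewise meet $\bigsqcap$ appearing in the codensity formula is computed as a pointwise supremum, and each pullback $k^*\das$ is the Lawvere metric $(u,v)\mapsto\das(k(u),k(v))$. Thus both statements reduce to evaluating a supremum of the shape $\sup_{k}\das(\cdots)$ over non-expansive maps $k\colon(X,d)\dto(\intv,\das)$. I would stress at the outset that, unlike the symmetric case of Proposition~\ref{prop:meet-free_pmet}, one cannot invoke Proposition~\ref{prop:isom_f} here: the candidate isomorphism $f(a)=1-a$ satisfies $f^*\das=\das^{\mathrm{op}}\neq\das$, so the reduction of $\sigma_\land$ to $\sigma_\lor$ is unavailable and a direct computation is forced.

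For part (1), I would first note $\Fda(k)(t,\rho)=(t,k\circ\rho)$, so that for a fixed pair $((t_1,\rho_1),(t_2,\rho_2))$ the summand indexed by $(a,k)$ equals $\das(i(t_1),i(t_2))$ when $a=\epsilon$ (independent of $k$, using $\tau_\epsilon=i\circ\pi_1$), and equals $w\cdot\das(k(\rho_1(a)),k(\rho_2(a)))$ when $a\in\Sigma$, using $\das(wx,wy)=w\,\das(x,y)$. The whole computation then hinges on the sub-lemma $\sup_{k}\das(k(u),k(v))=d(u,v)$, with $k$ ranging over non-expansive maps into $(\intv,\das)$. The inequality $\leq$ is immediate from non-expansiveness; for $\geq$ I would exhibit the representable map $k(t):=d(u,t)$, whose non-expansiveness follows from the triangle inequality $d(u,v')\leq d(u,v)+d(v,v')$ and which gives $\das(k(u),k(v))=\das(0,d(u,v))=d(u,v)$. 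Taking the pointwise supremum over $a$ and $k$ then yields the stated $\max$.

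For part (2), the summand indexed by $(k_1,k_2)$ is $\das(\min(k_1(x),k_2(y)),\min(k_1(x'),k_2(y')))$, and I must show its supremum over non-expansive $k_1,k_2$ equals $\max(d_1(x,x'),d_2(y,y'))$. The upper bound combines Lemma~\ref{ap:lem:min_max}, i.e.\ $\min(a,b)-\min(c,d)\leq\max(a-c,b-d)$, with non-expansiveness of $k_1,k_2$ (which gives $k_1(x')-k_1(x)\leq d_1(x,x')$ and likewise for $k_2$), bounding every summand by $\max(d_1(x,x'),d_2(y,y'))$. The lower bound is the main obstacle: choosing the representable maps $k_1(t)=d_1(x,t)$ and $k_2(t)=d_2(y,t)$ in both coordinates yields $\min(d_1(x,x'),d_2(y,y'))$, the wrong extremum. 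The fix is to \emph{deactivate} one coordinate: taking $k_2\equiv 1$ (a non-expansive constant) together with $k_1(t):=d_1(x,t)$ makes $\min(k_1(\cdot),1)=k_1(\cdot)$, so the summand equals $d_1(x,x')$; the symmetric choice gives $d_2(y,y')$. Since the supremum dominates both, it is $\geq\max(d_1(x,x'),d_2(y,y'))$, matching the upper bound.

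In summary, the only genuinely delicate point is the lower bound of part (2), where the naive Yoneda-style witnesses compute a minimum rather than a maximum; isolating each component distance by a constant map in the other coordinate resolves it. The remaining steps are routine unfoldings together with the representability lemma for Lawvere metrics reused from part (1).
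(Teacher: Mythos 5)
Your proof is correct, and its skeleton is the same as the paper's: unfold Definition~\ref{def:gen_codlift}, read the fibred meet in $\QPMet_X$ as a pointwise supremum of pulled-back Lawvere metrics, get the upper bounds from non-expansiveness of the test maps (plus Lemma~\ref{ap:lem:min_max} in part (2)), and get the lower bounds from explicit witnesses, with part (1)'s witness $k(t)=d(\rho_1(a),t)$ matching the paper exactly. The one place you genuinely diverge is the lower bound of part (2). The paper uses a \emph{single} pair of witnesses $k_1(t)=1-d_1(t,x')$, $k_2(t)=1-d_2(t,y')$: both coordinates evaluate to $1$ at $(x',y')$, so $\min(k_1x',k_2y')=1$ while $\min(k_1x,k_2y)=1-\max(d_1(x,x'),d_2(y,y'))$, and one application of $\das$ produces the maximum in one stroke. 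You instead use two pairs of witnesses, deactivating one coordinate at a time with the constant map $1$ and putting the representable map $d_i(\cdot)$ in the other, then conclude because the supremum dominates both component distances. Both choices are non-expansive, so either argument closes the gap you correctly diagnose (the naive pair of representables only yields $\min(d_1(x,x'),d_2(y,y'))$); the paper's co-representable witnesses are marginally more economical, while your version reuses the part-(1) representability lemma and makes the coordinate-isolation idea explicit. Your side remark that Proposition~\ref{prop:isom_f} is unavailable here because $f(a)=1-a$ satisfies $f^*\das\neq\das$ is also accurate and correctly explains why a direct computation is needed.
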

\begin{proof}
  \begin{enumerate}
    \item Fix arbitrary $d \in \QPMet_X$ and $(t_1, \rho_1), (t_2, \rho_2) \in 2 \times X^\Sigma$.
    It is easy to see that
    $\codlift{F}{\das}{\tau_\epsilon}(d)((t_1, \rho_1), (t_2, \rho_2)) = \das(t_1, t_2)$.
    Let us prove that
    $\codlift{F}{\das}{\tau_a}(d)((t_1, \rho_1), (t_2, \rho_2)) =
     w \cdot d(\rho_1(a), \rho_2(a))$ for each $a \in \Sigma$.
     ($\geq$): Letting $k(t) \coloneqq d(\rho_1(a), t)$,
     $k\colon d \to \das$ and
     (lhs) $\geq \das(w \cdot k \rho_1(a), w \cdot k \rho_2(a)) = w \cdot d(\rho_1(a), \rho_2(a)) =$ (rhs).
     ($\leq$): For each $k\colon d \to \das$,
     $\das(w \cdot k \rho_1(a), w \cdot k \rho_2(a)) = w \cdot \das(k \rho_1(a), k \rho_2(a)) \leq w \cdot d(\rho_1(a), \rho_2(a))$. Therefore we have (lhs) $\leq$ (rhs).
    \item Fix arbitrary $d_1, d_2 \in \QPMet_X$ and $(x, y), (x', y') \in X \times Y$.
      Let us prove $\Ncodlift{\times}{\das}{\sigma_\land}{2}(d_1, d_2)((x, y), (x', y')) = \max(d_1(x, x'), d_2(y, y'))$.
      $(\geq)$: Letting $k_1(t) \coloneqq 1-d_1(t, x')$ and $k_2(t) \coloneqq 1-d_2(t, y')$,
      $k_1\colon d_1 \to \das$ and $k_2\colon d_2 \to \das$.
      Therefore we have
      (lhs) $\geq \das(\min(k_1x, k_2y), \min(k_1x', k_2y'))
      = \max(d_1(x, x'), d_2(y, y')) =$ (rhs).
      $(\leq)$: For each $k_1 \colon d_1 \to \das$ and $k_2\colon d_2 \to \das$,
      $\das(\min(k_1 x, k_2 y), \min(k_1 x', k_2 y')) \leq \min(k_1 x', k_2 y') - \min(k_1 x, k_2 y) \leq \max(k_1 x' - k_1 x, k_2 y' - k_2 y) \leq \max(d_1(x, x'), d_2(y, y'))$ by Lemma~\ref{ap:lem:min_max}.
      Therefore we have (lhs) $\leq$ (rhs).
  \end{enumerate}
\end{proof}

\begin{proposition}
  In the setting of \S{}\ref{sec:sim_pseudo},
  the tuple
$(\pfib p \QPMet \Set A \O \bO, (\times, \distFda), \sigma_\land)$
 satisfies the conditions in Theorem~\ref{thm:suf_lift_dist}.
\end{proposition}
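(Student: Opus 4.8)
The plan is to verify the two conditions of Theorem~\ref{thm:suf_lift_dist} for the data $(\pfib p\QPMet\Set A\O\bO,(\times,\distFda),\sigma_\land)$ directly, running exactly parallel to the bisimilarity case (Proposition~\ref{ap:prop:bisim_pseudo}) but with the Euclidean distance replaced by the Lawvere metric $\das$. The point to keep in mind is that the set-level maps $\distFda$, $\tau$ and $\sigma_\land=\min$ are literally the same as in \S{}\ref{sec:bisim_pseudo}; only the ambient metric on the truth-value object has changed. Consequently condition~\ref{item:2-cell}, which is a statement about functions on $2$ and $\intv$, is insensitive to this change and is verified verbatim: for each $a$ and each $((t_1,\rho_1),(t_2,\rho_2))\in(\Fda\intv)^2$ one checks $\sigma_\land\circ(\tau_a\times\tau_a)=\tau_a\circ(2\times\sigma_\land^\Sigma)\circ\distFda$, both sides returning $t_1\land t_2$ when $a=\epsilon$ and $w\cdot(\rho_1(a)\land\rho_2(a))$ when $a\in\Sigma$.

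For condition~\ref{item:approximating} I would unfold Definition~\ref{def:approximating} for $S=\Sp[A]{\times}{\sigma_\land}\circ\abs{p^2}{\cptpl\bO 2}(d_1,d_2)$. The carrier meet $\bigsqcap_{k_1,k_2}(\sigma_\land\circ(k_1\times k_2))^*\das$ is $\Ncodlift{\times}{\das}{\sigma_\land}{2}(d_1,d_2)=d$ with $d((x,y),(x',y'))=\max(d_1(x,x'),d_2(y,y'))$, as computed just above; thus the task reduces to showing that for every $a\in A$ and every non-expansive $k\colon d\dto\das$, the single pulled-back modality $(\tau_a\circ\Fda k)^*\das$ lies $\sqsubseteq$-below the product-form meet $\bigsqcap_{a',k_1,k_2}(\tau_{a'}\circ\Fda(\sigma_\land\circ(k_1\times k_2)))^*\das$. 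Since the fibre order of $\QPMet$ is the reverse pointwise order on distances, this amounts to bounding the value of $(\tau_a\circ\Fda k)^*\das$ above by the supremum of the product-form terms.

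The case $a=\epsilon$ is immediate, the term being $\das(i(t_1),i(t_2))$, which already occurs among the product-form terms. For $a\in\Sigma$, writing $\rho_j(a)=(x_j,y_j)$, the term equals $w\cdot\das(k(x_1,y_1),k(x_2,y_2))$, and non-expansiveness of $k$ for $d$ gives $\das(k(x_1,y_1),k(x_2,y_2))\le d((x_1,y_1),(x_2,y_2))=\max(d_1(x_1,x_2),d_2(y_1,y_2))$. It then suffices to realise this maximum by a product-form term, which is exactly the $(\ge)$-direction of the computation of $\Ncodlift{\times}{\das}{\sigma_\land}{2}$ above: taking $k_1(t)\coloneqq 1-d_1(t,x_2)$ and $k_2(t)\coloneqq 1-d_2(t,y_2)$, which are non-expansive into $\das$ by the triangle inequality, one obtains $\das(\sigma_\land(k_1x_1,k_2y_1),\sigma_\land(k_1x_2,k_2y_2))=\max(d_1(x_1,x_2),d_2(y_1,y_2))$, so the product-form meet dominates $w\cdot\das(k(x_1,y_1),k(x_2,y_2))$ as required. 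I expect the only genuinely delicate point to be the asymmetry of $\das$: unlike the symmetric estimate $|kz-kz'|\le\max(\dots)$ available for $\PMet$, here one must track the direction of every inequality and check that $k_1,k_2$ are non-expansive as maps into the Lawvere metric; the auxiliary inequality $\min(a,b)-\min(c,d)\le\max(a-c,b-d)$ (Lemma~\ref{ap:lem:min_max}) is precisely what makes this asymmetric $\min$-manipulation go through, and it is what underlies the characterization of $\Ncodlift{\times}{\das}{\sigma_\land}{2}$ invoked above.
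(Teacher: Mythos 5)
Your proof is correct and takes essentially the same route as the paper: the paper's own proof is a one-line remark that the argument for the $\PMet$ case (Proposition~\ref{ap:prop:bisim_pseudo}) carries over, and your write-up is precisely that adaptation, checking the 2-cell condition by the same pointwise computation and the approximating condition via explicit product-form witnesses. The single delicate point you isolate—that the asymmetry of $\das$ forces the witnesses to be $k_1(t)=1-d_1(t,x_2)$, $k_2(t)=1-d_2(t,y_2)$, which are non-expansive into $\das$ by the (directed) triangle inequality, rather than a verbatim copy of the $\PMet$ proof's witnesses of the form $1-d_1(x_1,t)$, which are not non-expansive into $\das$ and would yield a useless zero distance—is exactly the adjustment the paper itself makes implicitly when it computes $\Ncodlift{\times}{\das}{\sigma_\land}{2}$ in the appendix, so the two proofs coincide in substance.
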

\begin{proof}
We can prove it in the same way as Proposition~\ref{ap:prop:bisim_pseudo}.
\end{proof}

\subsubsection{Bisimulation Metric}
\begin{proof}[Proof of Lemma~\ref{lem:pd_compat}]
  1) We can prove it by letting $d \coloneqq \Ncodlift{T}{\bO}{\sigma}{N}(d_1, d_2)$ and $z \coloneqq \sigma$ in Lemma~\ref{lem:gebler}.

  2)
  Let $P \in \PMet_X$ and $Q \in \PMet_Y$.
  $\Ncodlift{T}{\bO}{\sigma}{N}(\codlift{\mathcal{P}}{\mathbf{\Omega}}{\inf}P, \codlift{\mathcal{P}}{\mathbf{\Omega}}{\inf}Q)\sqsubseteq \lambda^{\pow *}\codlift{\mathcal{P}}{\mathbf{\Omega}}{\inf}\Ncodlift{T}{\bO}{\sigma}{N}(P, Q)$
  is equivalent to
  for any $A_1, A_2 \in \mathcal{P}X, B_1, B_2 \in \mathcal{P}Y$,
  \begin{equation} \label{eq:pd_compat}
  \sigma((\codlift{\mathcal{P}}{\mathbf{\Omega}}{\inf}P)(A_1, A_2), (\codlift{\mathcal{P}}{\mathbf{\Omega}}{\inf}Q)(B_1, B_2))\geq (\codlift{\mathcal{P}}{\mathbf{\Omega}}{\inf}\Ncodlift{T}{\bO}{\sigma}{N}(P, Q))(\lambda^\pow(A_1, B_1), \lambda^\pow(A_2, B_2)).
  \end{equation}

  If $\lambda^\pow(A_1, B_1) = \lambda^\pow(A_2, B_2) = \emptyset$,
  then
  (rhs) $= 0$ by the definition of $\codlift{\mathcal{P}}{\mathbf{\Omega}}{\inf}$ so the inequality \eqref{eq:pd_compat} holds.

  If $\lambda^\pow(A_1, B_1) = \emptyset$ and $\lambda^\pow(A_2, B_2) \neq \emptyset$,
  then
  ($A_1 = \emptyset$ or $B_1 = \emptyset$),
  $A_2 \neq \emptyset$, and $B_2 \neq \emptyset$ by definition of $\lambda^\pow$.
  Thus at least one of $(\codlift{\mathcal{P}}{\mathbf{\Omega}}{\inf}P)(A_1, A_2)$ and
  $(\codlift{\mathcal{P}}{\mathbf{\Omega}}{\inf}Q)(B_1, B_2)$ is equal to $1$ by the definition of $\codlift{\mathcal{P}}{\mathbf{\Omega}}{\inf}$.
  Because $\sigma(1, a) = \sigma(a, 1) = 1$ for any $a \in [0, 1]$,
  we have (lhs) $= 1$. It indicates the inequality \eqref{eq:pd_compat} holds.

  Assume that $\lambda^\pow(A_1, B_1) \neq \emptyset$ and $\lambda^\pow(A_2, B_2) \neq \emptyset$.
  The definition of  $\codlift{\mathcal{P}}{\mathbf{\Omega}}{\inf}$
  gives that
  \begin{align*}
  \text{(rhs)} = \max \bigl(&\sup_{(a_1, b_1) \in \lambda^\pow(A_1, B_1)} \inf_{(a_2, b_2) \in \lambda^\pow(A_2, B_2)} \Ncodlift{T}{\bO}{\sigma}{N}(P, Q)((a_1, b_1), (a_2, b_2)), \\
  &\sup_{(a_2, b_2) \in \lambda^\pow(A_2, B_2)} \inf_{(a_1, b_1) \in \lambda^\pow(A_1, B_1)} \Ncodlift{T}{\bO}{\sigma}{N}(P, Q)((a_1, b_1), (a_2, b_2)) \bigr).
  \end{align*}
  Therefore it's enough to show that
  two inequalities:
  \begin{itemize}
    \item (lhs) $\geq \sup_{(a_1, b_1) \in \lambda^\pow(A_1, B_1)} \inf_{(a_2, b_2) \in \lambda^\pow(A_2, B_2)} \Ncodlift{T}{\bO}{\sigma}{N}(P, Q)((a_1, b_1), (a_2, b_2))$,
    \item (lhs) $\geq \sup_{(a_2, b_2) \in \lambda^\pow(A_2, B_2)} \inf_{(a_1, b_1) \in \lambda^\pow(A_1, B_1)} \Ncodlift{T}{\bO}{\sigma}{N}(P, Q)((a_1, b_1), (a_2, b_2))$.
  \end{itemize}
  Let us show
  the former.
  The other one can be shown similarly.

  For any $(a_1, b_1) \in \lambda^\pow(A_1, B_1)$,
\begin{align*}
    \text{(lhs)} &= \sigma((\codlift{\mathcal{P}}{\mathbf{\Omega}}{\inf}P)(A_1, A_2), (\codlift{\mathcal{P}}{\mathbf{\Omega}}{\inf}Q)(B_1, B_2)) \\
    &\geq \sigma(\sup_{a_1' \in A_1} \inf_{a_2' \in A_2}P(a_1', a_2'), \sup_{b_1' \in B_1}\inf_{b_2' \in B_2}Q(b_1', b_2'))
    &\text{by the definition of $\codlift{\mathcal{P}}{\mathbf{\Omega}}{\inf}$} \\
    &\geq \sigma(\inf_{a_2' \in A_2}P(a_1, a_2'), \inf_{b_2' \in B_2}Q(b_1, b_2'))
    &\text{since $\sigma$ is monotone} \\
    &= \inf_{a_2' \in A_2, b_2' \in B_2}\sigma(P(a_1, a_2'), Q(b_1, b_2'))
    &\text{since $\sigma$ preserves infimums} \\
    &= \inf_{(a_2', b_2') \in \lambda^\pow(A_2, B_2)}\Ncodlift{T}{\bO}{\sigma}{N}(P, Q)((a_1, b_1), (a_2', b_2')).
  \end{align*}
\end{proof}

\subsection{Omitted Proofs for Section~\ref{sec:game}}

\begin{lemma} \label{lem:invariant_comp}
  Assume that two conditions in Theorem~\ref{thm:suf_lift_dist} hold.
A set $\mathcal{V} \subseteq |\prod_{i \in N} \mathbb{E}_{X_i}|$ is an invariant for $\duplicator$ in the composite codensity game $\compcgame{c_i}{i}{N}{T}{\sigma}$
  if and only if
  $\bigsqcup_{\vfami{P_i}{i}{N} \in \mathcal{V}}\Ncodlift{T}{\bO}{\sigma}{N}(\vfami{P_i}{i}{N}) \sqsubseteq  (\coalgmerge T\lambda{\vfami{c_i}{i}{N}})^*\codlift{F}{\bO}{\tau} \Ncodlift{T}{\bO}{\sigma}{N}(\fami{\bigsqcup_{\vfami{P_i}{i}{N} \in \mathcal{V}}P_i}{i}{N})$.
\end{lemma}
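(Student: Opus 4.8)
The plan is to show that the invariant condition and the displayed inequality are, after one key computation, verbatim translations of one another. Throughout I would write $c \defeq \coalgmerge T\lambda{\vfami{c_i}{i}{N}} = \lambda \circ T(\vfami{c_i}{i}{N})$ for the composite coalgebra on $X \defeq T(\vfami{X_i}{i}{N})$, and $Q_i \defeq \bigsqcup_{\vfami{P_i}{i}{N} \in \mathcal{V}} P_i$ for the $i$-th componentwise join, so that the two sides of the inequality are
\[
  L \defeq \bigsqcup_{\vfami{P_i}{i}{N} \in \mathcal{V}} \Ncodlift{T}{\bO}{\sigma}{N}(\vfami{P_i}{i}{N})
  \qquad\text{and}\qquad
  R \defeq c^* \codlift{F}{\bO}{\tau} \Ncodlift{T}{\bO}{\sigma}{N}(\fami{Q_i}{i}{N}).
\]

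The main step is to compute $R$ explicitly as a meet over exactly the spoiler tests of $\compcgame{c_i}{i}{N}{T}{\sigma}$. First I would invoke condition~\eqref{item:approximating} of Theorem~\ref{thm:suf_lift_dist}, together with the concrete descriptions of $\abs{p^N}{\cptpl\bO N}$, $\Sp[A]T\sigma$, $\Sp[A]F\tau$ and $\conc p\bO$ from \S\ref{sec:gcodlift}, to get
\[
  \codlift{F}{\bO}{\tau}\Ncodlift{T}{\bO}{\sigma}{N}(\fami{Q_i}{i}{N}) = \bigsqcap_{\substack{a \in A,\\ k_i \colon Q_i \dto \bO(a)}} \bigl(\tau_a \circ F(\sigma_a \circ T(pk_1, \cdots, pk_N))\bigr)^* \bO(a);
\]
this is exactly where the approximating condition earns its keep, restricting the defining meet of $\codlift{F}{\bO}{\tau}$ over all morphisms into $\bO(a)$ down to the generating tests $\sigma_a \circ T(\vfami{pk_i}{i}{N})$. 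Pulling back along $c$ (which preserves meets in a $\clat$-fibration), then using naturality of $\lambda$ at $\vfami{pk_i}{i}{N}$ followed by the $2$-cell identity $\tau_a \circ F\sigma_a \circ \lambda = \sigma_a \circ T(\fami{\tau_a}{i}{N})$ from condition~\eqref{item:2-cell}, each generating test simplifies to
\[
  \tau_a \circ F(\sigma_a \circ T(pk_1, \cdots, pk_N)) \circ c = \sigma_a \circ T(\fami{\tau_a \circ F(pk_i) \circ c_i}{i}{N}).
\]
Hence $R = \bigsqcap_{a,\, k_i \colon Q_i \dto \bO(a)} (\sigma_a \circ T(\fami{\tau_a \circ F(pk_i) \circ c_i}{i}{N}))^* \bO(a)$, which is precisely the meet, over the spoiler tests of Definition~\ref{def:comp_cgame} attached to moves $(a, \vfami{pk_i}{i}{N})$ whose base components satisfy $Q_i \sqsubseteq (pk_i)^*\bO(a)$.

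With this in hand both conditions unwind to one elementary statement. Since $L$ is a join, $R$ is a meet, and $f \colon P \dto \bO(a)$ means $P \sqsubseteq f^*\bO(a)$, the inequality $L \sqsubseteq R$ holds iff for every $\vfami{P_i}{i}{N} \in \mathcal{V}$, every $a \in A$ and every $\vfami{k_i}{i}{N}$ with $Q_i \sqsubseteq k_i^*\bO(a)$ for all $i$, there is no spoiler move $\vfami{P_i}{i}{N} \to_\spoiler (a, \vfami{k_i}{i}{N})$. On the other side I would unfold the invariant condition for $\compcgame{c_i}{i}{N}{T}{\sigma}$: $\mathcal{V}$ is an invariant iff every spoiler move out of a position in $\mathcal{V}$ is answered by a duplicator move back into $\mathcal{V}$. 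Now $(a, \vfami{k_i}{i}{N}) \to_\duplicator \vfami{P'_i}{i}{N}$ fails for all $\vfami{P'_i}{i}{N} \in \mathcal{V}$ exactly when $P'_i \sqsubseteq k_i^*\bO(a)$ for every $i$ and every element of $\mathcal{V}$, i.e.\ (taking the join over $\mathcal{V}$) when $Q_i \sqsubseteq k_i^*\bO(a)$ for all $i$; taking the contrapositive of the answering requirement shows the invariant condition is the same quantified statement as above. Faithfulness of $p$ identifies the decent morphisms $k_i \colon Q_i \dto \bO(a)$ indexing $R$ with the base morphisms $k_i$ satisfying $Q_i \sqsubseteq k_i^*\bO(a)$ used in the game, so the two index sets coincide and the equivalence follows in both directions at once.

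I expect the only genuine obstacle to be the computation of $R$ in the second paragraph: tracking the objects at which naturality of $\lambda$ and the $2$-cell identity are applied, and checking that condition~\eqref{item:approximating} legitimately collapses the defining meet of $\codlift{F}{\bO}{\tau}$ to the generating tests $\sigma_a \circ T(\vfami{pk_i}{i}{N})$. Once $R$ is in this meet form, the remainder is a routine, symmetric matching of quantifiers.
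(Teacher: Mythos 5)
Your proposal is correct and takes essentially the same route as the paper's proof: both use condition~\eqref{item:2-cell} and condition~\eqref{item:approximating} of Theorem~\ref{thm:suf_lift_dist} to rewrite $(\coalgmerge T\lambda{\vfami{c_i}{i}{N}})^*\codlift{F}{\bO}{\tau}\Ncodlift{T}{\bO}{\sigma}{N}$ as the meet of $\bigl(\sigma_a \circ T(\vfami{\tau_a \circ F(pk_i) \circ c_i}{i}{N})\bigr)^*\bO(a)$ over the generating tests, and then match quantifiers (via the join/least-upper-bound argument and faithfulness of $p$) against the definition of invariant for $\compcgame{c_i}{i}{N}{T}{\sigma}$. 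The only cosmetic difference is that the paper carries out this rewriting abstractly at the level of the $\mathrm{Sp}$ $2$-functor (commuting $\lambda^*$ past $\conc p\bO$ and using the identity $\Sp[A]T\sigma \circ \Sp[A]{F^N}{\tau^N} = \lambda^*\Sp[A]F\tau \circ \Sp[A]T\sigma$), whereas you do the same computation concretely on the meet formula via naturality of $\lambda$ and the pointwise form of condition~\eqref{item:2-cell}.
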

\begin{proof}
\begin{align*}
&\lambda^*\codlift{F}{\bO}{\tau} \Ncodlift{T}{\bO}{\sigma}{N}\\
&= \lambda^*\conc p\bO \circ \Sp[A]F\tau \circ \Sp[A]T\sigma \circ
  \abs{p^N}{\cptpl\bO N} &&\text{by the condition \eqref{item:approximating} in Theorem~\ref{thm:suf_lift_dist}} \\
&= \conc p\bO \circ \lambda^*\Sp[A]F\tau \circ \Sp[A]T\sigma \circ
  \abs{p^N}{\cptpl\bO N} &&\text{since $\conc p\bO$ preserves cartesian morphisms} \\
&= \conc p\bO \circ \Sp[A]T\sigma \circ \Sp[A]{F^N}{\tau^N} \circ \abs{p^N}{\cptpl\bO N} &&\text{by the condition \eqref{item:2-cell} in Theorem~\ref{thm:suf_lift_dist},} \\
& &&\text{ and discussion after Theorem~\ref{thm:suf_lift_dist}}\\
&= \conc p\bO \circ \Sp[A]{T \circ F^N}{\sigma \bul (T \circ \tau^N)} \circ \abs{p^N}{\cptpl\bO N}.
\end{align*}
From this equation,
  \begin{align*}
    &\bigsqcup_{\langle P_i \rangle_{i \in N} \in \mathcal{V}}\codlift{T}{\mathbf{\Omega}}{\sigma}(\langle P_i \rangle_{i \in N}) \sqsubseteq  (\lambda \circ T(\langle c_i \rangle_{i \in N}))^*\codlift{F}{\mathbf{\Omega}}{\tau} (\codlift{T}{\mathbf{\Omega}}{\sigma}(\langle \bigsqcup_{\langle P_i \rangle_{i \in N} \in \mathcal{V}}P_i \rangle_{i \in N})) \\
    &\text{if and only if }\\
    &\bigsqcup_{\langle P_i \rangle_{i \in N} \in \mathcal{V}}\codlift{T}{\mathbf{\Omega}}{\sigma}(\langle P_i \rangle_{i \in N}) \sqsubseteq  (T(\langle c_i \rangle_{i \in N}))^*\conc p\bO \circ \Sp[A]{T \circ F^N}{\sigma \bul (T \circ \tau^N)} \circ \abs{p^N}{\cptpl\bO N} (\langle \bigsqcup_{\langle P_i \rangle_{i \in N} \in \mathcal{V}}P_i \rangle_{i \in N})\\
      &\phantom{\text{if and only if }\bigsqcup_{\langle P_i \rangle_{i \in N} \in \mathcal{V}}\codlift{T}{\mathbf{\Omega}}{\sigma}(\langle P_i \rangle_{i \in N})} \, =\bigsqcap_{a \in A, \langle k_i \in \mathbb{E}(\bigsqcup_{\langle P_i \rangle_i \in \mathcal{V}}P_i, \mathbf{\Omega}) \rangle_{i \in N}}\big(\sigma \circ T(\langle \tau_a \circ Fpk_i \circ c_i\rangle_{i \in N}) \big)^*\mathbf{\Omega} \\
    &\text{if and only if }\\
    & \forall \langle P_i \rangle_{i \in N} \in \mathcal{V}.~\forall a \in A.~\forall \langle k_i\colon \bigsqcup_{\langle P_i' \rangle_i \in \mathcal{V}}P_i' \decent{\rightarrow}\mathbf{\Omega} \rangle_{i \in N}.~\codlift{T}{\mathbf{\Omega}}{\sigma}(\langle P_i \rangle_{i \in N}) \sqsubseteq  \big(\sigma \circ T(\langle \tau_a \circ Fk_i \circ c_i\rangle_{i \in N}) \big)^*\mathbf{\Omega} \\
    &\text{if and only if } \forall \langle P_i \rangle_{i \in N} \in \mathcal{V}.~\forall a \in A.~\forall \langle k_i \in \mathbb{C}(X_i, \Omega) \rangle_{i \in N}. \\
      &\phantom{\text{if and only if }}\bigl(\forall \langle P_i' \rangle_{i \in N} \in \mathcal{V}.~\forall i \in N~k_i\colon P_i' \decent{\rightarrow} \mathbf{\Omega} \Rightarrow \codlift{T}{\mathbf{\Omega}}{\sigma}(\langle P_i \rangle_{i \in N}) \sqsubseteq \big(\sigma \circ T(\langle \tau_a \circ Fk_i \circ c_i\rangle_{i \in N}) \big)^*\mathbf{\Omega} \bigr)\\
    &\text{if and only if } \forall \langle P_i \rangle_{i \in N} \in \mathcal{V}.~\forall a \in A.~\forall \langle k_i \in \mathbb{C}(X_i, \Omega) \rangle_{i \in N}. \\
      &\phantom{\text{if and only if }}\bigl(\codlift{T}{\mathbf{\Omega}}{\sigma}(\langle P_i \rangle_{i \in N}) \not \sqsubseteq \big(\sigma \circ T(\langle \tau_a \circ Fk_i \circ c_i\rangle_{i \in N}) \big)^*\mathbf{\Omega}  \Rightarrow\exists \langle P_i' \rangle_{i \in N} \in \mathcal{V}.~\exists i \in N~k_i\colon P_i' \decent{\not \rightarrow} \mathbf{\Omega}\bigr).
  \end{align*}
\end{proof}

\begin{lemma}
Let
$\mathcal{V} \subseteq \prod_{i \in N}
\mathrm{Obj}(\mathbb{E}_{X_i})$ be a set.
If a set
$\mathcal{V}$ is an invariant of a codensity game $\mathcal{G}_c$ for
$\duplicator$ then so is $\overline{\mathcal{V}}$.
\end{lemma}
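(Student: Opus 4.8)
The plan is to exploit the fact that join-closure does not change the relevant join, and then to read off invariance from the bisimulation characterisation of invariants in Proposition~\ref{prop:win_bisim}. Write $P^\ast \defeq \bigsqcup_{P\in\mathcal{V}}P$ for the position adjoined to $\mathcal{V}$, so that $\overline{\mathcal{V}} = \mathcal{V}\cup\{P^\ast\}$. The first step is the purely order-theoretic remark that $\bigsqcup_{P\in\overline{\mathcal{V}}}P = \bigsqcup_{P\in\mathcal{V}}P$: adjoining the join of a set to that set leaves its join unchanged. This is where completeness of the fibre $\EE_X$ is used, and it is the only structural input needed.

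Granting this, the cleanest argument invokes Proposition~\ref{prop:win_bisim}(1), which states that a subset of $\mathrm{Obj}(\EE_X)$ is an invariant for $\duplicator$ in $\cgame{c}$ precisely when its join is an $\codlift{F}{\bO}{\tau}$-bisimulation on $c$. Since $\mathcal{V}$ is an invariant by hypothesis, $\bigsqcup_{P\in\mathcal{V}}P$ is such a bisimulation; by the first step this equals $\bigsqcup_{P\in\overline{\mathcal{V}}}P$, which is therefore also a bisimulation, and a second application of the same proposition shows that $\overline{\mathcal{V}}$ is an invariant.

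I would also record a direct verification from the definition of invariant, which does not rely on the bisimulation characterisation and is hence more robust. For positions already in $\mathcal{V}$ there is nothing to prove, since $\mathcal{V}\subseteq\overline{\mathcal{V}}$ and $\mathcal{V}$ is itself an invariant; the only new obligation is at $P^\ast$. Suppose $\spoiler$ can move $P^\ast \to_\spoiler (a,k)$, which by the reformulation $f\colon P\dto Q \iff P\sqsubseteq f^\ast Q$ means $P^\ast\not\sqsubseteq M$ where $M\defeq(\tau_a\circ Fk\circ c)^\ast\bO(a)$. Because $P^\ast$ is the least upper bound of $\mathcal{V}$, the object $M$ fails to be an upper bound of $\mathcal{V}$, so some $P_0\in\mathcal{V}$ satisfies $P_0\not\sqsubseteq M$, i.e.\ $P_0\to_\spoiler(a,k)$. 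Invariance of $\mathcal{V}$ then yields a response $P''\in\mathcal{V}\subseteq\overline{\mathcal{V}}$ with $(a,k)\to_\duplicator P''$, which is exactly a legal $\duplicator$-response available from $\overline{\mathcal{V}}$.

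The only genuinely non-formal step, and hence the main obstacle, is passing from the failure $P^\ast\not\sqsubseteq M$ at the join to a failure $P_0\not\sqsubseteq M$ at a single member of $\mathcal{V}$; this is precisely the place where the completeness of $\EE_X$ and the least-upper-bound property of $P^\ast$ are invoked. Everything else is bookkeeping over the two move types of the safety game together with the inclusion $\mathcal{V}\subseteq\overline{\mathcal{V}}$.
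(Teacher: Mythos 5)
Your proposal is correct, and it in fact contains two proofs. Your ``direct verification'' is essentially the paper's own proof: the only new obligation is at the adjoined position $P^{\ast}=\bigsqcup_{P\in\mathcal{V}}P$, and the key step --- that a Spoiler move $\tau_a\circ Fk\circ c\colon P^{\ast}\decent{\nrightarrow}\bO(a)$, i.e.\ $P^{\ast}\not\sqsubseteq(\tau_a\circ Fk\circ c)^{*}\bO(a)$, forces $P_0\not\sqsubseteq(\tau_a\circ Fk\circ c)^{*}\bO(a)$ for some single $P_0\in\mathcal{V}$ by the least-upper-bound property, after which invariance of $\mathcal{V}$ supplies Duplicator's response inside $\mathcal{V}\subseteq\overline{\mathcal{V}}$ --- is exactly the implication the paper invokes. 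Your first argument, via Proposition~\ref{prop:win_bisim}, is a genuinely different and shorter route that the paper does not take: since $\bigsqcup_{P\in\overline{\mathcal{V}}}P=\bigsqcup_{P\in\mathcal{V}}P$, the two directions of the equivalence ``$\mathcal{V}$ is an invariant iff $\bigsqcup_{P\in\mathcal{V}}P$ is an $\codlift{F}{\bO}{\tau}$-bisimulation'' transfer invariance from $\mathcal{V}$ to $\overline{\mathcal{V}}$ immediately. What this buys is brevity, by outsourcing the order-theoretic work to the cited characterisation; what the direct argument buys is self-containedness, since it uses only the definition of invariant and the fibrational fact $f\colon P\dto Q\iff P\sqsubseteq f^{*}Q$, without appealing to the imported result of Komorida et al. One cosmetic point, which affects the paper's statement as much as your proof: the lemma is phrased for $\mathcal{V}\subseteq\prod_{i\in N}\mathrm{Obj}(\EE_{X_i})$ while $\cgame{c}$ has Spoiler positions in a single fibre; like the paper, you silently (and reasonably) read it as a statement about a single codensity game with joins taken in $\EE_X$, which is how it is used in the proof of preservation of bisimilarities.
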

\begin{proof}
Suppose $\mathcal{V}$ is an invariant of $\mathcal{G}_c$ for $\duplicator$.
To show that $\overline{\mathcal{V}}$ is an invariant of $\mathcal{G}_c$ for $\duplicator$,
we need to check:
for each $(a \in A, k\colon X \to \O(a))$ such that
$\tau_a \circ Fk \circ c\colon \bigsqcup_{P \in \mathcal{V}} P \decent{\not{\rightarrow}} \bO(a)$,
there exists $P' \in \mathcal{V}$ such that $k\colon P' \decent{\not{\rightarrow}} \O(a)$.
It is true because $\tau_a \circ Fk \circ c\colon \bigsqcup_{P \in \mathcal{V}} P \decent{\not{\rightarrow}} \bO(a)$ implies $\exists~P' \in \mathcal{V}.~\tau_a \circ Fk \circ c\colon P' \decent{\not{\rightarrow}} \bO(a)$ and $\mathcal{V}$ is an invariant.
\end{proof}

\begin{proof}[Proof of Proposition~\ref{prop:invariant_compositional_bisim}]
Assume that each $\mathcal{V}_i \subseteq \mathbb{E}_{X_i}$ is an invariant of $\mathcal{G}_{c_i}$ for $\duplicator$.
For any $a \in A$ and $\fami{k_i \in \mathbb{C}(X_i, \Omega)}{i}{N}$,
\begin{align} \label{eq:invariant_compositional_bisim}
  \begin{split}
  \codlift{T}{\mathbf{\Omega}}{\sigma}(\vfami{(\tau_a \circ Fk_i \circ c_i)^*\mathbf{\Omega}(a)}{i}{N})
  &= \bigsqcap_{\substack{a' \in A, \fami{k'_i\colon (\tau_a \circ Fk_i \circ c_i)^*\mathbf{\Omega} \to \mathbf{\Omega}}{i}{N}}} \big(\sigma_{a'} \circ T(\vfami{pk'_i}{i}{N}) \big)^*\mathbf{\Omega}(a') \\
  &\sqsubseteq \big(\sigma_a \circ T(\vfami{\tau_a \circ Fk_i \circ c_i}{i}{N}) \big)^*\mathbf{\Omega}(a)
  \end{split}
\end{align}
For arbitrary $\vfami{P_i }{i}{N} \in \prod_{i \in N}\mathcal{V}_i$, $a \in A$, and $\fami{k_i \in \mathbb{C}(X_i, \Omega)}{i}{N}$,
if $\codlift{T}{\mathbf{\Omega}}{\sigma}(\vfami{P_i }{i}{N}) \not \sqsubseteq \big(\sigma_a \circ T(\vfami{\tau_a \circ Fk_i \circ c_i}{i}{N}) \big)^*\mathbf{\Omega}(a)$, we have $\exists i \in N.~P_i \not \sqsubseteq (\tau_a \circ Fk_i \circ c_i)^*\mathbf{\Omega}(a)$ by \eqref{eq:invariant_compositional_bisim}.
Since $\mathcal{V}_i$ is an invariant for $\mathbf{D}$,
   $\exists P'_i \in \mathcal{V}_i.~k_i\colon P'_i \decent{\not \rightarrow} \mathbf{\Omega}(a)$.
  Thus we have $\vfami{P'_i}{i}{N} \in \mathcal{V}$ such that $\exists i \in N~k_i\colon P_i' \decent{\not \rightarrow} \mathbf{\Omega}(a)$.
\end{proof}

\begin{proof}[Proof of Theorem~\ref{thm:invariant_compositional_bisim}]
  (1) If
  $\bigsqcup_{\vfami{P_i}{i}{N} \in \mathcal{V}}\Ncodlift{T}{\bO}{\sigma}{N}(\vfami{P_i}{i}{N})$
  is a $\codlift{F}{\bO}{\tau}$-bisimulation on $\coalgmerge T\lambda{\vfami{c_i}{i}{N}}$,
  then
  $\mathcal{V}$ satisfies the inequality in Lemma~\ref{lem:invariant_comp}
  because
  $\bigsqcup_{\vfami{P_i}{i}{N} \in \mathcal{V}}\Ncodlift{T}{\bO}{\sigma}{N}(\vfami{P_i}{i}{N}) \sqsubseteq (\coalgmerge T\lambda{\vfami{c_i}{i}{N}})^*\codlift{F}{\bO}{\tau}
  \bigsqcup_{\vfami{P_i}{i}{N} \in \mathcal{V}}\Ncodlift{T}{\bO}{\sigma}{N}(\vfami{P_i}{i}{N})
  \sqsubseteq (\coalgmerge T\lambda{\vfami{c_i}{i}{N}})^*\codlift{F}{\bO}{\tau} \Ncodlift{T}{\bO}{\sigma}{N}(\fami{\bigsqcup_{\vfami{P_i}{i}{N} \in \mathcal{V}}P_i}{i}{N})$.
  Therefore $\mathcal{V}$ is an invariant by Lemma~\ref{lem:invariant_comp}.

  (2)
Since $\fami{\bigsqcup_{\vfami{P_i}{i}{N} \in \mathcal{V}}  P_i}{i}{N} \in \overline{\mathcal{V}}$ and $\codlift{T}{\mathbf{\Omega}}{\sigma}(\vfami{P_i}{i}{N}) \sqsubseteq \codlift{T}{\mathbf{\Omega}}{\sigma}(\fami{\bigsqcup_{\vfami{P_i}{i}{N} \in \mathcal{V}}  P_i}{i}{N})$ for each $\vfami{P_i}{i}{N} \in \overline{\mathcal{V}}$,
  $\bigsqcup_{\vfami{P_i}{i}{N} \in \overline{\mathcal{V}}} \codlift{T}{\mathbf{\Omega}}{\sigma}(\vfami{P_i}{i}{N})
  = \codlift{T}{\mathbf{\Omega}}{\sigma}(\fami{\bigsqcup_{\vfami{P_i}{i}{N} \in \mathcal{V}}  P_i}{i}{N})
  = \codlift{T}{\mathbf{\Omega}}{\sigma}(\fami{\bigsqcup_{\vfami{P_i}{i}{N} \in \overline{\mathcal{V}}}  P_i}{i}{N})$.
  Lemma~\ref{lem:invariant_comp} concludes the proof.
\end{proof}

 \else
\fi

\end{document}